\documentclass[11pt,a4paper]{article}
\usepackage{amsmath, amsfonts, amssymb}
\usepackage{a4wide}
\usepackage[left=0.8in,bottom=1in,top=1in,right=0.8in]{geometry}
\usepackage{parskip}
\usepackage{enumitem}
\usepackage{xcolor}

\usepackage{hyperref}
\usepackage{booktabs}
\usepackage{caption}
\usepackage{siunitx}
\usepackage{tabularx}
\usepackage{comment}
\usepackage{graphicx}
\usepackage{adjustbox}
\usepackage{multirow,tensor}
\usepackage{amsthm}
\usepackage{bm, makecell}
\usepackage{lscape}
\usepackage{rotating}
\usepackage{floatrow}
\usepackage{bm}
\usepackage[noadjust]{cite}

\def\qed{\hfill {$\square$}\goodbreak \medskip}

\newtheorem{theorem}{Theorem}[section]
\newtheorem{lemma}[theorem]{Lemma}
\newtheorem{corollary}[theorem]{Corollary}

\newtheorem{definition}[theorem]{Definition}
\newtheorem{example}[theorem]{Example}
\newtheorem{remark}[theorem]{Remark}

\numberwithin{equation}{section}

\newcommand{\Tr}{\textnormal{Tr}}
\newcommand{\Supp}{\textnormal{Supp}}
\DeclareMathOperator{\wt}{wt}

\usepackage{tikz,xcolor,hyperref}
\usepackage{mathdots}

\definecolor{lime}{HTML}{A6CE39}
\DeclareRobustCommand{\orcidicon}{%
	\begin{tikzpicture}
		\draw[lime, fill=lime] (0,0) 
		circle [radius=0.16] 
		node[white] {{\fontfamily{qag}\selectfont \tiny ID}};
		\draw[white, fill=white] (-0.0625,0.095) 
		circle [radius=0.007];
	\end{tikzpicture}
	\hspace{-2mm}
}

\foreach \x in {A, ..., Z}{%
	\expandafter\xdef\csname orcid\x\endcsname{\noexpand\href{https://orcid.org/\csname orcidauthor\x\endcsname}{\noexpand\orcidicon}}
}

\begin{document}
	\date{}
		\title{Linear Codes over $\mathbb{F}_{q}+u\mathbb{F}_{q}$ associated with Simplicial Complexes, Their Gray Images, and Subfield Codes}

		\author{{\bf Ankit Yadav\footnote{email: {\tt ankityadav10102000@gmail.com}}\orcidA{},\; 
        \bf Akanksha Tiwari\footnote{email: {\tt akankshafzd8@gmail.com }}\orcidB{} and \bf
        Ritumoni Sarma\footnote{email: {\tt ritumoni407@gmail.com}}\orcidC{}} \\ $^{\ast \dagger \ddagger }$Department of Mathematics\\ Indian Institute of Technology Delhi\\Hauz Khas, New Delhi-110016, India\\
  }
  
\maketitle
\begin{abstract}
In recent years, simplicial complexes have gained considerable attention as a useful tool for constructing distance-optimal codes over finite fields. In this article, we construct four infinite families of linear codes over the ring $\mathcal{R}=\mathbb{F}_{q}+u\mathbb{F}_{q}$ with $u^2=0$ using simplicial complexes with one or two maximal elements, and completely determine their Lee weight distributions via exponential-sum techniques. By employing a Gray map on $\mathcal{R}$, we obtain infinite families of distance-optimal codes over $\mathbb{F}_{q}$, including a near-Griesmer family, and establish sufficient conditions for their minimality. Furthermore, we investigate the corresponding subfield codes and derive sufficient conditions for their distance-optimality and minimality, yielding infinite families of Griesmer and near-Griesmer codes.

\medskip

\noindent \textit{Keywords:} Simplicial complex, optimal code, minimal code, subfield code, Griesmer bound
			
\medskip
			
\noindent \textit{2020 Mathematics Subject Classification:} 94B05, 94B25, 94B60, 11T71

\end{abstract}

\section{Introduction}\label{Sec1}
Let $\mathbb{F}_{q}$ denote the finite field with $q$ elements, where $q$ is a prime power. A linear code with parameters $[n,k,d]$ is said to be distance-optimal if there exists no $[n, k, D]$-linear code over the same field with $D>d$. For fixed length and dimension, distance-optimal codes achieve the largest possible error-detection and error-correction capabilities. Thus, the construction of distance-optimal codes is a fundamental objective in coding theory. Various approaches have been developed for constructing such codes, one of which involves studying codes over finite rings and investigating their Gray images. The study of codes over finite rings has attracted considerable attention following the seminal article by Hammons et al. \cite{hammons1994Z4}, which demonstrated that several well-known binary non-linear codes can be realized as the Gray images of linear codes over $\mathbb{Z}_{4}$. Since then, researchers have investigated various families of codes over finite rings, including local rings, chain rings, and non-chain rings; see \cite{Bag:2022anr},\cite{abualrub2007cyclic},\cite{DINH2010940},\cite{atiwari} and the references therein. The study of the Lee weight distribution of codes over finite rings is of considerable interest. The weight distribution of a linear code plays a crucial role in determining its error-correcting capability and calculating the probabilities of error detection and correction \cite{klove_codes, peterson_cyclic}.  Furthermore, minimal linear codes have been extensively studied because of their applications in secret sharing schemes \cite{shamir1979secret, yuan2006secret} and secure two-party computation \cite{chabanne2014twoparty}. 

In \cite{ding2007cyclotomic}, the authors introduced a generic construction of linear codes over finite fields using trace functions. Subsequently, this approach has been widely employed to construct linear codes with good parameters; for instance, see \cite{ding2008two,ding2014binary,ding2015linear,heng2020optimal} and the references therein. This construction has also been extended to various finite rings. In \cite{shi2016optimal}, the authors studied trace codes over $\mathbb{F}_2+u\mathbb{F}_2$ with $u^2=0$. Trace codes over $\mathbb{F}_p+u\mathbb{F}_p$ with $u^2=0$ and $u^2=u$ were investigated in \cite{shi2017two} and \cite{shi2017twonew}, respectively. More recently, in \cite{liu2019two}, the authors studied the trace codes over $\mathbb{F}_q+u\mathbb{F}_q$ with $u^2=0$ and determined their Lee weight distributions. 

On the other hand, simplicial complexes have been shown to be an effective tool for constructing distance-optimal codes, as demonstrated by the following works in the literature. Chang and Hyun \cite{Chang_Hyun2018simplicial} were the first to utilize simplicial complexes in the construction of binary linear codes and obtained infinite families of optimal few-weight binary linear codes. Subsequently, in \cite{hyun2020infinite}, the authors employed the defining-set technique proposed by Ding and Niederreiter \cite{ding2007cyclotomic} to construct infinite families of optimal few-weight binary linear codes from simplicial complexes with one or two maximal elements. Since then, many researchers have constructed the distance-optimal codes over the finite field from simplicial complexes, see \cite{yadav2025optimal, vidya2024nonunital, Mondal2024mixed_alphabet, hu2024new, wu2024survey, chen2025griesmer, Hu2026several, Wu2026infinite, sharma2026infinite, jose2025linear}. Wu et al. \cite{wu2020optimal} constructed the linear codes over the ring $\mathbb{F}_{2}+u\mathbb{F}_{2}$ with $u^2=0$ from simplicial complexes with one maximal element and obtained two classes of binary optimal codes using a Gray map. Furthermore, Wu et al. \cite{Wu2020few_weight} constructed linear codes over $\mathbb{F}_{p}+u\mathbb{F}_{p}$ with $u^2=0$ simplicial complexes with one maximal element and obtained optimal linear codes over $\mathbb{F}_{p}$. Following these developments, simplicial complexes were employed to construct linear codes over $\mathbb{F}_{2}+u\mathbb{F}_{2}+u^2\mathbb{F}_{2}$ \cite{li2020new}, $\mathbb{F}_{2}+u\mathbb{F}_{2}+v\mathbb{F}_{2}+uv\mathbb{F}_{2}$ with $u^2=v^2=0$ and $uv=vu$ \cite{shi2022few-weight}, and $\mathbb{F}_{p}+u\mathbb{F}_{p}$ with $u^2=u$ \cite{shi2021two}, yielding optimal linear codes over finite fields through Gray maps. Most recently, Chen et al. \cite{chen2025optimal} constructed four families of linear codes over $\mathbb{F}_{q}+u\mathbb{F}_{q}$ with $u^2=0$ by using the simplicial complexes with one maximal element. They have determined their Lee weight distributions and obtained several families of distance-optimal codes over $\mathbb{F}_{q}$. 

The notion of subfield codes was first considered in \cite{canteaut2000weight,carlet1998codes}, although the term “subfield codes” was not used explicitly. The authors demonstrated that the subfield codes of several linear codes are distance-optimal. Subfield codes were formally introduced in \cite[p. 5117]{Magma-Handbook}. In \cite{Ding2019subfield}, the authors developed a general theory of subfield codes and studied subfield codes of two families of ovoid codes. For further studies on subfield codes of codes over various fields, one may refer to \cite{wu2022quaternary,sagar2023octanary, liu2024linear}. Recently, Bhagat et al. \cite{anuj2025subfield} investigated the subfield codes of linear codes constructed using simplicial complexes over the ring $\mathbb{F}_{2}+u\mathbb{F}_{2}+u^2\mathbb{F}_{2}$ with $u^3=u$. Furthermore, the authors in \cite{bhagat2025binary} studied subfield codes of codes over $\mathbb{F}_{2}[u]/\langle u^s\rangle$ for $s\geq 2$, while the authors in \cite{yadav2025optimal} considered subfield codes of codes over the ring $\mathbb{F}_{2}+u\mathbb{F}_{2}+v\mathbb{F}_{2}+uv\mathbb{F}_{2}$, where $u^2=v^2=0$ and $uv=vu$.

Motivated by the above developments, we in this article study the construction of linear codes over the ring $\mathcal{R}=\mathbb{F}_{q}+u\mathbb{F}_{q}$, where $u^2=0$, using simplicial complexes having one or two maximal elements, and determine their Lee-weight distributions. By employing a Gray map on $\mathcal{R}$, we obtain infinite families of few-weight distance-optimal codes over $\mathbb{F}_{q}$. We further establish sufficient conditions under which these Gray images are minimal. In addition, we investigate their subfield codes and derive conditions for their distance-optimality and minimality. To the best of our knowledge, this is the first study of subfield codes arising from codes over an $\mathbb{F}_{q}$-algebra for arbitrary $q$.

The remainder of this article is organized as follows. Section \ref{sec2} recalls the necessary preliminary concepts and results. In Section \ref{sec3}, we construct four infinite families of linear codes over $\mathcal{R}$ and determine their Lee-weight distributions. Section \ref{sec4} investigates their Gray images and establishes sufficient conditions for distance-optimality and minimality. In Section \ref{sec5}, we study the corresponding subfield codes and derive sufficient conditions for their distance-optimality and minimality. Finally, Section \ref{sec6} presents the concluding remarks.

\section{Preliminaries} \label{sec2}
Throughout this article, we use the following notation.
\begin{table}[H]
\centering
\begin{tabular}{l|l}
\hline
Notation & Description \\
\hline
$m$ & a positive integer \\
$[m]$ & the set $\{1, 2, \dots, m\}$\\
$\#S$ &  the cardinality of $S$ \\
$\mathcal{P}(S)$ & the power set of $S$\\
$\mathbb{F}_{q}$ & the finite field of size $q$ \\
$\mathcal{R}$ & the $\mathbb{F}_{q}$-algebra $\mathbb{F}_{q}+u\mathbb{F}_{q}$ with $u^2=0$ \\
$\Supp(\bm{v})$ & $\{j\in [m]: v_j\ne 0\}$, for $\bm{v}\in\mathbb{F}_{q}^{m}$\\
$\wt_{H}(\bm{v})$ & $\#\Supp (\bm{v})$\\
$\zeta_{p}$ & a primitive complex $p$-th root of unity\\
\hline
\end{tabular}
\end{table}
A \textit{ linear code} $\mathcal{C}$ with parameters $[n,k,d]$ of length $n$ over $\mathbb{F}_{q}$ is a $k$-dimensional subspace of $\mathbb{F}_{q}^{n}$, 
where $d$ is the minimum (Hamming) distance of $\mathcal{C}$. Let $A_{j}$ be the number of codewords of $\mathcal{C}$ having weight $j$, then the string $(A_{0},A_{1},\ldots,A_{n})$ is said to be the \textit{Hamming weight distribution} of $\mathcal{C}$ and the homogeneous polynomial 
\[
W_{\mathcal{C}}(x,y) = \sum\limits_{i=0}^{n} A_{i} x^{n-i}y^{i}
\]
is said to be the \textit{Hamming Weight Enumerator} of $\mathcal{C}$. The code $\mathcal{C}$  is called a $t$-\textit{weight} linear code if exactly $t$ entries among $A_{1},A_{2},\ldots,A_{n}$ are non-zero.
Let 
\[
g_q(k,d):=\sum_{i=0}^{k-1}\left\lceil\frac{d}{q^i}\right\rceil.
\]

The Griesmer bound for an $[n,k,d]$-linear code over $\mathbb{F}_{q}$ is $n\geq g_{q}(k,d)$. An $[n,k,d]$-linear code is called a \textit{Griesmer code} if $n=g_q(k,d)$, and a \textit{near-Griesmer code} if $n-1=g_q(k,d)$.
It is immediate that every Griesmer code is distance-optimal.
\subsection{Lee Weight and Gray Map}

 A \textit{linear code} of length $n$ over $\mathcal{R}$ is an $\mathcal{R}$-submodule of $\mathcal{R}^n$. We recall the Gray map $\phi:\mathcal{R}\to \mathbb{F}_{q}^{2}$ defined by
\[
a+ub \mapsto (b,a+b).
\]
This map extends componentwise to a map $\Phi:\mathcal{R}^{n}\to \mathbb{F}_{q}^{2n}$, defined by
\[
\bm{a}+u\bm{b} \mapsto (\bm{b},\bm{a}+\bm{b}),
\]
where $\bm{b}, \bm{c}\in \mathbb{F}_{q}^{n}.$ The \textit{Lee weight} of a vector $\bm{v}=\bm{a}+u \bm{b} \in \mathcal{R}^{n}$ is defined as 
\[
\wt_{L}(\bm{v}):= \wt_{H}(\Phi(\bm{v})) = \wt_{H}(\bm{b})+\wt_{H}(\bm{a}+\bm{b}).
\]
The \textit{Lee distance} between two vectors $\bm{u}$ and $\bm{v}$ is defined as 
\[
d_{L}(\bm{u},\bm{v}) := \wt_{L}(\bm{u}-\bm{v}).
\]
The Gray map $\Phi$ is a distance-preserving map from $\mathcal{R}^{n}$, equipped with the Lee distance, to $\mathbb{F}_{q}^{2n}$, equipped with the Hamming distance. Consequently, the Lee weight distribution of a linear code $\mathcal{C}$ over $\mathcal{R}$ coincides with the Hamming weight distribution of its Gray image $\Phi(\mathcal{C})$.

\subsection{Simplicial Complex}
Let $\bm{v},\bm{w} \in \mathbb{F}_{q}^{m}$. Then, $\bm{w}$ \textit{covers} $\bm{v}$ (we write $\bm{v} \preceq \bm{w}$) if $\Supp(\bm{w}) \supseteq \Supp(\bm{v})$. A \textit{simplicial complex} $\Delta$ is a subset of $ \mathbb{F}_{q}^{m}$ such that if $\bm{v}\in \Delta $, then $ \bm{w}\in \Delta$ for all $\bm{w}\preceq \bm{v}$.

 An element $\bm{v} \in \Delta$ with entries $0$ or $1$ is called \textit{maximal} if there is no $\bm{w}\in \Delta\setminus\{\bm{v}\}$ such that $\Supp(\bm{v})\subsetneq \Supp(\bm{w})$. Let $\mathcal{F}=\{F_{1}, F_{2},\ldots, F_{l}\}$ denote the set of maximal elements of $\Delta$. For each $i$, let $A_{i}=\Supp(F_{i})$, and define $\mathcal{A}=\{A_{1},A_{2},\ldots,A_{l}\}$. We refer to $\mathcal{A}$  as the support of $\Delta$. Observe that $\Delta$ is uniquely generated by $\mathcal{A}$.
A simplicial complex generated by a single maximal element $A\subseteq [m]$ (denoted by $\Delta_{A}$) is  $$\Delta_{A} := \{\bm{v}\in\mathbb{F}_{q}^{m}: \Supp(\bm{v}) \subseteq A\}.$$ Observe that $\Delta_A$ is a $|A|$-dimensional subspace of $\mathbb{F}_{q}^{m}$, and let $\Delta_{A}^{*} = \Delta_{A}\setminus \{\bm{0}\}$. Furthermore, a simplicial complex generated by two maximal elements $A$ and $B$, denoted by $\Delta_{A,B}$, is $\Delta_{A,B} := \Delta_{A} \cup \Delta_{B}$. Note that $ \Delta_{A} \cap \Delta_{B} = \Delta_{A\cap B}.$
For a simplicial complex $\Delta_{A}\subseteq \mathbb{F}_{q}^{m}$, it is easy to verify that $\Delta_{A}^{\perp}= \{\bm{v}\in\mathbb{F}_{q}^{m}: \Supp(\bm{v})\cap A = \emptyset\} = \Delta_{A^c}.$
\subsection{Auxiliary Lemmas}
Let $q=p^s$, where $p$ is a prime number and $s \geq 1$. Let $H$ be an $r$-dimensional $\mathbb{F}_{q}$-subspace of the vector space $\mathbb{F}_{q}^{m}$. The dual of $H$, denoted by $H^{\perp}$, is defined as 
\[
H^{\perp} = \{\bm{v}\in \mathbb{F}_{q}^{m}: \langle\bm{v},\bm{w}\rangle_{q}
=0 \text{ for all } \bm{w} \in H\},
\]
where $\langle\cdot,\cdot\rangle_{q}$ is the Euclidean inner product on $\mathbb{F}_{q}^{m}.$ Note that $H^{\perp}$ is $(m-r)$-dimensional $\mathbb{F}_{q}$-subspace of $\mathbb{F}_{q}^{m}$. 

Identify the field $\mathbb{F}_{q}$ with the vector space $\mathbb{F}_{p}^{s} $, and let $ \langle \cdot,\cdot \rangle_{p}$ denote the Euclidean inner product on  $\mathbb{F}_{p}^{s} $. Then, by \cite{Wu2026infinite}, we have the following lemma.
\begin{lemma}\label{lem:counting}
    Let $H$ be an $r$-dimensional  $\mathbb{F}_{q} $-subspace of  $\mathbb{F}_{q}^{m} $. Then for any $v \in  \mathbb{F}_{p}^{s}\setminus\{\bm{0}\} $, we have
    \[
    \sum\limits_{\bm{x} \in H} \zeta_{p}^{\langle v, \langle\bm{y},\bm{x}\rangle_{q} \rangle_{p}} = \begin{cases}
        q^r,\; \text{ if }\; \bm{y}\in H^{\perp}; \\
        0, \; \text{otherwise}.
    \end{cases} 
    \]
    
\end{lemma}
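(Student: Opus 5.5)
The plan is to view the map $\bm{x}\mapsto \chi(\bm{x}) := \zeta_{p}^{\langle v, \langle\bm{y},\bm{x}\rangle_{q} \rangle_{p}}$ as a character of the additive group $(H,+)$ and then apply orthogonality of characters. First, $\chi$ is a group homomorphism from $(H,+)$ to $\mathbb{C}^{*}$. The map $\bm{x}\mapsto\langle\bm{y},\bm{x}\rangle_{q}$ is $\mathbb{F}_{q}$-linear, hence additive. Under the identification $\mathbb{F}_{q}\cong\mathbb{F}_{p}^{s}$, the map $a\mapsto\langle v,a\rangle_{p}$ is $\mathbb{F}_{p}$-linear. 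Finally, $\zeta_p^{(\cdot)}$ is well defined on $\mathbb{F}_{p}$ and turns sums into products. Consequently $\chi(\bm{x}+\bm{x}')=\chi(\bm{x})\chi(\bm{x}')$.

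Next I split into the two cases of the statement. If $\bm{y}\in H^{\perp}$, then $\langle\bm{y},\bm{x}\rangle_{q}=0$ for every $\bm{x}\in H$. Every term is then $\zeta_p^0=1$, and the sum equals $\#H=q^{r}$.

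If $\bm{y}\notin H^{\perp}$, I will show that $\chi$ is a nontrivial character; the standard trick then gives the sum $0$. Pick $\bm{x}_0\in H$ with $c:=\langle\bm{y},\bm{x}_0\rangle_{q}\neq 0$. Since $H$ is an $\mathbb{F}_{q}$-subspace, $\lambda\bm{x}_0\in H$ for every $\lambda\in\mathbb{F}_{q}$, and $\langle\bm{y},\lambda\bm{x}_0\rangle_{q}=\lambda c$. As $\lambda$ runs over $\mathbb{F}_{q}$, so does $\lambda c$. So it suffices to find $a\in\mathbb{F}_{q}$ with $\langle v,a\rangle_{p}\neq 0$. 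Such an $a$ exists because $v\neq\bm{0}$: take $a$ to be the standard basis vector of $\mathbb{F}_{p}^{s}$ at a coordinate where $v$ is nonzero. Then $\chi(\bm{x}_1)=\zeta_p^{t}\neq 1$ for $\bm{x}_1=\lambda\bm{x}_0$ with $\lambda c=a$, where $t\in\mathbb{F}_p^{*}$.

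To finish, let $S$ denote the sum. Translating by $\bm{x}_1$ permutes $H$, so $S=\sum_{\bm{x}\in H}\chi(\bm{x}+\bm{x}_1)=\chi(\bm{x}_1)S$. Since $\chi(\bm{x}_1)\neq 1$, this forces $S=0$. The only step needing care is the nontriviality argument: it must use both the $\mathbb{F}_{q}$-linearity of $H$, to scale $\bm{x}_0$, and the fact that $v\neq\bm{0}$. Everything else is routine.
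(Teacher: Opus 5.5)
Your proof is correct and complete. The paper gives no proof of this lemma to compare against: it simply imports the statement from \cite{Wu2026infinite}.

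Your argument is the standard character-orthogonality proof. It correctly identifies where each hypothesis enters:
\begin{itemize}
\item The $\mathbb{F}_{q}$-subspace structure of $H$ is what makes $\bm{x}\mapsto\langle\bm{y},\bm{x}\rangle_{q}$ surjective onto $\mathbb{F}_{q}$ when $\bm{y}\notin H^{\perp}$. If $H$ were merely an $\mathbb{F}_{p}$-subspace, the composite functional could vanish on $H$ and the lemma would fail.
\item The condition $v\neq\bm{0}$ is what makes $a\mapsto\langle v,a\rangle_{p}$ a nonzero functional.
\end{itemize}

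An equivalent phrasing of your nontriviality step is the following. The map $\bm{x}\mapsto\langle v,\langle\bm{y},\bm{x}\rangle_{q}\rangle_{p}$ is a surjective $\mathbb{F}_{p}$-linear map $H\to\mathbb{F}_{p}$. Each value is therefore attained exactly $q^{r}/p$ times, and the sum equals $\frac{q^{r}}{p}\sum_{t\in\mathbb{F}_{p}}\zeta_{p}^{t}=0$. Your translation trick reaches the same conclusion without counting fibres.
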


    Let $\mathcal{C}\subseteq \mathbb{F}_{q}^{m}$ be a linear code and $0\neq \bm{v} \in \mathcal{C}$. Then $\bm{v}$ is said to be \textit{minimal} if $\bm{w}\preceq \bm{v}$ implies $\bm{w} = \alpha \bm{v}$ for some $\alpha \in \mathbb{F}_{q}$. If every non-zero codeword of $\mathcal{C}$ is minimal, then $\mathcal{C}$ is called a \textit{minimal code}.
The following lemma, due to Ashikhmin and Barg \cite{Ashikhmin1998}, provides a sufficient condition for a linear code to be minimal.
\begin{lemma}\label{minimal_lemma}
    Let $\mathcal{C}$ be a linear code over $\mathbb{F}_{q}$. Then $\mathcal{C}$ is minimal if 
    \[
    \frac{\wt_{\min}}{\wt_{\max}}>\frac{q-1}{q},
    \]
     where $\wt_{\min}$ and $\wt_{\max}$ represent the minimum and maximum non-zero Hamming weights of $\mathcal{C}$, respectively.
\end{lemma}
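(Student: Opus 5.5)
The plan is to prove the contrapositive: if $\mathcal{C}$ is not minimal, then $\wt_{\min}/\wt_{\max}\le (q-1)/q$. Suppose some non-zero $\bm{v}\in\mathcal{C}$ is not minimal. Then there is a codeword $\bm{w}\preceq\bm{v}$, that is, $\Supp(\bm{w})\subseteq\Supp(\bm{v})$, with $\bm{w}\neq\alpha\bm{v}$ for every $\alpha\in\mathbb{F}_{q}$. In particular $\bm{w}\neq\bm{0}$, since $\bm{0}=0\cdot\bm{v}$. We then consider the $q-1$ non-zero codewords $\bm{v}-\alpha\bm{w}$ for $\alpha\in\mathbb{F}_{q}^{*}$. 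None of them is zero: $\bm{v}=\alpha\bm{w}$ would give $\bm{w}=\alpha^{-1}\bm{v}$, which is excluded.

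The key step is an averaging count over $\alpha$. Write $S=\Supp(\bm{v})$ and $T=\Supp(\bm{w})\subseteq S$. Coordinates outside $S$ contribute nothing, because both $\bm{v}$ and $\bm{w}$ vanish there. A coordinate $j\in S\setminus T$ is non-zero in $\bm{v}-\alpha\bm{w}$ for all $q-1$ values of $\alpha$. For a coordinate $j\in T$, the entry $v_j-\alpha w_j$ vanishes for exactly one $\alpha$, namely $\alpha=v_j/w_j$, which is non-zero. So $j$ lies in the support for exactly $q-2$ values of $\alpha$. Summing over coordinates gives
\[
\sum_{\alpha\in\mathbb{F}_{q}^{*}}\wt_{H}(\bm{v}-\alpha\bm{w})=(q-1)\,\#S-\#T .
\]

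To conclude, note that each $\bm{v}-\alpha\bm{w}$ is a non-zero codeword, so the left side is at least $(q-1)\wt_{\min}$. Also $\#S=\wt_{H}(\bm{v})\le\wt_{\max}$ and $\#T=\wt_{H}(\bm{w})\ge\wt_{\min}$. Hence
\[
(q-1)\wt_{\min}\le (q-1)\wt_{\max}-\wt_{\min},
\]
which rearranges to $q\,\wt_{\min}\le(q-1)\wt_{\max}$, i.e. $\wt_{\min}/\wt_{\max}\le (q-1)/q$. This is the contrapositive of Lemma \ref{minimal_lemma}.

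The only delicate point is the per-coordinate count on $T$, specifically checking that the unique vanishing $\alpha$ lies in $\mathbb{F}_{q}^{*}$ rather than being $0$. This holds because $v_j\neq 0$ for $j\in T\subseteq S$. The rest is bookkeeping.
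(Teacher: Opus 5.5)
Your proof is correct: the identity $\sum_{\alpha\in\mathbb{F}_{q}^{*}}\wt_{H}(\bm{v}-\alpha\bm{w})=(q-1)\wt_{H}(\bm{v})-\wt_{H}(\bm{w})$, combined with $\wt_{H}(\bm{v})\le\wt_{\max}$ and $\wt_{H}(\bm{w})\ge\wt_{\min}$, gives $q\,\wt_{\min}\le(q-1)\wt_{\max}$ whenever the code is not minimal. The paper does not prove this lemma; it cites it from Ashikhmin and Barg, and your contrapositive averaging argument is essentially the standard proof from that source.
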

The following result, due to \cite{hu2022subfield}, establishes the distance-optimality of a near-Griesmer code.
\begin{lemma}\label{lem:near-griesmer}
    Let $\mathcal{C}$ be an $[n,k,d]$ near-Griesmer code over $\mathbb{F}_{q}$ with $k>1$. Then $\mathcal{C}$ is distance-optimal if $q$ divides $d$.
\end{lemma}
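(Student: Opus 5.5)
The plan is to argue by contradiction using only the Griesmer bound, by showing that raising the minimum distance from $d$ to $d+1$ raises $g_q(k,\cdot)$ by at least $2$. Suppose there is an $[n,k,D]$-linear code over $\mathbb{F}_{q}$ with $D>d$, where $n=g_q(k,d)+1$ because $\mathcal{C}$ is near-Griesmer. Each term $\lceil D/q^i\rceil$ is non-decreasing in $D$, so $g_q(k,\cdot)$ is monotone and the Griesmer bound gives
\[
n\ \ge\ g_q(k,D)\ \ge\ g_q(k,d+1).
\]
It therefore suffices to show that $g_q(k,d+1)\ge g_q(k,d)+2$. This yields $n\ge n+1$, a contradiction.

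The key computation is the termwise difference
\[
g_q(k,d+1)-g_q(k,d)=\sum_{i=0}^{k-1}\left(\left\lceil\frac{d+1}{q^i}\right\rceil-\left\lceil\frac{d}{q^i}\right\rceil\right).
\]
Each summand is either $0$ or $1$, since $\lceil (d+1)/N\rceil-\lceil d/N\rceil\in\{0,1\}$ for any positive integer $N$. Moreover, a summand equals $1$ exactly when $q^i\mid d$: if $d=tq^i$ then $\lceil d/q^i\rceil=t$ while $\lceil (d+1)/q^i\rceil=t+1$, and otherwise both ceilings coincide. Only the two terms below are needed.

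The $i=0$ term always contributes $1$. Since $k>1$, the index $i=1$ lies in the range of summation, and because $q\mid d$ this term also contributes $1$. Hence the difference is at least $2$, which completes the argument.

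The only delicate point is the role of the hypotheses. The condition $k>1$ is exactly what guarantees that the $i=1$ term is present. The condition $q\mid d$ is exactly what makes that term jump. Without either one, the increment could be only $1$, and a hypothetical $[n,k,d+1]$ code would then not contradict the Griesmer bound.
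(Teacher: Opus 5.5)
Your proof is correct: since $\lceil (d+1)/q^i\rceil-\lceil d/q^i\rceil=1$ exactly when $q^i\mid d$, the hypotheses $k\ge 2$ and $q\mid d$ give $g_q(k,d+1)\ge g_q(k,d)+2=n+1$. This contradicts the Griesmer bound for any $[n,k,D]$ code with $D>d$. The paper does not prove this lemma itself but cites \cite{hu2022subfield}, and your Griesmer-increment argument is the standard proof of that result.
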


The following lemma is useful in proving the optimality of the codes constructed in this article.
\begin{lemma}\label{lem: optimality}
  Let $t>t_{1}\geq t_{2} > 0$, and define
  \[
  S(t,t_{1},t_{2},q) := \sum\limits_{i=0}^{t-1} \left\lceil \frac{2(q-1)\left(q^{t-1}-q^{t_{1}-1}-q^{t_{2}-1}\right)+1}{q^{i}}\right\rceil.
  \]
  \begin{enumerate}
      \item If $t_{1} = t_{2}$, then
      \[
      S(t,t_{1},t_{2},q) = 
      \begin{cases}
          2q^{t}-2q^{t_1}-2q^{t_2}+t_2+1, & \text{if } q=2; \\
          2q^{t}-2q^{t_1}-2q^{t_2}+t_2, & \text{if } q=3 \text{ or } q=4; \\
          2q^{t}-2q^{t_1}-2q^{t_2}+t_2-1, & \text{if } q\geq 5.
      \end{cases}
      \]
      \item If $t_{1} > t_{2}$, then
      \[
      S(t,t_{1},t_{2},q) =
       \begin{cases}
          2q^{t}-2q^{t_1}-2q^{t_2}+t_2+1, & \text{if } q=2; \\
          2q^{t}-2q^{t_1}-2q^{t_2}+t_2, & \text{if } q\geq 3. 
      \end{cases}
      \]
  \end{enumerate}
\end{lemma}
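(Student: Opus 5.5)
The plan is to compute the sum exactly by splitting each summand into an integer part and a correction, and evaluating the correction with a floor function. Put $N:=2(q-1)\left(q^{t-1}-q^{t_1-1}-q^{t_2-1}\right)+1$. For $a\in\{t,t_1,t_2\}$ we use the identity
\[
\frac{2(q-1)q^{a-1}}{q^i}=2q^{a-i}-2q^{a-1-i}.
\]
This is an integer exactly when $i\le a-1$. Since $t>t_1\ge t_2$, the index range $0\le i\le t-1$ splits into two regimes.

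\emph{Regime $0\le i\le t_2-1$.} Every term is an integer except $1/q^i$. Hence
\[
\left\lceil N/q^i\right\rceil = 2(q-1)\left(q^{t-1-i}-q^{t_1-1-i}-q^{t_2-1-i}\right)+1
\]
for $i\ge 1$, and it equals $N$ itself for $i=0$.

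\emph{Regime $t_2\le i\le t-1$.} Write $N/q^i=M_i-x_i$. Here $M_i$ is the integer $2(q-1)q^{t-1-i}$ minus $2(q-1)q^{t_1-1-i}$ when $i\le t_1-1$. The remainder is
\[
x_i=2(q-1)q^{t_2-1-i}+[\,i\ge t_1\,]\,2(q-1)q^{t_1-1-i}-q^{-i}.
\]
Then $\lceil N/q^i\rceil=M_i-\lfloor x_i\rfloor$. Summing the integer parts over all $i$ telescopes by the geometric sums $\sum_{i=0}^{a-1}2(q-1)q^{a-1-i}=2(q^a-1)$. This gives the main term $2q^t-2q^{t_1}-2q^{t_2}$, plus the constant $+2+2=4$ from the truncated geometric sums of $q^{t_1-1-i}$ and $q^{t_2-1-i}$, and $-2$ from the $q^{t}$ sum. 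Adding the $t_2$ contributions of $+1$ from the first regime yields a provisional total
\[
2q^t-2q^{t_1}-2q^{t_2}+t_2+2-\sum_{i\ge t_2}\lfloor x_i\rfloor.
\]
I will double-check this bookkeeping of constants carefully, since an off-by-one here shifts every case.

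The main obstacle is evaluating $\sum_{i=t_2}^{t-1}\lfloor x_i\rfloor$ case by case. Only the first index of each fractional tail contributes.

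\emph{Case $t_1=t_2$.} At $i=t_2$ we have $x_{t_2}=4-4/q-q^{-t_2}$. Its floor is $1$ for $q=2$, $2$ for $q=3,4$, and $3$ for $q\ge5$. For $i>t_2$ we have $0<x_i<4(q-1)/q^2\le 1$, so the floor is $0$.

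\emph{Case $t_1>t_2$.} At $i=t_2$ we have $x=2-2/q-q^{-t_2}$. Its floor is $0$ for $q=2$ and $1$ for $q\ge3$. At $i=t_1$ we have $x=2-2/q+2(q-1)q^{t_2-1-t_1}-q^{-t_1}$. We must show its floor is $1$ for all $q$, using $2(q-1)q^{t_2-1-t_1}<2/q$. All other indices give $x_i<1$.

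Substituting these floor values into the provisional total should reproduce exactly the displayed formulas: $t_2+1$, $t_2$, $t_2-1$ in the first case, and $t_2+1$, $t_2$ in the second. If a constant disagrees, I will recheck the telescoped constants against a small example such as $q=2$, $t=3$, $t_1=t_2=1$.
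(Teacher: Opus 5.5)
Your proposal is correct and follows essentially the same route as the paper. Both split the index range into $[0,t_2-1]$, $[t_2,t_1-1]$ and $[t_1,t-1]$, arrive at the same provisional total $2q^t-2q^{t_1}-2q^{t_2}+t_2+2$, and evaluate only the first index of each fractional tail; your $-\lfloor x_i\rfloor$ is exactly the paper's ceiling of the negative remainder, and your constants and floor values ($1,2,2,3$ when $t_1=t_2$; $0$ or $1$ at $i=t_2$ and $1$ at $i=t_1$ when $t_1>t_2$) all check out and yield the stated formulas.
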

\begin{proof}
    \begin{align*}
        S(t,t_{1},t_{2},q) &= \sum\limits_{i=0}^{t-1} \left\lceil \frac{2(q-1)\left(q^{t-1}-q^{t_{1}-1}-q^{t_{2}-1}\right)+1}{q^{i}}\right\rceil \\
        &= \sum\limits_{i=0}^{t_{2}-1} \frac{2(q-1)\left(q^{t-1}-q^{t_{1}-1}-q^{t_{2}-1}\right)}{q^{i}} + t_{2}
        +  \sum\limits_{i=t_2}^{t_{1}-1} \frac{2(q-1)\left(q^{t-1}-q^{t_{1}-1}\right)}{q^{i}} \\
        &\quad + \sum\limits_{i=t_2}^{t_{1}-1}\left\lceil -\frac{2(q-1)q^{t_2-1}-1}{q^i} \right\rceil + \sum\limits_{i=t_1}^{t-1} \frac{2(q-1)q^{t-1}}{q^{i}} + \sum\limits_{i=t_1}^{t-1}\left\lceil -\frac{2(q-1)(q^{t_1 - 1}+q^{t_2-1})-1}{q^i} \right\rceil \\
        &= 2q^{t}-2q^{t_1}-2q^{t_2}+t_{2}+2+ \sum\limits_{i=t_2}^{t_{1}-1}\left\lceil -\frac{2(q-1)q^{t_2-1}-1}{q^i} \right\rceil+\sum\limits_{i=t_1}^{t-1}\left\lceil -\frac{2(q-1)(q^{t_1 - 1}+q^{t_2-1})-1}{q^i} \right\rceil.
    \end{align*}
\begin{enumerate}
    \item Let $t_{1} = t_{2}$. Then
    \[
    \sum\limits_{i=t_2}^{t_{1}-1}\left\lceil -\frac{2(q-1)q^{t_2-1}-1}{q^i} \right\rceil = 0,
    \]
and,
 \[
    \sum\limits_{i=t_1}^{t-1}\left\lceil -\frac{2(q-1)(q^{t_1-1}+q^{t_2-1})-1}{q^i} \right\rceil = 
    \begin{cases}
        -1, & \text{if } q=2; \\
        -2, & \text{if } q=3 \text{ or } q=4; \\
        -3, & \text{if } q\geq 5. \\
    \end{cases}
    \]
    Therefore,
      \[
      S(t,t_{1},t_{2},q) = 
      \begin{cases}
          2q^{t}-2q^{t_1}-2q^{t_2}+t_2+1, & \text{if } q=2; \\
          2q^{t}-2q^{t_1}-2q^{t_2}+t_2, & \text{if } q=3 \text{ or } q=4; \\
          2q^{t}-2q^{t_1}-2q^{t_2}+t_2-1, & \text{if } q\geq 5.
      \end{cases}
      \]
      \item Let $t_1 > t_2$. Then
       \[
    \sum\limits_{i=t_2}^{t_{1}-1}\left\lceil -\frac{2(q-1)q^{t_2-1}-1}{q^i} \right\rceil = 
    \begin{cases}
        0, & \text{if } q=2;\\
        -1, & \text{if } q\geq 3.
    \end{cases}
    \]
and,
 \[
    \sum\limits_{i=t_1}^{t-1}\left\lceil -\frac{2(q-1)(q^{t_1-1}+q^{t_2-1})-1}{q^i} \right\rceil = -1 \text{ for all } q\geq 2.
    \]
Therefore,
      \[
      S(t,t_{1},t_{2},q) = 
      \begin{cases}
          2q^{t}-2q^{t_1}-2q^{t_2}+t_2+1, & \text{if } q=2; \\
          2q^{t}-2q^{t_1}-2q^{t_2}+t_2, & \text{if } q\geq 3.
      \end{cases}
      \]
\end{enumerate} \qed
\end{proof}

\subsection{\texorpdfstring{$\mathcal{C}_{D}$}{CD}-Construction and Weight Formula}
Let $D = D_{1}+uD_{2}\subseteq \mathcal{R}^{m}$, where $D_{1},D_{2} \subseteq \mathbb{F}_{q}^{m}$. Then define
\begin{equation}\label{eqn:C_D code}
    \mathcal{C}_{D}:= \{c_{D}(\bm{v}) = (\langle\bm{v},\bm{d}\rangle)_{\bm{d}\in D}: \bm{v}\in \mathcal{R}^{m}\},
\end{equation}
where $\langle\cdot,\cdot\rangle$ denotes the Euclidean inner product on $\mathcal{R}^{m}.$ Then $\mathcal{C}_{D}$ is a linear code over $\mathcal{R}$ with length $n=|D|$ and $D$ is called the defining set of $\mathcal{C}_{D}.$

Let $\bm{v}= \bm{a} +u \bm{b} \in \mathcal{R}^{m}$ and $\bm{d} = \bm{d}_{1}+u\bm{d}_{2} \in D$. Then
\begin{align*}
     \wt_{L}(c_{D}(\bm{v})) &= \wt_{L}\left(\langle\bm{a}+ u \bm{b},\bm{d}_{1}+u\bm{d}_{2}\rangle\right)_{\bm{d}_{1}\in D_{1},\bm{d}_{2}\in D_{2}}  \\ 
     &= \wt_{L}(\langle\bm{a},\bm{d}_{1}\rangle_{q}+u(\langle\bm{b},\bm{d}_{1}\rangle_{q}+\langle\bm{a},\bm{d}_{2}\rangle_{q}))_{\bm{d}_{1}\in D_{1},\bm{d}_{2}\in D_{2}}  \\
      &= \wt_{H}(\langle\bm{b},\bm{d}_{1}\rangle_{q}+\langle\bm{a},\bm{d}_{2}\rangle_{q})_{\bm{d}_{1}\in D_{1},\bm{d}_{2}\in D_{2}} + \wt_{H}(\langle\bm{a}+\bm{b},\bm{d}_{1}\rangle_{q}+\langle\bm{a},\bm{d}_{2}\rangle_{q})_{\bm{d}_{1}\in D_{1},\bm{d}_{2}\in D_{2}}.
\end{align*}
By the orthogonal property of nontrivial additive characters \cite{niederreiter1997finitefield}, we have
\begin{align} \label{eqn:weight_formula}
   \wt_{L}(c_{D}(\bm{v})) &= n-\frac{1}{q}\sum\limits_{v\in \mathbb{F}_{p}^{s}}\sum\limits_{\bm{d}_{1}\in D_{1}}\sum\limits_{\bm{d}_{2}\in D_{2}}\zeta_{p}^{\langle v, \langle\bm{b},\bm{d}_{1}\rangle_{q}+\langle\bm{a},\bm{d}_{2}\rangle_{q}\rangle_{p}} \nonumber \\
      &\quad  + n-\frac{1}{q}\sum\limits_{v\in \mathbb{F}_{p}^{s}}\sum\limits_{\bm{d}_{1}\in D_{1}}\sum\limits_{\bm{d}_{2}\in D_{2}}\zeta_{p}^{\langle v, \langle\bm{a}+\bm{b},\bm{d}_{1}\rangle_{q}+\langle\bm{a},\bm{d}_{2}\rangle_{q}\rangle_{p}} \nonumber \\
      &= 2n-\frac{1}{q}\sum\limits_{v\in \mathbb{F}_{p}^{s}}\sum\limits_{\bm{d}_{2}\in D_{2}}\zeta_{p}^{\langle v, \langle\bm{a},\bm{d}_{2}\rangle_{q}\rangle_{p}} \sum\limits_{\bm{d}_{1}\in D_{1}}\left(\zeta_{p}^{\langle v, \langle\bm{b},\bm{d}_{1}\rangle_{q}\rangle_p}+\zeta_{p}^{\langle v,\langle\bm{a}+\bm{b},\bm{d}_{1}\rangle_{q}\rangle_{p}}\right) \nonumber \\
      &= 2n-\frac{2}{q}n-\frac{1}{q}\sum\limits_{v\in \mathbb{F}_{p}^{s}\setminus\{0\}}\sum\limits_{\bm{d}_{2}\in D_{2}}\zeta_{p}^{\langle v, \langle\bm{a},\bm{d}_{2}\rangle_{q}\rangle_{p}} \sum\limits_{\bm{d}_{1}\in D_{1}}\left(\zeta_{p}^{\langle v, \langle\bm{b},\bm{d}_{1}\rangle_{q}\rangle_p}+\zeta_{p}^{\langle v,\langle\bm{a}+\bm{b},\bm{d}_{1}\rangle_{q}\rangle_{p}}\right).
\end{align}

\section{Lee Weight Distributions} \label{sec3}
In this section, we investigate four classes of $\mathcal{C}_{D}$-codes over the ring $\mathcal{R}$, where the defining set $D$ is constructed from a simplicial complex with one or two maximal elements.
\begin{theorem}\label{thm:1}
    Let $m\geq 2$ be a positive integer and $A,B \subseteq [m]$. Define $D = \Delta_{A}^* + u \Delta_{B}^*.$ Then the code $\mathcal{C}_{D}$ defined in \eqref{eqn:C_D code} is an at most five-weight linear code of length $(q^{|A|}-1)(q^{|B|}-1)$ and cardinality $q^{|A|+|A\cup B|}$. Its Lee weight distribution is given by 
    \begin{table}[H]
\centering
\resizebox{\textwidth}{!}{
\begin{tabular}{l|l}
\hline
Weight & Frequency \\
\hline
$w_0=0$ & $f_0=1$ \\
\hline
$w_{1}= (q-1)(q^{|A|+|B|-1}-q^{|A|-1})$ & $f_1=2(q^{|A\cup B|-|B|}-1)$ \\
\hline
$w_{2}= 2(q-1)(q^{|A|+|B|-1}-q^{|A|-1})$ & $f_2=q^{|A\cup B|+|A|-|B|}-2q^{|A\cup B|-|B|}+1$ \\
\hline
$w_{3}= 2(q-1)(q^{|A|+|B|-1}-q^{|B|-1})$ & $f_3=q^{|A\cup B|-|A|}-1$ \\
\hline
$w_{4}=  (q-1)(2q^{|A|+|B|-1}-q^{|A|-1}-2q^{|B|-1})$ & $f_4=2(q^{|A\cup B|}-q^{|A\cup B|-|A|}-q^{|A\cup B|-|B|}+1)$ \\
\hline
$w_{5}=  2(q-1)(q^{|A|+|B|-1}-q^{|A|-1}-q^{|B|-1})$ & $f_5=q^{|A|+|A\cup B|}-q^{|A|+|A\cup B|-|B|}+2q^{|A\cup B|-|B|}-2q^{|A\cup B|}+q^{|A\cup B|-|A|}-1$ \\
\hline
\end{tabular}
}
\label{tab:3}
\end{table}
\end{theorem}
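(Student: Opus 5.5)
We apply the weight formula \eqref{eqn:weight_formula} with $D_1=\Delta_A^*$ and $D_2=\Delta_B^*$, so $n=(q^{|A|}-1)(q^{|B|}-1)$. For a codeword $c_D(\bm{v})$ with $\bm{v}=\bm{a}+u\bm{b}$, each inner sum over $\Delta_A^*$ or $\Delta_B^*$ is the full sum over the subspace $\Delta_A$ (resp.\ $\Delta_B$) minus the term at $\bm{0}$, which equals $1$. By Lemma \ref{lem:counting}, for every $v\neq 0$,
\[
\sum_{\bm{d}_2\in\Delta_B^*}\zeta_p^{\langle v,\langle \bm{a},\bm{d}_2\rangle_q\rangle_p}=q^{|B|}\delta(\bm{a}\in\Delta_{B^c})-1,
\]
and similarly for the two sums over $\Delta_A^*$, with indicators $\delta(\bm{b}\in\Delta_{A^c})$ and $\delta(\bm{a}+\bm{b}\in\Delta_{A^c})$. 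Since these values do not depend on $v$, the outer sum over $v\in\mathbb{F}_p^s\setminus\{0\}$ contributes a factor $q-1$. Writing $\epsilon_1=\delta(\bm{a}\in\Delta_{B^c})$, $\epsilon_2=\delta(\bm{b}\in\Delta_{A^c})$ and $\epsilon_3=\delta(\bm{a}+\bm{b}\in\Delta_{A^c})$, we obtain
\[
\wt_L(c_D(\bm{v}))=\frac{2(q-1)}{q}n-\frac{q-1}{q}\bigl(q^{|B|}\epsilon_1-1\bigr)\bigl(q^{|A|}\epsilon_2+q^{|A|}\epsilon_3-2\bigr).
\]
The cases then fall out directly. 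For example, $\epsilon_1=0$ gives $\wt_L=\frac{q-1}{q}\bigl(2n+q^{|A|}(\epsilon_2+\epsilon_3)-2\bigr)$. When $\epsilon_2=\epsilon_3=0$ this is $w_5$, and when exactly one of $\epsilon_2,\epsilon_3$ is $1$ it is $w_4$. If $\epsilon_2=\epsilon_3=1$, then $\bm{a}\in\Delta_{A^c}$. When $\epsilon_1=1$, the expression factors as $(q-1)q^{|B|-1}\bigl(2(q^{|A|}-1)-(q^{|A|}\epsilon_2+q^{|A|}\epsilon_3-2)\bigr)$, giving $w_3$, $w_2$, $w_1$ or $0$ according to whether $\epsilon_2+\epsilon_3$ is $0$, $1$ or $2$. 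I will verify that these closed forms match the table; for instance, $w_2=2(q-1)q^{|B|-1}(q^{|A|}-1)$ agrees. Note the table's labels $w_2,w_3$ attach $q^{|A|-1}$ and $q^{|B|-1}$ in a particular way, so I will match them carefully.

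The code is indexed by $\bm{v}\in\mathcal{R}^m$, but its kernel is nontrivial. The size of the code is $q^{2m}/\#\ker$. The kernel consists of those $\bm{v}$ with $\wt_L=0$, that is, $\epsilon_1=\epsilon_2=\epsilon_3=1$. This means $\bm{a}\in\Delta_{B^c}\cap\Delta_{A^c}=\Delta_{(A\cup B)^c}$ and $\bm{b}\in\Delta_{A^c}$, so $\#\ker=q^{m-|A\cup B|}q^{m-|A|}$. Hence $|\mathcal{C}_D|=q^{|A|+|A\cup B|}$, as claimed. Each frequency $f_i$ equals the number of pairs $(\bm{a},\bm{b})$ in the corresponding case divided by $\#\ker$.

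The main bookkeeping obstacle is counting the pairs in each case. Conditions on $\bm{b}$ and on $\bm{a}+\bm{b}$ are coupled, so I will parametrise by $\bm{a}$ and $\bm{b}$ modulo $\Delta_{A^c}$. The condition $\epsilon_2=\epsilon_3=1$ forces $\bm{a}\in\Delta_{A^c}$. For a fixed $\bm{a}$, exactly one of $\epsilon_2,\epsilon_3$ equals $1$ only if $\bm{a}\notin\Delta_{A^c}$, in which case there are $2q^{m-|A|}$ choices of $\bm{b}$. I will then split $\bm{a}$ by membership in $\Delta_{B^c}$, $\Delta_{A^c}$ and their intersection $\Delta_{(A\cup B)^c}$, using inclusion–exclusion. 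This gives, for example,
\[
f_1=\frac{\#\bigl(\Delta_{B^c}\setminus\Delta_{(A\cup B)^c}\bigr)\cdot 2q^{m-|A|}}{\#\ker}=2\bigl(q^{|A\cup B|-|B|}-1\bigr),
\]
which matches the table. The remaining frequencies follow the same way, and I will check at the end that they sum to $q^{|A|+|A\cup B|}$.
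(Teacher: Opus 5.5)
Your approach is the paper's: the same weight formula with $D_1=\Delta_A^*$ and $D_2=\Delta_B^*$, the same case split on $\bm{a}\in\Delta_B^{\perp}$ and on whether $\bm{b}$ and $\bm{a}+\bm{b}$ lie in $\Delta_A^{\perp}$, the same coset counting, and division by the kernel size $q^{2m-|A|-|A\cup B|}$. Your single formula in $\epsilon_1,\epsilon_2,\epsilon_3$ is correct and is a compact form of the paper's case table. Your counting is also right (the $2q^{m-|A|}$ choices of $\bm{b}$, $f_1$, the kernel), and the remaining frequencies do come out as tabulated.

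There is an algebra slip in the $\epsilon_1=1$ branch. The common factor there is $\frac{(q-1)(q^{|B|}-1)}{q}$, not $(q-1)q^{|B|-1}$. The correct simplification is
\[
\wt_L(c_D(\bm{v}))=(q-1)q^{|A|-1}\bigl(q^{|B|}-1\bigr)(2-\epsilon_2-\epsilon_3),
\]
which gives $w_2$, $w_1$, $0$ for $\epsilon_2+\epsilon_3=0,1,2$. The weight $w_3$ arises only when $\epsilon_1=0$ and $\epsilon_2=\epsilon_3=1$, a case you set up but did not evaluate. Your check ``$w_2=2(q-1)q^{|B|-1}(q^{|A|}-1)$'' is wrong: that expression is the table's $w_3$, whereas $w_2=2(q-1)q^{|A|-1}(q^{|B|}-1)$.

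There is also a point the paper's proof glosses over as well. Identifying the kernel with $\{\epsilon_1=\epsilon_2=\epsilon_3=1\}$ requires $w_1,\dots,w_5\neq 0$. For nonempty $A,B$ this fails only when $q=2$ and $|A|=|B|=1$, where $w_5=0$. If in addition $A\neq B$, then $f_5=1$ and the statement is false. For example, take $m=2$, $A=\{1\}$, $B=\{2\}$. Then $D=\{(1,u)\}$ and $\mathcal{C}_D=\mathcal{R}$, which has $4$ elements rather than $q^{|A|+|A\cup B|}=8$. You should exclude this case explicitly, together with $A=\emptyset$ and $B=\emptyset$, where the length is $0$.
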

\begin{proof}
    Here $D_{1}=\Delta_{A}^*, D_{2} = \Delta_{B}^*$. Consequently, the length of the code $\mathcal{C}_{D}$ is $n = |D| = (q^{|A|}-1)(q^{|B|}-1).$ Now, 
    \[
        \sum\limits_{\bm{d}_{2}\in \Delta_{B}^*}\zeta_{p}^{\langle v, \langle\bm{a},\bm{d}_{2}\rangle_{q}\rangle_{p}}
        = \sum\limits_{\bm{d}_{2}\in \Delta_{B}}\zeta_{p}^{\langle v, \langle\bm{a},\bm{d}_{2}\rangle_{q}\rangle_{p}} -  1,
    \]
    and,
     \[ \sum\limits_{\bm{d}_{1}\in \Delta_{A}^*}\left(\zeta_{p}^{\langle v, \langle\bm{b},\bm{d}_{1}\rangle_{q}\rangle_p}+\zeta_{p}^{\langle v, \langle\bm{a}+\bm{b},\bm{d}_{1}\rangle_{q}\rangle_{p}}\right)=  \sum\limits_{\bm{d}_{1}\in \Delta_{A}}\left(\zeta_{p}^{\langle v, \langle\bm{b},\bm{d}_{1}\rangle_{q}\rangle_p}+\zeta_{p}^{\langle v, \langle\bm{a}+\bm{b},\bm{d}_{1}\rangle_{q}\rangle_{p}}\right)-2.
 \]
    Then by Lemma \ref{lem:counting}, for $v\in\mathbb{F}_{p}^{s}\setminus\{0\}$, we have
 \begin{equation}\label{eqn:sum_Delta_B*}
     \sum\limits_{\bm{d}_{2}\in \Delta_{B}^*}\zeta_{p}^{\langle v, \langle\bm{a},\bm{d}_{2}\rangle_{q}\rangle_{p}} = 
     \begin{cases}
         q^{|B|}-1, & \text{if } \bm{a} \in \Delta_B^{\perp}; \\
         -1,  & \text{if } \bm{a} \notin \Delta_B^{\perp}.
     \end{cases}
 \end{equation}
Similarly, by Lemma \ref{lem:counting}, for $v\in\mathbb{F}_{p}^{s}\setminus\{0\}$, we have
\begin{equation}\label{eqn:sum_Delta_A*}
    \sum\limits_{\bm{d}_{1}\in \Delta_{A}}\left(\zeta_{p}^{\langle v, \langle\bm{b},\bm{d}_{1}\rangle_{q}\rangle_p}+\zeta_{p}^{\langle v, \langle\bm{a}+\bm{b},\bm{d}_{1}\rangle_{q}\rangle_{p}}\right) = 
\begin{cases}
    2q^{|A|}-2, & \text{if } \bm{b}\in \Delta_{A}^{\perp},\bm{a}+\bm{b}\in\Delta_{A}^{\perp};\\
    q^{|A|}-2, & \text{if } \bm{b}\notin \Delta_{A}^{\perp},\bm{a}+\bm{b}\in\Delta_{A}^{\perp} \;\text{ or }\; \bm{b}\in \Delta_{A}^{\perp},\bm{a}+\bm{b}\notin\Delta_{A}^{\perp} ;\\
    -2, & \text{if } \bm{b}\notin \Delta_{A}^{\perp},\bm{a}+\bm{b}\notin\Delta_{A}^{\perp}.
\end{cases}
\end{equation}

There are 2 possible cases which are as follows:
\begin{enumerate}
    \item[(1)] $\bm{a}\in \Delta_B^{\perp}$.
    \begin{itemize}
        \item If $\bm{b}\in \Delta_{A}^{\perp},\bm{a}+\bm{b}\in\Delta_{A}^{\perp},$ then
         \[
    \wt_{L}(c_{D}(\bm{v})) = 0.
    \]
    In this case, 
    \begin{eqnarray*}
        \# \bm{a} &=& |\Delta_A^{\perp}\cap\Delta_B^{\perp}|
        = q^{m-|A\cup B|}\\
    \# \bm{b} &=& |\Delta_{A}^{\perp}| = q^{m-|A|}.
    \end{eqnarray*}
     \item If $\bm{b}\in \Delta_{A}^{\perp},\bm{a}+\bm{b}\notin\Delta_{A}^{\perp},$ then
    \[\wt_{L}(c_{D}(\bm{v})) = (q-1)(q^{|A|+|B|-1}-q^{|A|-1}).\]
     In this case, 
    \begin{eqnarray*}
        \# \bm{a} &=& |\Delta_B^{\perp}|-|\Delta_B^{\perp}\cap\Delta_A^{\perp}| =q^{m-|B|}-q^{m-|A\cup B|}\\
    \# \bm{b} &=& |\Delta_{A}^{\perp}| = q^{m-|A|}.
    \end{eqnarray*}
    
    \item  If $\bm{b}\notin \Delta_{A}^{\perp},\bm{a}+\bm{b}\in\Delta_{A}^{\perp},$ then
    \[
    \wt_{L}(c_{D}(\bm{v})) = (q-1)(q^{|A|+|B|-1}-q^{|A|-1}).
    \]
     In this case, 
    \begin{eqnarray*}
        \# \bm{a} &=& q^{m-|B|}-q^{m-|A\cup B|}\\
    \# \bm{b} &=&  q^{m-|A|}.
    \end{eqnarray*}

    \item If $\bm{b}\notin \Delta_{A}^{\perp},\bm{a}+\bm{b}\notin\Delta_{A}^{\perp},$ then
    \[\wt_{L}(c_{D}(\bm{v})) = 2(q-1)(q^{|A|+|B|-1}-q^{|A|-1}).\]
     There are two subcases.
     \begin{itemize}
         \item $\bm{a}\in \Delta_{A}^{\perp}$. In this case, 
         \begin{eqnarray*}
        \# \bm{a} 
        &=& q^{m-|A\cup B|}\\
    \# \bm{b} &=& q^m - q^{m-|A|}.
    \end{eqnarray*}
    \item $\bm{a}\notin \Delta_{A}^{\perp}$. In this case, $\Delta_{A}^{\perp}$ and $\bm{a}+\Delta_{A}^{\perp}$ are two distinct cosets and hence
     \begin{eqnarray*}
        \# \bm{a} 
        &=&q^{m-|B|}-q^{m-|A\cup B|}\\
    \# \bm{b} &=& q^m-2q^{m-|A|}.
    \end{eqnarray*}
     \end{itemize}
   
    \end{itemize}

    \item[(2)]   $\bm{a}\notin \Delta_B^{\perp}$.
\begin{itemize}
        \item If $\bm{b}\in \Delta_{A}^{\perp},\bm{a}+\bm{b}\in\Delta_{A}^{\perp},$ then
         \[
    \wt_{L}(c_{D}(\bm{v})) = 2(q-1)(q^{|A|+|B|-1}-q^{|B|-1}).
    \]
    In this case, 
    \begin{eqnarray*}
        \# \bm{a} &=& |\Delta_A^{\perp}|-|\Delta_A^{\perp}\cap\Delta_B^{\perp}| = q^{m-|A|}-q^{m-|A\cup B|}\\
    \# \bm{b} &=& |\Delta_{A}^{\perp}| = q^{m-|A|}.
    \end{eqnarray*}

     \item If $\bm{b}\in \Delta_{A}^{\perp},\bm{a}+\bm{b}\notin\Delta_{A}^{\perp},$ then
    \[\wt_{L}(c_{D}(\bm{v})) = (q-1)(2q^{|A|+|B|-1}-q^{|A|-1}-2q^{|B|-1}).\]
     In this case, 
    \begin{eqnarray*}
        \# \bm{a} &=& q^m - q^{m-|A|}-q^{m-|B|}+q^{m-|A\cup B|}\\
    \# \bm{b} &=&  q^{m-|A|}.
    \end{eqnarray*}
    
    \item  If $\bm{b}\notin \Delta_{A}^{\perp},\bm{a}+\bm{b}\in\Delta_{A}^{\perp},$ then
    \[
    \wt_{L}(c_{D}(\bm{v})) = (q-1)(2q^{|A|+|B|-1}-q^{|A|-1}-2q^{|B|-1}). 
    \]
    In this case, 
    \begin{eqnarray*}
        \# \bm{a}
        &=&q^m-q^{m-|A|}-q^{m-|B|}+q^{m-|A\cup B|}\\
    \# \bm{b} &=& q^{m-|A|}.
    \end{eqnarray*}

    \item If $\bm{b}\notin \Delta_{A}^{\perp},\bm{a}+\bm{b}\notin\Delta_{A}^{\perp},$ then
    \[\wt_{L}(c_{D}(\bm{v})) = 2(q-1)(q^{|A|+|B|-1}-q^{|A|-1}-q^{|B|-1}).\]
   
 \end{itemize}
\end{enumerate}
To determine the frequency corresponding to the weight
\[
w_5 = 2(q-1)(q^{|A|+|B|-1}-q^{|A|-1}-q^{|B|-1}),
\]
we subtract the sum of the frequencies of all other weights from the total number of codewords $q^{2m}$. Consequently, the frequency is 
\[
q^{2m}-q^{2m-|B|}-2q^{2m-|A|}+q^{2m-2|A|}+2q^{2m-|A|-|B|}-q^{2m-|A|-|A\cup B|}.
\]
Since the frequency corresponding to Lee weight $0$ is $q^{2m-|A|-|A\cup B|}$, all the above frequencies must be divided by $q^{2m-|A|-|A\cup B|}$ to obtain the Lee weight distribution. \qed
\end{proof}
\begin{corollary}\label{cor:1}
     If $A\subseteq B$ in Theorem \ref{thm:1}, then the code $\mathcal{C}_{D}$ in \eqref{eqn:C_D code} is a four-weight linear code of length $(q^{|A|}-1)(q^{|B|}-1)$ and cardinality $q^{|A|+|B|}$. Its Lee weight distribution is given by
       \begin{table}[H]
\centering
\begin{tabular}{l|l}
\hline
Weight & Frequency \\
\hline
$0$ & $1$ \\
\hline
$2(q-1)(q^{|A|+|B|-1}-q^{|A|-1})$ & $q^{|A|}-1$ \\
\hline
$2(q-1)(q^{|A|+|B|-1}-q^{|B|-1})$ & $q^{|B|-|A|}-1$ \\
\hline
$(q-1)(2q^{|A|+|B|-1}-q^{|A|-1}-2q^{|B|-1})$ & $2(q^{|B|}-q^{|B|-|A|})$ \\
\hline
$2(q-1)(q^{|A|+|B|-1}-q^{|A|-1}-q^{|B|-1})$ & $q^{|A|+|B|}-q^{|A|}-2q^{|B|}+q^{|B|-|A|}+1$ \\
\hline
\end{tabular}
\label{tab:3c}
\end{table}
\end{corollary}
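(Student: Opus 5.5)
This is a direct specialization of Theorem \ref{thm:1}, so I will substitute $A\subseteq B$ into its table rather than redo the character-sum computation. The key simplification is $A\cup B=B$, so every $|A\cup B|$ becomes $|B|$. The length $(q^{|A|}-1)(q^{|B|}-1)$ is unchanged. The cardinality $q^{|A|+|A\cup B|}$ becomes $q^{|A|+|B|}$.

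Next I substitute $|A\cup B|=|B|$ into each frequency.
\begin{itemize}
\item $f_1=2(q^{0}-1)=0$, so the weight $w_1$ disappears.
\item $f_2=q^{|A|}-2+1=q^{|A|}-1$.
\item $f_3=q^{|B|-|A|}-1$.
\item $f_4=2(q^{|B|}-q^{|B|-|A|}-1+1)=2(q^{|B|}-q^{|B|-|A|})$.
\item $f_5=q^{|A|+|B|}-q^{|A|}+2-2q^{|B|}+q^{|B|-|A|}-1$, which simplifies to $q^{|A|+|B|}-q^{|A|}-2q^{|B|}+q^{|B|-|A|}+1$.
\end{itemize}
These match the table in the statement exactly. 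The weights $w_2,\dots,w_5$ do not involve $|A\cup B|$ and carry over verbatim. As a consistency check, the frequencies sum to $q^{|A|+|B|}$.

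To justify \emph{exactly} four nonzero weights, I must show that $w_2,w_3,w_4,w_5$ are pairwise distinct and that $f_2,\dots,f_5$ are all positive. Writing $a=|A|$ and $b=|B|$, each difference of weights is $(q-1)$ times a short combination of $q^{a-1}$ and $q^{b-1}$.
\begin{itemize}
\item $w_2-w_4=(q-1)(2q^{b-1}-q^{a-1})$, which is positive.
\item $w_4-w_5=(q-1)q^{a-1}$, which is positive.
\item $w_3-w_5=2(q-1)q^{a-1}$, which is positive.
\item $w_3-w_4=(q-1)q^{a-1}$, which is nonzero.
\item $w_2=w_3$ holds only when $a=b$.
\end{itemize}

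The main obstacle is the edge cases, where the four-weight claim can degenerate. If $A=B$, then $f_3=0$ and $w_2=w_3$. If $A=\emptyset$, the length is $0$. So I will state that the corollary implicitly assumes $\emptyset\neq A\subsetneq B$. Under that assumption $f_2,f_3>0$ and $f_4>0$. For $f_5$, I will check positivity by writing $f_5=(q^{a}-2)(q^{b}-1)+q^{b-a}-1+\dots$, or more simply by bounding $q^{a+b}\ge 2q^{b}+q^{a}$ when $q^a\ge 2$ and $b\ge 2$. This bound holds since $a\ge1$ and $b>a$.

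Otherwise the corollary follows immediately from Theorem \ref{thm:1}.
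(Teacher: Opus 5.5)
Your substitution $|A\cup B|=|B|$ into Theorem~\ref{thm:1} is the same route the paper takes; the paper gives no separate argument for this corollary. Your frequency computations, the sum check and the weight differences are all correct. Your caveat that ``four-weight'' needs $\emptyset\neq A\subsetneq B$ is also right. The paper's own example with $q=2$, $A=B=\{1,2\}$ produces a three-weight Gray image.

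One correction: drop the ``more simply'' route for $f_5>0$. The inequality $q^{a+b}\ge 2q^{b}+q^{a}$ is false when $q^{a}=2$; for example, $q=2$, $a=1$, $b=2$ gives $8\ge 10$. Your first decomposition is already the exact identity $f_5=(q^{a}-2)(q^{b}-1)+(q^{b-a}-1)$, with nothing hidden in the ``$\dots$''. It gives $f_5\ge q^{b-a}-1>0$ directly for $1\le a<b$.
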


\begin{theorem}\label{thm:1a}
    Let $m\geq 2$ be a positive integer and $A,B \subseteq [m]$ such that $|A|+|B|<2m$. Define $D = (\Delta_{A} + u \Delta_{B})^c.$ Then the code $\mathcal{C}_{D}$ defined in \eqref{eqn:C_D code} is an at most three-weight linear code of length $q^{2m}-q^{|A|+|B|}$ and cardinality $q^{2m}$. Its Lee weight distribution is given by 
    \begin{table}[H]
\centering

\begin{tabular}{l|l}
\hline
Weight & Frequency \\
\hline
$w_0=0$ & $f_0=1$ \\
\hline
$w_{1}= 2(q-1)q^{2m-1}$ & $f_{1}=q^{2m-|A|-|A\cup B|}-1$ \\
\hline
$w_{2}= (q-1)(2q^{2m-1}-q^{|A|+|B|-1})$ & $f_{2}=2(q^{2m-|A|-|B|}-q^{2m-|A|-|A\cup B|})$ \\
\hline
$w_{3}= 2(q-1)(q^{2m-1}-q^{|A|+|B|-1})$ & $f_{3}=q^{2m}-2q^{2m-|A|-|B|}+q^{2m-|A|-|A\cup B|}$ \\
\hline
\end{tabular}

\label{tab:3a}
\end{table}
\end{theorem}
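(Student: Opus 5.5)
The plan is to avoid applying \eqref{eqn:weight_formula} to $D$ directly, since $D=(\Delta_A+u\Delta_B)^c$ is not of the product form $D_1+uD_2$. Instead I would use complementation: for every $\bm{v}\in\mathcal{R}^m$,
\[
\wt_L(c_D(\bm{v}))=\wt_L\big(c_{\mathcal{R}^m}(\bm{v})\big)-\wt_L\big(c_{\Delta_A+u\Delta_B}(\bm{v})\big),
\]
because the coordinates of $c_D(\bm{v})$ are exactly those of $c_{\mathcal{R}^m}(\bm{v})$ with the positions indexed by $\Delta_A+u\Delta_B$ removed. Both sets on the right are of product form: $\mathcal{R}^m=\mathbb{F}_q^m+u\mathbb{F}_q^m$ and $\Delta_A+u\Delta_B$ (zero vectors included). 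So the derivation leading to \eqref{eqn:weight_formula} applies to each term, with $n$ replaced by the corresponding size $q^{2m}$ or $q^{|A|+|B|}$. The length is then immediate: $q^{2m}-q^{|A|+|B|}$.

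Write $\bm{v}=\bm{a}+u\bm{b}\neq \bm{0}$.

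\emph{Full term.} Apply Lemma~\ref{lem:counting} with $H=\mathbb{F}_q^m$, so $H^\perp=\{\bm 0\}$, for each $v\neq 0$. The $\bm{d}_2$-sum equals $q^m[\bm a=\bm 0]$. The $\bm{d}_1$-sum equals $q^m([\bm b=\bm 0]+[\bm a+\bm b=\bm 0])$. If $\bm a=\bm 0$ then $\bm b\neq\bm 0$, so the product vanishes in all cases. Hence the full term is $2(q-1)q^{2m-1}$.

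\emph{Subcomplex term.} By Lemma~\ref{lem:counting}, for each $v\neq0$ the product of the two sums equals
\[
q^{|A|+|B|}\,\epsilon,\qquad \epsilon=[\bm a\in\Delta_B^\perp]\big([\bm b\in\Delta_A^\perp]+[\bm a+\bm b\in\Delta_A^\perp]\big)\in\{0,1,2\}.
\]
Summing over the $q-1$ nonzero $v$ gives
\[
\wt_L(c_{\Delta_A+u\Delta_B}(\bm v))=\tfrac{2(q-1)}{q}q^{|A|+|B|}-\tfrac{q-1}{q}q^{|A|+|B|}\epsilon .
\]

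\emph{Weights.} Subtracting the two terms,
\[
\wt_L(c_D(\bm v))=2(q-1)\big(q^{2m-1}-q^{|A|+|B|-1}\big)+\epsilon\,(q-1)q^{|A|+|B|-1}.
\]
The values $\epsilon=2,1,0$ give $w_1,w_2,w_3$ respectively. Since $|A|+|B|<2m$, all three values are positive. So $\bm v\mapsto c_D(\bm v)$ has trivial kernel, $\mathcal{C}_D$ has cardinality $q^{2m}$, and no division by a kernel size is needed, unlike in Theorem~\ref{thm:1}.

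\emph{Frequencies.} These come from the same coset counting as in Theorem~\ref{thm:1}.
\begin{itemize}
\item $\epsilon=2$ requires $\bm a\in\Delta_A^\perp\cap\Delta_B^\perp=\Delta_{(A\cup B)^c}$ and $\bm b\in\Delta_A^\perp$. That is $q^{m-|A\cup B|}q^{m-|A|}$ pairs. Removing $\bm v=\bm 0$ gives $f_1$.
\item $\epsilon=1$ requires $\bm a\in\Delta_B^\perp\setminus\Delta_A^\perp$ and $\bm b$ in exactly one of the disjoint cosets $\Delta_A^\perp$, $-\bm a+\Delta_A^\perp$. That is $(q^{m-|B|}-q^{m-|A\cup B|})\cdot 2q^{m-|A|}$ pairs, which is $f_2$.
\item $f_3$ is $q^{2m}$ minus $1$ minus the other two counts.
\end{itemize}

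The main point of care is the complementation step together with the boundary case $\bm v=\bm 0$. The full-space computation only holds for $\bm v\neq\bm 0$, so the zero vector must be excluded from the $\epsilon=2$ class. It is also where the hypothesis $|A|+|B|<2m$ is used, to ensure that no nonzero $\bm v$ yields weight $0$.
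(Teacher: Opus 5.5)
Your proposal is correct and follows essentially the paper's route. The paper also treats $D$ as the complement of $\Delta_A+u\Delta_B$ in $\mathcal{R}^m$, subtracts the $\Delta_A+u\Delta_B$ character sum from the full-space one, evaluates both with Lemma~\ref{lem:counting}, and gets the frequencies from the same coset counts, with $f_3$ obtained by subtraction. Your indicator $\epsilon$ just packages more compactly the paper's case split (it separates $\bm a=\bm 0$, $\bm a\neq\bm 0$ with $\bm a\in\Delta_B^{\perp}$, and $\bm a\notin\Delta_B^{\perp}$), and you additionally make explicit why $|A|+|B|<2m$ forces the cardinality $q^{2m}$.
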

\begin{proof}
    The length of the code $\mathcal{C}_{D}$ is $n=|D| = q^{2m}-q^{|A|+|B|}$. Observe that 
    \[
    D = \left(\Delta_{A}^{c}+u\mathbb{F}_{q}^{m}\right)\bigsqcup \left(\Delta_{A}+u\Delta_{B}^{c}\right).
    \]
    \begin{align*}
        \wt_{L}(c_{D}(\bm{v}))&= 2n-\frac{2}{q}n-\frac{1}{q}\sum\limits_{v\in \mathbb{F}_{p}^{s}\setminus\{0\}}\sum\limits_{\bm{d}_{2}\in \mathbb{F}_{q}^{m}}\zeta_{p}^{\langle v, \langle\bm{a},\bm{d}_{2}\rangle_{q}\rangle_{p}} \sum\limits_{\bm{d}_{1}\in \mathbb{F}_{q}^{m}}\left(\zeta_{p}^{\langle v, \langle\bm{b},\bm{d}_{1}\rangle_{q}\rangle_p}+\zeta_{p}^{\langle v,\langle\bm{a}+\bm{b},\bm{d}_{1}\rangle_{q}\rangle_{p}}\right) \\
        & \quad + \frac{1}{q}\sum\limits_{v\in \mathbb{F}_{p}^{s}\setminus\{0\}}\sum\limits_{\bm{d}_{2}\in \Delta_{B}}\zeta_{p}^{\langle v, \langle\bm{a},\bm{d}_{2}\rangle_{q}\rangle_{p}} \sum\limits_{\bm{d}_{1}\in \Delta_{A}}\left(\zeta_{p}^{\langle v, \langle\bm{b},\bm{d}_{1}\rangle_{q}\rangle_p}+\zeta_{p}^{\langle v,\langle\bm{a}+\bm{b},\bm{d}_{1}\rangle_{q}\rangle_{p}}\right).
    \end{align*}
    There are $2$ possible cases which are as follows:
    \begin{enumerate}
        \item $\bm{a}=\bm{0}.$
        \begin{itemize}
            \item If $\bm{b}=\bm{0}$, then 
            \[
            \wt_{L}(c_{D}(\bm{v})) = 0.
            \]
            In this case,
            \begin{eqnarray*}
        \# \bm{a} &=& 1\\
    \# \bm{b} &=& 1.
    \end{eqnarray*}
    \item If $\bm{b}\neq \bm{0}, \bm{b}\in \Delta_{A}^{\perp}$, then
     \[
            \wt_{L}(c_{D}(\bm{v})) = 2(q-1)q^{2m-1}.
            \]
            In this case,
            \begin{eqnarray*}
        \# \bm{a} &=& 1\\
    \# \bm{b} &=& q^{m-|A|}-1.
    \end{eqnarray*}
    \item If $\bm{b}\notin \Delta_{A}^{\perp}$, then
     \[
            \wt_{L}(c_{D}(\bm{v})) = 2(q-1)(q^{2m-1}-q^{|A|+|B|-1}).
            \]
        \end{itemize}
        \item $\bm{a}\neq \bm{0}$ and $\bm{a}\in \Delta_{B}^{\perp}$.
        \begin{itemize}
            \item If $\bm{b}\in \Delta_{A}^{\perp}, \bm{a}+\bm{b}\in \Delta_{A}^{\perp}$, then
             \[
            \wt_{L}(c_{D}(\bm{v})) = 2(q-1)q^{2m-1}.
            \]
            In this case,
            \begin{eqnarray*}
        \# \bm{a} &=& q^{m-|A\cup B|}-1\\
    \# \bm{b} &=& q^{m-|A|}.
    \end{eqnarray*}
    \item If $\bm{b}\in \Delta_{A}^{\perp}, \bm{a}+\bm{b}\notin \Delta_{A}^{\perp}$, then
             \[
            \wt_{L}(c_{D}(\bm{v})) = (q-1)(2q^{2m-1}-q^{|A|+|B|-1}).
            \]
            In this case,
            \begin{eqnarray*}
        \# \bm{a} &=& q^{m-|B|}-q^{m-|A\cup B|}\\
    \# \bm{b} &=& q^{m-|A|}.
    \end{eqnarray*}
    \item If $\bm{b}\notin \Delta_{A}^{\perp}, \bm{a}+\bm{b}\in \Delta_{A}^{\perp}$, then
             \[
            \wt_{L}(c_{D}(\bm{v})) = (q-1)(2q^{2m-1}-q^{|A|+|B|-1}).
            \]
            In this case,
            \begin{eqnarray*}
        \# \bm{a} &=& q^{m-|B|}-q^{m-|A\cup B|}\\
    \# \bm{b} &=& q^{m-|A|}.
    \end{eqnarray*}
    \item  If $\bm{b}\notin \Delta_{A}^{\perp}, \bm{a}+\bm{b}\notin \Delta_{A}^{\perp}$, then
             \[
            \wt_{L}(c_{D}(\bm{v})) = 2(q-1)(q^{2m-1}-q^{|A|+|B|-1}).
            \]
        \end{itemize}
        \item $\bm{a}\notin \Delta_{B}^{\perp}$. Then 
         \[
            \wt_{L}(c_{D}(\bm{v})) = 2(q-1)(q^{2m-1}-q^{|A|+|B|-1}).
            \]
    \end{enumerate}
    To determine the frequency corresponding to the weight
\[
w_3 = 2(q-1)(q^{2m-1}-q^{|A|+|B|-1}),
\]
we subtract the sum of the frequencies of all other weights from the total number of codewords $q^{2m}$. Consequently, the frequency is 
\[
q^{2m}-2q^{2m-|A|-|B|}+q^{2m-|A|-|A\cup B|}.
\]
\qed
\end{proof}
\begin{corollary}\label{cor:2}
     If $A\subseteq B$ in Theorem \ref{thm:1a}, then the code $\mathcal{C}_{D}$ in \eqref{eqn:C_D code} is a two-weight linear code of length $q^{2m}-q^{|A|+|B|}$ and cardinality $q^{2m}$. Its Lee weight distribution is given by
     \begin{table}[H]
\centering

\begin{tabular}{l|l}
\hline
Weight & Frequency \\
\hline
$0$ & $1$ \\
\hline
$ 2(q-1)q^{2m-1}$ & $q^{2m-|A|-|B|}-1$ \\
\hline
$ 2(q-1)(q^{2m-1}-q^{|A|+|B|-1})$ & $q^{2m}-q^{2m-|A|-|B|}$ \\
\hline
\end{tabular}

\label{tab:3b}
\end{table}
\end{corollary}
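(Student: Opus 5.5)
This follows by specializing Theorem \ref{thm:1a} to $A\subseteq B$, where $A\cup B=B$. The plan is to substitute $|A\cup B|=|B|$ into the table of Theorem \ref{thm:1a} and see which weights survive. The length $q^{2m}-q^{|A|+|B|}$ and the cardinality $q^{2m}$ are unchanged. The hypothesis $|A|+|B|<2m$ is inherited, so $n>0$ and the weights below are nonzero.

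\textbf{Substitution.} Substituting into the frequencies gives the following.
\begin{itemize}
\item $f_1=q^{2m-|A|-|B|}-1$.
\item $f_2=2(q^{2m-|A|-|B|}-q^{2m-|A|-|B|})=0$, so the weight $w_2=(q-1)(2q^{2m-1}-q^{|A|+|B|-1})$ no longer occurs.
\item $f_3=q^{2m}-2q^{2m-|A|-|B|}+q^{2m-|A|-|B|}=q^{2m}-q^{2m-|A|-|B|}$.
\end{itemize}

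\textbf{Checks.} The frequencies sum to $1+(q^{2m-|A|-|B|}-1)+(q^{2m}-q^{2m-|A|-|B|})=q^{2m}$. For a structural explanation of why $w_2$ vanishes, return to case~2 in the proof of Theorem \ref{thm:1a}. Since $\Delta_B^{\perp}\subseteq\Delta_A^{\perp}$ when $A\subseteq B$, any $\bm a\in\Delta_B^{\perp}$ lies in $\Delta_A^{\perp}$. Then $\bm b\in\Delta_A^{\perp}$ holds exactly when $\bm a+\bm b\in\Delta_A^{\perp}$, so the mixed subcases that produce $w_2$ are empty.

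\textbf{Exactly two weights.} It remains to confirm that both surviving nonzero weights actually occur. The weight $w_3$ occurs because $f_3>0$. The weight $w_1$ occurs because $f_1>0$, which uses $|A|+|B|<2m$. Finally, $w_1\neq w_3$ since $q^{|A|+|B|-1}>0$. Hence $\mathcal{C}_D$ is a two-weight code with the stated distribution.

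The only real point requiring care is this nonvanishing of $f_1$, which ensures the code is exactly two-weight rather than one-weight.
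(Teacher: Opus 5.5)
Your substitution $|A\cup B|=|B|$ into the table of Theorem~\ref{thm:1a} is the same route the paper takes; the paper states the corollary without a separate proof. Your explanation of why the mixed subcases disappear, via $\Delta_B^{\perp}=\Delta_{B^c}\subseteq\Delta_{A^c}=\Delta_A^{\perp}$, is also correct.

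The step that does not hold as written is ``$w_3$ occurs because $f_3>0$''. The frequency $f_3=q^{2m}-q^{2m-|A|-|B|}$ is positive only when $|A|+|B|\geq 1$. Since $A\subseteq B$, this means $B\neq\emptyset$, and the hypotheses do not imply that.

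Take $A=B=\emptyset$, which is allowed because $|A|+|B|=0<2m$. Then $D=\mathcal{R}^m\setminus\{\bm{0}\}$, and for every $\bm{v}\neq\bm{0}$ both linear functionals in the Lee weight are nonzero on $\mathbb{F}_q^{2m}$. So every nonzero codeword has Lee weight $2(q-1)q^{2m-1}$, and $\mathcal{C}_D$ is a one-weight code. Your table reflects this: $f_1=q^{2m}-1$ and $f_3=0$. Hence the ``exactly two-weight'' conclusion, in your argument and in the statement itself, needs the extra assumption $B\neq\emptyset$.

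The point you singled out as delicate, $f_1>0$, is actually the easy one, since it follows directly from $|A|+|B|<2m$.
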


\begin{theorem}\label{thm:2}
    Let $m\geq 2$ be a positive integer and let $A,B,C \subseteq [m]$ such that       $|C|\leq |B|$. Define $D = \Delta_{A}+ u (\Delta_{B,C}\setminus \Delta_{B\cap C})^{c}.$ Then the code $\mathcal{C}_{D}$ defined in \eqref{eqn:C_D code} is an at most nine-weight linear code of length $q^{|A|} (q^{m}-q^{|B|}-q^{|C|}+2q^{|B \cap C|})$ and cardinality $q^{m+|A|}$.
    Its Lee weight distribution is given by 
    \begin{table}[H]
\centering
\resizebox{\textwidth}{!}{
\begin{tabular}{l|l}
\hline
Weight & Frequency \\
\hline
$w_{0}=0$ & $f_{0}=1$ \\
\hline
$w_{1}= 2(q-1)q^{m+|A|-1}$ & $f_{1}=q^{m-|A\cup B\cup C|}-1$ \\
\hline
$w_{2}= (q-1)q^{|A|-1}(2q^{m}-q^{|B|}-q^{|C|}+2q^{|B\cap C|})$ & $f_{2}=2(q^{m-|B\cup C|}-q^{m-|A\cup B\cup C|})$ \\
\hline
$w_{3}= 2(q-1)q^{|A|-1}(q^{m}-q^{|C|})$ & $f_{3}=q^{m-|A\cup B|}-q^{m-|A\cup B\cup C|}$ \\
\hline
$w_{4}= (q-1)q^{|A|-1}(2q^{m}-q^{|B|}-2q^{|C|}+2q^{|B\cap C|})$ & $f_{4}=2(q^{m-|B|}-q^{m-|A\cup B|}-q^{m-|B\cup C|}+q^{m-|A\cup B\cup C|})$ \\
\hline
$w_{5}= 2(q-1)q^{|A|-1}(q^{m}-q^{|B|})$ & $f_{5}=q^{m-|A\cup C|}-q^{m-|A\cup B\cup C|}$ \\
\hline
$w_{6}= (q-1)q^{|A|-1}(2q^{m}-2q^{|B|}-q^{|C|}+2q^{|B\cap C|})$ & $f_{6}=2(q^{m-|C|}-q^{m-|A\cup C|}-q^{m-|B\cup C|}+q^{m-|A\cup B\cup C|})$ \\
\hline
$w_{7}= 2(q-1)q^{|A|-1}(q^{m}-q^{|B|}-q^{|C|})$ & $f_{7}=q^{m-|A\cup(B\cap C)|}-q^{m-|A\cup B|}-q^{m-|A\cup C|}+q^{m-|A\cup B\cup C|}$ \\
\hline
$w_{8}= 2(q-1)q^{|A|-1}(q^{m}-q^{|B|}-q^{|C|}+q^{|B\cap C|})$ &
$
\begin{aligned}
   f_{8}= 2(& q^{m-|B\cap C|}-q^{m-|A\cup(B\cap C)|}-q^{m-|B|}-q^{m-|C|}+q^{m-|A\cup B|} \\[-1ex]
    & +q^{m-|A\cup C|}+q^{m-|B\cup C|}-q^{m-|A\cup B\cup C|})
\end{aligned}$\\
\hline
$w_{9}= 2(q-1)q^{|A|-1}(q^{m}-q^{|B|}-q^{|C|}+2q^{|B\cap C|})$ & $f_{9}=q^{m+|A|}-2q^{m-|B\cap C|}+q^{m-|A\cup (B\cap C)|}$ \\
\hline
\end{tabular}
}
\label{tab:1}
\end{table}
\end{theorem}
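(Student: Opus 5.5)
Following the pattern of Theorems \ref{thm:1} and \ref{thm:1a}, I would take $D_1=\Delta_A$ and $D_2=(\Delta_{B,C}\setminus\Delta_{B\cap C})^c$ and apply the weight formula \eqref{eqn:weight_formula} directly. The first task is the length. Since $|\Delta_{B,C}|=q^{|B|}+q^{|C|}-q^{|B\cap C|}$, we get
\[
|\Delta_{B,C}\setminus\Delta_{B\cap C}|=q^{|B|}+q^{|C|}-2q^{|B\cap C|},
\]
and hence $n=q^{|A|}(q^m-q^{|B|}-q^{|C|}+2q^{|B\cap C|})$.

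For the character sums, I would write the indicator of $D_2$ as
\[
\mathbb{1}_{\mathbb{F}_q^m}-\mathbb{1}_{\Delta_B}-\mathbb{1}_{\Delta_C}+2\,\mathbb{1}_{\Delta_{B\cap C}}.
\]
By Lemma \ref{lem:counting}, for $v\neq 0$ this gives
\[
\sum_{\bm d_2\in D_2}\zeta_p^{\langle v,\langle \bm a,\bm d_2\rangle_q\rangle_p}
= q^m[\bm a=\bm 0]-q^{|B|}[\bm a\in\Delta_B^\perp]-q^{|C|}[\bm a\in\Delta_C^\perp]+2q^{|B\cap C|}[\bm a\in\Delta_{B\cap C}^\perp].
\]
This value does not depend on $v$. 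Likewise, the $D_1$-sum equals
\[
q^{|A|}\bigl([\bm b\in\Delta_A^\perp]+[\bm a+\bm b\in\Delta_A^\perp]\bigr).
\]
Summing over the $q-1$ nonzero $v$, the formula becomes
\[
\wt_L = 2n-\tfrac{2n}{q}-\tfrac{q-1}{q}\,S_2(\bm a)\,S_1(\bm a,\bm b),
\]
where $S_1\in\{0,q^{|A|},2q^{|A|}\}$. When $S_1=0$ we get weight $w_9$, since $2n(q-1)/q=w_9$.

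Next I would split into cases on $\bm a$, using the chain of subspaces $\Delta_{B\cup C}^\perp=\Delta_B^\perp\cap\Delta_C^\perp\subseteq\Delta_B^\perp,\Delta_C^\perp\subseteq\Delta_{B\cap C}^\perp$.
\begin{itemize}
\item $\bm a=\bm 0$: here $S_2$ takes its full value.
\item $\bm a\in\Delta_{B\cup C}^\perp\setminus\{\bm 0\}$: here $S_2=-q^{|B|}-q^{|C|}+2q^{|B\cap C|}$.
\item $\bm a\in\Delta_B^\perp\setminus\Delta_C^\perp$: here $S_2=-q^{|B|}+2q^{|B\cap C|}$.
\item $\bm a\in\Delta_C^\perp\setminus\Delta_B^\perp$: here $S_2=-q^{|C|}+2q^{|B\cap C|}$.
\item $\bm a\in\Delta_{B\cap C}^\perp$ but in neither $\Delta_B^\perp$ nor $\Delta_C^\perp$: here $S_2=2q^{|B\cap C|}$.
\item $\bm a\notin\Delta_{B\cap C}^\perp$: here $S_2=0$, so the weight is $w_9$.
\end{itemize}
Within each class I would also split on $\bm b$. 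The value $S_1=2q^{|A|}$ requires $\bm a\in\Delta_A^\perp$. The value $S_1=q^{|A|}$ occurs for $2q^{m-|A|}$ choices of $\bm b$ when $\bm a\notin\Delta_A^\perp$, and for none when $\bm a\in\Delta_A^\perp$. Matching each (class, $S_1$) pair to the listed weights gives:
\begin{itemize}
\item $S_1=2q^{|A|}$ yields $w_1,w_3,w_5,w_7,w_8$, depending on the class of $\bm a$ intersected with $\Delta_A^\perp$ (with $\bm a=\bm 0$ giving weight $0$);
\item $S_1=q^{|A|}$ yields $w_2,w_4,w_6$ and also $w_8$.
\end{itemize}
For example, $\bm a\in\Delta_B^\perp\cap\Delta_A^\perp\setminus\Delta_C^\perp$ with $S_1=2q^{|A|}$ gives $w_5$.

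The counts of $\bm a$ come from inclusion--exclusion on the subspaces $\Delta_X^\perp=\Delta_{X^c}$, of dimension $m-|X|$, intersected with $\Delta_A^\perp$. The number of $\bm b$ is $q^{m-|A|}$ in the case $S_1=2q^{|A|}$ and $2q^{m-|A|}$ in the case $S_1=q^{|A|}$. The kernel of $\bm v\mapsto c_D(\bm v)$ consists of the pairs with $\bm a=\bm 0$ and $\bm b\in\Delta_A^\perp$, so it has size $q^{m-|A|}$. Dividing every count by $q^{m-|A|}$ yields the frequencies in the table, and the cardinality is $q^{2m}/q^{m-|A|}=q^{m+|A|}$. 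The frequency $f_9$ I would obtain by subtraction from the total, as in the earlier proofs.

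The main obstacle is bookkeeping rather than any single estimate. The weight $w_8$ arises from two different configurations: the class $\bm a\in\Delta_{B\cap C}^\perp\setminus(\Delta_B^\perp\cup\Delta_C^\perp)$ paired with $S_1=2q^{|A|}$ (which forces $\bm a\in\Delta_A^\perp$), and the $\bm a$ giving $S_2=2q^{|B\cap C|}$ with $S_1=q^{|A|}$, whose weight I must recheck. Their counts have to be merged consistently, and I would verify the combined total against $f_8$. I would also check where the hypothesis $|C|\le|B|$ enters; I expect it only fixes the ordering and labelling of the weights. Finally, I would check that the frequencies sum correctly, using $\sum f_i=q^{m+|A|}$.
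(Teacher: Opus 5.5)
Your setup is the same as the paper's proof: the same inclusion--exclusion for the character sum over $D_2$, the same six classes of $\bm a$, the same $\bm b$-counts, division by $q^{m-|A|}$, and $f_9$ by subtraction. The gap is in the matching step, which is where the table actually comes from, and two of your assignments there are wrong. Put $N=q^m-q^{|B|}-q^{|C|}+2q^{|B\cap C|}$, so $n=q^{|A|}N$. Your own formula gives
\[
\wt_L=2(q-1)q^{|A|-1}(N-S_2)\ \text{ if } S_1=2q^{|A|},\qquad \wt_L=(q-1)q^{|A|-1}(2N-S_2)\ \text{ if } S_1=q^{|A|}.
\]

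\begin{itemize}
\item For $\bm a\in\Delta_B^\perp\setminus\Delta_C^\perp$ you get $N-S_2=q^m-q^{|C|}$. So with $S_1=2q^{|A|}$ the weight is $w_3$, not $w_5$ as in your example. The corresponding count, $q^{m-|A\cup B|}-q^{m-|A\cup B\cup C|}$, is $f_3$.
\item For the class with $S_2=2q^{|B\cap C|}$ you get $N-S_2=q^m-q^{|B|}-q^{|C|}$. So with $S_1=2q^{|A|}$ the weight is $w_7$, not $w_8$.
\item Only the same class with $\bm a\notin\Delta_A^\perp$ and $S_1=q^{|A|}$ gives $2N-S_2=2(q^m-q^{|B|}-q^{|C|}+q^{|B\cap C|})$, i.e.\ $w_8$.
\end{itemize}

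The correct picture is as follows. With $S_1=2q^{|A|}$ the six classes give exactly $0,w_1,w_3,w_5,w_7,w_9$. With $S_1=q^{|A|}$ the five classes with $\bm a\neq\bm 0$ give $w_2,w_4,w_6,w_8,w_9$. So each row $w_i$, $1\le i\le 8$, comes from a single (class, $S_1$) pair:
\begin{itemize}
\item for odd $i$, $\bm a\in\Delta_A^\perp$ with $q^{m-|A|}$ choices of $\bm b$;
\item for even $i$, $\bm a\notin\Delta_A^\perp$ with $2q^{m-|A|}$ choices of $\bm b$.
\end{itemize}
This is precisely why $f_2,f_4,f_6,f_8$ carry a factor $2$.

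The ``main obstacle'' you identify therefore comes from the miscomputation, and carrying out the plan as written would not reproduce the table. Merging the two configurations would assign $f_7+f_8$ to $w_8$ and leave $w_7$ with no source. Labelling $\Delta_A^\perp\cap\Delta_B^\perp\setminus\Delta_C^\perp$ as $w_5$ would attach the count $f_3$ to $w_5$. Your final consistency checks would catch these mismatches, but the fix is simply to evaluate $N-S_2$ and $2N-S_2$ class by class, as the paper's case analysis does. You are right that $|C|\le|B|$ plays no role in the proof.

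One further point, which the paper shares: identifying the kernel as $\{(\bm 0,\bm b):\bm b\in\Delta_A^\perp\}$ tacitly assumes that no $w_i$ with $1\le i\le 8$ vanishes on a nonempty configuration. This fails in degenerate cases. For example, $B=[m]$ gives $D_2=\Delta_C$, $w_5=0$, and $f_5=q^{m-|A\cup C|}-1$, which can be positive. So the cardinality $q^{m+|A|}$ requires such cases to be excluded.
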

\begin{proof}
    Here $D_{1}=\Delta_{A}, D_{2} = (\Delta_{B,C}\setminus \Delta_{B\cap C})^{c}$. Consequently, the length of the code $\mathcal{C}_{D}$ is $n = |D| = q^{|A|}(q^m-q^{|B|}-q^{|C|}+2q^{|B\cap C|}).$ Now, 
    \[
        \sum\limits_{\bm{d}_{2}\in D_{2}}\zeta_{p}^{\langle v, \langle\bm{a},\bm{d}_{2}\rangle_{q}\rangle_{p}}
        = \sum\limits_{\bm{d}_{2}\in \mathbb{F}_{q}^{m}}\zeta_{p}^{\langle v, \langle\bm{a},\bm{d}_{2}\rangle_{q}\rangle_{p}} -  \sum\limits_{\bm{d}_{2}\in \Delta_{B}}\zeta_{p}^{\langle v, \langle\bm{a},\bm{d}_{2}\rangle_{q}\rangle_{p}} -  \sum\limits_{\bm{d}_{2}\in \Delta_{C}}\zeta_{p}^{\langle v, \langle\bm{a},\bm{d}_{2}\rangle_{q}\rangle_{p}} + 2 \sum\limits_{\bm{d}_{2}\in \Delta_{B\cap C}}\zeta_{p}^{\langle v, \langle\bm{a},\bm{d}_{2}\rangle_{q}\rangle_{p}}.
    \]
    Then by Lemma \ref{lem:counting}, for $v\in\mathbb{F}_{p}^{s}\setminus\{0\}$, we have
 \begin{equation}\label{eqn:sum_Delta_BC}
     \sum\limits_{\bm{d}_{2}\in (\Delta_{B,C}\setminus \Delta_{B\cap C})^{c}}\zeta_{p}^{\langle v, \langle\bm{a},\bm{d}_{2}\rangle_{q}\rangle_{p}} = 
     \begin{cases}
         q^{m}-q^{|B|}-q^{|C|}+2q^{|B\cap C|}, & \text{if } \bm{a} = \bm{0}; \\
       -q^{|B|}-q^{|C|}+2q^{|B\cap C|},  & \text{if } \bm{a} \neq \bm{0}, \bm{a}\in\Delta_{B}^{\perp}, \bm{a}\in\Delta_{C}^{\perp}; \\
        -q^{|B|}+2q^{|B\cap C|},  & \text{if }  \bm{a}\in\Delta_{B}^{\perp}, \bm{a}\notin\Delta_{C}^{\perp}; \\
        -q^{|C|}+2q^{|B\cap C|},  & \text{if }  \bm{a}\notin\Delta_{B}^{\perp}, \bm{a}\in\Delta_{C}^{\perp}; \\
        2q^{|B\cap C|},  & \text{if }  \bm{a}\notin\Delta_{B}^{\perp}, \bm{a}\notin\Delta_{C}^{\perp}, \bm{a}\in\Delta_{B\cap C}^{\perp}; \\
        0,  & \text{if }   \bm{a}\in\Delta_{B\cap C}^{\perp}. 
     \end{cases}
 \end{equation}
Similarly, by Lemma \ref{lem:counting}, for $v\in\mathbb{F}_{p}^{s}\setminus\{0\}$, we have
\begin{equation}\label{eqn:sum_Delta_A}
    \sum\limits_{\bm{d}_{1}\in \Delta_{A}}\left(\zeta_{p}^{\langle v, \langle\bm{b},\bm{d}_{1}\rangle_{q}\rangle_p}+\zeta_{p}^{\langle v, \langle\bm{a}+\bm{b},\bm{d}_{1}\rangle_{q}\rangle_{p}}\right) = 
\begin{cases}
    2q^{|A|}, & \text{if } \bm{b}\in \Delta_{A}^{\perp},\bm{a}+\bm{b}\in\Delta_{A}^{\perp};\\
    q^{|A|}, & \text{if } \bm{b}\notin \Delta_{A}^{\perp},\bm{a}+\bm{b}\in\Delta_{A}^{\perp} \;\text{ or }\; \bm{b}\in \Delta_{A}^{\perp},\bm{a}+\bm{b}\notin\Delta_{A}^{\perp} ;\\
    0, & \text{if } \bm{b}\notin \Delta_{A}^{\perp},\bm{a}+\bm{b}\notin\Delta_{A}^{\perp}.
\end{cases}
\end{equation}

There are $6$ possible cases which are as follows:
\begin{enumerate}
    \item[(1)] $\bm{a}=\mathbf{0}$.
    \begin{itemize}
        \item If $\bm{b}\in \Delta_{A}^{\perp},$ then
         \[
    \wt_{L}(c_{D}(\bm{v})) = 0.
    \]
    In this case, 
    \begin{eqnarray*}
        \# \bm{a} &=& 1 \\
    \# \bm{b} &=& |\Delta_{A}^{\perp}| = q^{m-|A|}.
    \end{eqnarray*}
    
    \item  If $\bm{b}\notin \Delta_{A}^{\perp},$ then
    \[
    \wt_{L}(c_{D}(\bm{v})) = 2(q-1)q^{|A|-1}(q^{m}-q^{|B|}-q^{|C|}+2q^{|B \cap C|}).
    \]
    \end{itemize}

    \item[(2)]  $\bm{a}\neq \mathbf{0}, \bm{a}\in \Delta_{B}^{\perp}$ and $\bm{a}\in \Delta_{C}^{\perp}$.
\begin{itemize}
    \item If $\bm{b} \in \Delta_{A}^{\perp}$ and $\bm{a}+\bm{b} \in \Delta_{A}^{\perp}$, then  
    \[
    \wt_{L}(c_{D}(\bm{v})) = 2(q-1)q^{m+|A|-1}.
    \]
    In this case, 
    \begin{eqnarray*}
        \# \bm{a} &=& |\Delta_{A}^{\perp}\cap \Delta_{B}^{\perp}\cap \Delta_{C}^{\perp}| -1 = q^{m-|A\cup B \cup C|}-1 \\
    \# \bm{b} &=& |\Delta_{A}^{\perp}| = q^{m-|A|}.
    \end{eqnarray*}
    
    \item If $\bm{b} \in \Delta_{A}^{\perp}$ and $\bm{a}+\bm{b} \notin \Delta_{A}^{\perp}$, then 
    \[
    \wt_{L}(c_{D}(\bm{v})) = (q-1)q^{|A|-1}(2q^{m}-q^{|B|}-q^{|C|}+2q^{|B \cap C|}).
    \]
    In this case, 
    \begin{eqnarray*}
        \# \bm{a} 
       &=& |\Delta_{B}^{\perp} \cap \Delta_{C}^{\perp}| -|\Delta_{A}^{\perp}\cap \Delta_{B}^{\perp}\cap \Delta_{C}^{\perp}|
        = q^{m-|B \cup C|}-q^{m-|A\cup B \cup C|} \\
    \# \bm{b}
    &=& |\Delta_{A}^{\perp}| = q^{m-|A|}.
    \end{eqnarray*}

    \item If $\bm{b} \notin \Delta_{A}^{\perp}$ and $\bm{a}+\bm{b} \in \Delta_{A}^{\perp}$, then  
    \[
    \wt_{L}(c_{D}(\bm{v})) =  (q-1)q^{|A|-1}(2q^{m}-q^{|B|}-q^{|C|}+2q^{|B \cap C|}).
    \]
    In this case, 
    \begin{eqnarray*}
        \# \bm{a} 
        &=& q^{m-|B \cup C|}-q^{m-|A\cup B \cup C|} \\
    \# \bm{b}
    &=& q^{m-|A|}.
    \end{eqnarray*}

    \item If $\bm{b} \notin \Delta_{A}^{\perp}$ and $\bm{a}+\bm{b} \notin \Delta_{A}^{\perp}$, then 
    \[
    \wt_{L}(c_{D}(\bm{v})) = 2(q-1)q^{|A|-1}(q^{m}-q^{|B|}-q^{|C|}+2q^{|B \cap C|}).
    \]
    \end{itemize} 

    \item[(3)] $\bm{a}\in \Delta_{B}^{\perp}$ and $\bm{a}\notin \Delta_{C}^{\perp}$. 
\begin{itemize}
    \item If $\bm{b} \in \Delta_{A}^{\perp}$ and $\bm{a}+\bm{b} \in \Delta_{A}^{\perp}$, then 
    \[
    \wt_{L}(c_{D}(\bm{v})) = 2(q-1)q^{|A|-1}(q^{m}-q^{|C|}).
    \]
    In this case, 
    \begin{eqnarray*}
        \# \bm{a} 
        &=& |\Delta_{A}^{\perp}\cap \Delta_{B}^{\perp}|-|\Delta_{A}^{\perp}\cap \Delta_{B}^{\perp}\cap \Delta_{C}^{\perp}| = q^{m-|A\cup B|}-q^{m-|A\cup B \cup C|} \\
    \# \bm{b}
    &=& q^{m-|A|}.
    \end{eqnarray*}
    
    \item If $\bm{b} \in \Delta_{A}^{\perp}$ and $\bm{a}+\bm{b} \notin \Delta_{A}^{\perp}$, then
    \[
    \wt_{L}(c_{D}(\bm{v})) = (q-1)q^{|A|-1}(2q^{m}-q^{|B|}-2q^{|C|}+2q^{|B \cap C|}).
    \]
    In this case, 
    \begin{eqnarray*}
        \# \bm{a}
        &=& |\Delta_{B}^{\perp}|-|\Delta_{B}^{\perp} \cap \Delta_{A}^{\perp}|-|\Delta_{B}^{\perp} \cap \Delta_{C}^{\perp}| +|\Delta_{A}^{\perp}\cap \Delta_{B}^{\perp}\cap \Delta_{C}^{\perp}|
        \\ 
        &=& q^{m-|B|}-q^{m-|A\cup B|}-q^{m-|B \cup C|}+q^{m-|A\cup B \cup C|} \\
    \# \bm{b} &=&|\Delta_{A}^{\perp}| = q^{m-|A|}.
    \end{eqnarray*}

    \item If $\bm{b} \notin \Delta_{A}^{\perp}$ and $\bm{a}+\bm{b} \in \Delta_{A}^{\perp}$, then
    \[
    \wt_{L}(c_{D}(\bm{v})) = (q-1)q^{|A|-1}(2q^{m}-q^{|B|}-2q^{|C|}+2q^{|B \cap C|}).
    \]
    In this case, 
    \begin{eqnarray*}
        \# \bm{a}
        &=& q^{m-|B|}-q^{m-|A\cup B|}-q^{m-|B \cup C|}+q^{m-|A\cup B \cup C|} \\
    \# \bm{b} &=& q^{m-|A|}.
    \end{eqnarray*}

    \item If $\bm{b} \notin \Delta_{A}^{\perp}$ and $\bm{a}+\bm{b} \notin \Delta_{A}^{\perp}$, then 
    \[
    \wt_{L}(c_{D}(\bm{v})) = 2(q-1)q^{|A|-1}(q^{m}-q^{|B|}-q^{|C|}+2q^{|B \cap C|}).
    \]    
\end{itemize}

\item[(4)] $\bm{a}\notin \Delta_{B}^{\perp}$ and $\bm{a}\in \Delta_{C}^{\perp}$. 
\begin{itemize}
    \item If $\bm{b} \in \Delta_{A}^{\perp}$ and $\bm{a}+\bm{b} \in \Delta_{A}^{\perp}$, then 
    \[
    \wt_{L}(c_{D}(\bm{v})) = 2(q-1)q^{|A|-1}(q^{m}-q^{|B|}).
    \]
    In this case, 
    \begin{eqnarray*}
        \# \bm{a}
        &=&|\Delta_{A}^{\perp}\cap \Delta_{C}^{\perp}|-|\Delta_{A}^{\perp}\cap \Delta_{B}^{\perp}\cap \Delta_{C}^{\perp}| = q^{m-|A\cup C|}-q^{m-|A\cup B \cup C|} \\
    \# \bm{b} &=& |\Delta_{A}^{\perp}| = q^{m-|A|}.
    \end{eqnarray*}
    
    \item If $\bm{b} \in \Delta_{A}^{\perp}$ and $\bm{a}+\bm{b} \notin \Delta_{A}^{\perp}$, then 
    \[
    \wt_{L}(c_{D}(\bm{v})) = (q-1)q^{|A|-1}(2q^{m}-2q^{|B|}-q^{|C|}+2q^{|B \cap C|}).
    \]
    In this case, 
    \begin{eqnarray*}
        \# \bm{a} &=& |\Delta_{C}^{\perp}|-|\Delta_{C}^{\perp} \cap \Delta_{A}^{\perp}|-|\Delta_{C}^{\perp} \cap \Delta_{B}^{\perp}| +|\Delta_{A}^{\perp}\cap \Delta_{B}^{\perp}\cap \Delta_{C}^{\perp}|
        \\ 
        &=& q^{m-|C|}-q^{m-|A\cup C|}-q^{m-|B \cup C|}+q^{m-|A\cup B \cup C|} \\
    \# \bm{b} &=& |\Delta_{A}^{\perp}| = q^{m-|A|}.
    \end{eqnarray*}

    \item If $\bm{b} \notin \Delta_{A}^{\perp}$ and $\bm{a}+\bm{b} \in \Delta_{A}^{\perp}$, then
    \[
    \wt_{L}(c_{D}(\bm{v})) = (q-1)q^{|A|-1}(2q^{m}-2q^{|B|}-q^{|C|}+2q^{|B \cap C|}).
    \]
    In this case, 
    \begin{eqnarray*}
        \# \bm{a} &=& q^{m-|C|}-q^{m-|A\cup C|}-q^{m-|B \cup C|}+q^{m-|A\cup B \cup C|} \\
    \# \bm{b} &=& q^{m-|A|}.
    \end{eqnarray*}

    \item If $\bm{b} \notin \Delta_{A}^{\perp}$ and $\bm{a}+\bm{b} \notin \Delta_{A}^{\perp}$, then 
    \[
    \wt_{L}(c_{D}(\bm{v})) = 2(q-1)q^{|A|-1}(q^{m}-q^{|B|}-q^{|C|}+2q^{|B \cap C|}).
    \]    
\end{itemize}

\item[(5)] $\bm{a}\notin \Delta_{B}^{\perp},\bm{a}\notin \Delta_{C}^{\perp}$ and $\bm{a}\in\Delta_{B\cap C}^{\perp}$. 
\begin{itemize}
    \item If $\bm{b} \in \Delta_{A}^{\perp}$ and $\bm{a}+\bm{b} \in \Delta_{A}^{\perp}$, then 
    \[
    \wt_{L}(c_{D}(\bm{v})) = 2(q-1)q^{|A|-1}(q^{m}-q^{|B|}-q^{|C|}).
    \]
    In this case, 
    \begin{eqnarray*}
        \# \bm{a} &=& |\Delta_{A}^{\perp}\cap \Delta_{B\cap C}^{\perp}|-|\Delta_{A}^{\perp}\cap \Delta_{B}^{\perp}|-|\Delta_{A}^{\perp}\cap \Delta_{C}^{\perp}|+|\Delta_{A}^{\perp}\cap \Delta_{B}^{\perp}\cap\Delta_{C}^{\perp}|
        \\ 
        &=& q^{m-|A\cup (B\cap C)|}-q^{m-|A\cup B|}-q^{m-|A \cup C|}+q^{m-|A\cup B \cup C|} \\
    \# \bm{b} &=& |\Delta_{A}^{\perp}| = q^{m-|A|}.
    \end{eqnarray*}
    
    \item If $\bm{b} \in \Delta_{A}^{\perp}$ and $\bm{a}+\bm{b} \notin \Delta_{A}^{\perp}$, then 
    \[
    \wt_{L}(c_{D}(\bm{v})) = 2(q-1)q^{|A|-1}(q^{m}-q^{|B|}-q^{|C|}+q^{|B \cap C|}).
    \]
    In this case, 
    \[
\begin{aligned}
\#\bm{a}
&= |\Delta_{B\cap C}^{\perp}|-|\Delta_{B\cap C}^{\perp} \cap \Delta_{A}^{\perp}|-| \Delta_{B}^{\perp}|-|\Delta_{C}^{\perp}|+|\Delta_{A}^{\perp}\cap \Delta_{B}^{\perp}| \\
        &\quad +|\Delta_{A}^{\perp}\cap \Delta_{C}^{\perp}|+|\Delta_{B}^{\perp}\cap \Delta_{C}^{\perp}| -|\Delta_{A}^{\perp}\cap \Delta_{B}^{\perp}\cap \Delta_{C}^{\perp}| \\
        &= q^{m-|B\cap C|}-q^{m-|A\cup (B\cap C)|}-q^{m-|B|}-q^{m-|C|}+q^{m-|A\cup B|} \\
        &\quad +q^{m-|A \cup C|}+q^{m-|B\cup C|}-q^{m-|A\cup B \cup C|} \\
\#\bm{b}
&= |\Delta_{A}^{\perp}| = q^{m-|A|}.
\end{aligned}
\]
    \item If $\bm{b} \notin \Delta_{A}^{\perp}$ and $\bm{a}+\bm{b} \in \Delta_{A}^{\perp}$, then 
    \[
    \wt_{L}(c_{D}(\bm{v})) =  2(q-1)q^{|A|-1}(q^{m}-q^{|B|}-q^{|C|}+q^{|B \cap C|}).
    \]
    In this case, 
    \[
\begin{aligned}
\#\bm{a}
&= q^{m-|B\cap C|}
 - q^{m-|A\cup(B\cap C)|}
 - q^{m-|B|}
 - q^{m-|C|} + q^{m-|A\cup B|} \\
&\quad
 + q^{m-|A\cup C|}
 + q^{m-|B\cup C|}
 - q^{m-|A\cup B\cup C|} \\
\#\bm{b}
&= q^{m-|A|}.
\end{aligned}
\]
    
    \item If $\bm{b} \notin \Delta_{A}^{\perp}$ and $\bm{a}+\bm{b} \notin \Delta_{A}^{\perp}$, then 
    \[
    \wt_{L}(c_{D}(\bm{v})) = 2(q-1)q^{|A|-1}(q^{m}-q^{|B|}-q^{|C|}+2q^{|B \cap C|}).
    \]    
\end{itemize}

\item[(6)] $\bm{a}\notin \Delta_{B\cap C}^{\perp}$. Then 
\[
\wt_{L}(c_{D}(\bm{v})) = 2(q-1)q^{|A|-1}(q^{m}-q^{|B|}-q^{|C|}+2q^{|B \cap C|}).
\]
\end{enumerate}
To determine the frequency corresponding to the weight
\[
w_9 = 2(q-1)q^{|A|-1}(q^{m}-q^{|B|}-q^{|C|}+2q^{|B \cap C|}),
\]
we subtract the sum of the frequencies of all other weights from the total number of codewords $q^{2m}$. Consequently, the frequency is 
\[
q^{2m}+q^{2m-|A|-|A\cup(B\cap C)|}-2q^{2m-|A|-|B\cap C|}.
\]
Since the frequency corresponding to Lee weight $0$ is $q^{m-|A|}$, all the above frequencies must be divided by $q^{m-|A|}$ to obtain the Lee weight distribution. \qed
\end{proof}

\begin{theorem}\label{thm:3}
    Let $m\geq 2$ be a positive integer and let $A,B,C \subseteq [m]$ such that       $|C|\leq |B|$. Define $D = \Delta_{A}+ u (\Delta_{B,C})^{c}.$ Then the code $\mathcal{C}_{D}$ defined in \eqref{eqn:C_D code} is an at most nine-weight linear code of length $q^{|A|} (q^{m}-q^{|B|}-q^{|C|}+q^{|B \cap C|})$ and cardinality $q^{m+|A|}$.
    Its Lee weight distribution is given by 
    \begin{table}[H]
\centering
\resizebox{\textwidth}{!}{
\begin{tabular}{l|l}
\hline
Weight & Frequency \\
\hline
$w_{0}=0$ & $f_{0}=1$ \\
\hline
$w_{1}= 2(q-1)q^{m+|A|-1}$ & $f_{1}=q^{m-|A\cup B\cup C|}-1$ \\
\hline
$w_{2}= (q-1)q^{|A|-1}(2q^{m}-q^{|B|}-q^{|C|}+q^{|B\cap C|})$ & $f_{2}=2(q^{m-|B\cup C|}-q^{m-|A\cup B\cup C|})$ \\
\hline
$w_{3}= 2(q-1)q^{|A|-1}(q^{m}-q^{|C|})$ & $f_{3}=q^{m-|A\cup B|}-q^{m-|A\cup B\cup C|}$ \\
\hline
$w_{4}= (q-1)q^{|A|-1}(2q^{m}-q^{|B|}-2q^{|C|}+q^{|B\cap C|})$ & $f_{4}=2(q^{m-|B|}-q^{m-|A\cup B|}-q^{m-|B\cup C|}+q^{m-|A\cup B\cup C|})$ \\
\hline
$w_{5}= 2(q-1)q^{|A|-1}(q^{m}-q^{|B|})$ & $f_{5}=q^{m-|A\cup C|}-q^{m-|A\cup B\cup C|}$ \\
\hline
$w_{6}= (q-1)q^{|A|-1}(2q^{m}-2q^{|B|}-q^{|C|}+q^{|B\cap C|})$ & $f_{6}=2(q^{m-|C|}-q^{m-|A\cup C|}-q^{m-|B\cup C|}+q^{m-|A\cup B\cup C|})$ \\
\hline
$w_{7}= 2(q-1)q^{|A|-1}(q^{m}-q^{|B|}-q^{|C|})$ & $f_{7}=q^{m-|A\cup(B\cap C)|}-q^{m-|A\cup B|}-q^{m-|A\cup C|}+q^{m-|A\cup B\cup C|}$ \\
\hline
$w_{8}= (q-1)q^{|A|-1}(2q^{m}-2q^{|B|}-2q^{|C|}+q^{|B\cap C|})$ &
$
\begin{aligned}
   f_{8}= 2(& q^{m-|B\cap C|}-q^{m-|A\cup(B\cap C)|}-q^{m-|B|}-q^{m-|C|}+q^{m-|A\cup B|} \\[-1ex]
    & +q^{m-|A\cup C|}+q^{m-|B\cup C|}-q^{m-|A\cup B\cup C|})
\end{aligned}$\\
\hline
$w_{9}= 2(q-1)q^{|A|-1}(q^{m}-q^{|B|}-q^{|C|}+q^{|B\cap C|})$ & $f_{9}=q^{m+|A|}-2q^{m-|B\cap C|}+q^{m-|A\cup (B\cap C)|}$ \\
\hline
\end{tabular}
}
\label{tab:4}
\end{table}
\end{theorem}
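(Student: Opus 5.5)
The plan mirrors the proof of Theorem \ref{thm:2}. Here $D_1=\Delta_A$ and $D_2=(\Delta_{B,C})^c=\mathbb{F}_q^m\setminus(\Delta_B\cup\Delta_C)$. By inclusion–exclusion, using $\Delta_B\cap\Delta_C=\Delta_{B\cap C}$, we have $|D_2|=q^m-q^{|B|}-q^{|C|}+q^{|B\cap C|}$, which gives the stated length $n=q^{|A|}(q^m-q^{|B|}-q^{|C|}+q^{|B\cap C|})$. For the character sum over $D_2$, write it as the sum over $\mathbb{F}_q^m$ minus the sums over $\Delta_B$ and $\Delta_C$, plus the sum over $\Delta_{B\cap C}$ (coefficient $+1$ instead of $+2$). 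Applying Lemma \ref{lem:counting} to each piece, for $v\neq 0$, yields the analogue of \eqref{eqn:sum_Delta_BC}:
\begin{itemize}
\item $q^m-q^{|B|}-q^{|C|}+q^{|B\cap C|}$ if $\bm a=\bm 0$;
\item $-q^{|B|}-q^{|C|}+q^{|B\cap C|}$ if $\bm a\ne\bm 0$ and $\bm a\in\Delta_B^\perp\cap\Delta_C^\perp$;
\item $-q^{|B|}+q^{|B\cap C|}$ if $\bm a\in\Delta_B^\perp\setminus\Delta_C^\perp$;
\item $-q^{|C|}+q^{|B\cap C|}$ if $\bm a\in\Delta_C^\perp\setminus\Delta_B^\perp$;
\item $q^{|B\cap C|}$ if $\bm a\in\Delta_{B\cap C}^\perp$ but $\bm a$ lies in neither $\Delta_B^\perp$ nor $\Delta_C^\perp$;
\item $0$ if $\bm a\notin\Delta_{B\cap C}^\perp$.
\end{itemize}
The $\Delta_A$ factor is exactly \eqref{eqn:sum_Delta_A}, taking the values $2q^{|A|}$, $q^{|A|}$ or $0$.

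Next, substitute into \eqref{eqn:weight_formula}. Since both factors are independent of $v\neq0$, the sum over $v$ contributes a factor $q-1$. Hence
\[
\wt_L(c_D(\bm v))=\tfrac{2(q-1)}{q}n-\tfrac{q-1}{q}\,S_2S_1 ,
\]
where $S_2$ is the $D_2$-sum and $S_1$ is the $\Delta_A$-sum. I then run through the same six cases on $\bm a$ and, within each, the subcases on $(\bm b,\bm a+\bm b)$ relative to $\Delta_A^\perp$.
\begin{itemize}
\item When $S_1=0$, or $S_2=0$ (case (6)), the weight is $w_9=\tfrac{2(q-1)}{q}n$.
\item When $S_1=2q^{|A|}$, the weight is $\tfrac{2(q-1)}{q}(n-q^{|A|}S_2)$, giving $0,w_1,w_3,w_5,w_7$ in cases (1)–(5).
\item When $S_1=q^{|A|}$, the weight is $\tfrac{q-1}{q}(2n-q^{|A|}S_2)$, giving $w_2,w_4,w_6,w_8$ in cases (2)–(5).
\end{itemize}
For example, in case (5) with $S_1=q^{|A|}$ we get $(q-1)q^{|A|-1}(2q^m-2q^{|B|}-2q^{|C|}+q^{|B\cap C|})=w_8$. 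This explains why $w_8$ differs in form from the corresponding weight in Theorem \ref{thm:2}. In case (1), $\bm a=\bm 0$ forces $\bm b$ and $\bm a+\bm b$ to lie in $\Delta_A^\perp$ together or outside it together.

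The counts of $\bm a$ are identical to those in Theorem \ref{thm:2}, because the case partition on $\bm a$ is the same. The counts come from inclusion–exclusion on the subspaces $\Delta_X^\perp=\Delta_{X^c}$, using $\Delta_X^\perp\cap\Delta_Y^\perp=\Delta_{X\cup Y}^\perp$ of size $q^{m-|X\cup Y|}$. The key observation is that $\bm b\in\Delta_A^\perp$ together with $\bm a+\bm b\in\Delta_A^\perp$ forces $\bm a\in\Delta_A^\perp$. Similarly, exactly one of $\bm b$, $\bm a+\bm b$ lying in $\Delta_A^\perp$ forces $\bm a\notin\Delta_A^\perp$, and then each of these two subcases has $q^{m-|A|}$ choices of $\bm b$. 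This is why $A$ enters the $\bm a$-counts through unions with $A$. The frequency of $w_9$ is obtained as the complement, $q^{2m}$ minus all the other counts. Finally, the kernel of $\bm v\mapsto c_D(\bm v)$ consists of $\bm a=\bm0$, $\bm b\in\Delta_A^\perp$, so it has size $q^{m-|A|}$. Dividing every count by $q^{m-|A|}$ gives the frequencies in the table and the cardinality $q^{m+|A|}$.

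The main obstacle is bookkeeping rather than ideas. The inclusion–exclusion count in case (5) requires some care: it needs the set of $\bm a\in\Delta_{B\cap C}^\perp$ that lie outside both $\Delta_B^\perp$ and $\Delta_C^\perp$, intersected with either $\Delta_A^\perp$ or its complement. The complement computation for $f_9$ also needs care, since it is where arithmetic slips are most likely. I would verify it by checking that $\sum f_i=q^{m+|A|}$ and that $\sum w_if_i=2(q-1)q^{m+|A|-1}n$. The second identity is the first-moment identity, valid because each Gray coordinate is a nonzero linear functional on the code (coordinates with $\bm d_1=\bm0$ still give nonzero functionals, since $\bm d_2\neq\bm 0$).
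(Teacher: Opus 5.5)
Your proposal is correct and follows the paper's route. The paper's proof simply says it follows verbatim from Theorem~\ref{thm:2}: the same six-case split on $\bm a$, with the $\Delta_{B\cap C}$-sum now carrying coefficient $1$ instead of $2$. Your character-sum values, the resulting weights (including $w_8$), the $\bm a$-counts and the complement formula for $f_9$ all check out.

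One step you assert without justification, as the paper does, is that the kernel of $\bm v\mapsto c_D(\bm v)$ is exactly $\{\bm a=\bm 0,\ \bm b\in\Delta_A^{\perp}\}$. That requires every other weight occurring with positive frequency to be nonzero, and this fails in degenerate cases the hypotheses allow:
\begin{itemize}
\item $B=[m]$, where $n=0$;
\item $q=2$, $|B|=|C|=m-1$, $B\neq C$, $A\subseteq B\cap C$, where $w_7=0$ but $f_7=1$. For example, $m=2$, $A=\emptyset$, $B=\{1\}$, $C=\{2\}$ gives the code $\{0,u\}$ of size $2\neq q^{m+|A|}$.
\end{itemize}
These cases should be excluded.
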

\begin{proof}
    The proof follows verbatim from Theorem~\ref{thm:2}. \qed
\end{proof}
\section{Structure of the Gray Image \texorpdfstring{$\Phi(\mathcal{C}_D)$}{CD}}\label{sec4}
\subsection{Distance-Optimality}
In this subsection, we investigate the distance-optimality of the Gray images $\Phi(\mathcal{C}_{D})$ of the codes constructed in Section \ref{sec3}.
\begin{theorem}\label{thm:optimal1}
    Let $D$ be as in Theorem~\ref{thm:1} with $A\subseteq  B$ and $f_5 > 0$. Then $\Phi(\mathcal{C}_{D})$ is a $[2(q^{|A|}-1)(q^{|B|}-1), |A|+|B|,2(q-1)(q^{|A|+|B|-1}-q^{|B|-1}-q^{|A|-1})]$-linear code over $\mathbb{F}_{q}$. Moreover, it is distance optimal if    
     \[
|A|\geq
\begin{cases}
2, & \text{if } q=2,\\[2mm]
4, & \text{if } q\geq 5 \text{ and } |A|=|B|,\\[2mm]
3, & \text{otherwise}.
\end{cases}
\]
\begin{proof}
    The length and dimension of $\Phi(\mathcal{C}_{D})$ are clear. The minimum distance is
\[
d=2(q-1)\left(q^{|A|+|B|-1}-q^{|B|-1}-q^{|A|-1}\right).
\]
By Lemma~\ref{lem: optimality}, we have
\[
\sum_{i=0}^{k-1}\left\lceil\frac{d+1}{q^i}\right\rceil\ge n+1
\]
if and only if
 \[
|A|\geq
\begin{cases}
2, & \text{if } q=2,\\[2mm]
4, & \text{if } q\geq 5 \text{ and } |A|=|B|,\\[2mm]
3, & \text{otherwise}.
\end{cases}
\] \qed
\end{proof}
\end{theorem}
\begin{example}
    This example illustrates Theorem \ref{thm:optimal1}. Let $q=2, m=4, A=\{1,2\}\text{ and } B=\{1,2,3\}$. Then $\Phi(\mathcal{C}_{D})$ is a four-weight binary linear code with parameters $[42,5,20]$ and Hamming weight enumerator $x^{42}+15x^{22}y^{20}+12 x^{20}y^{22}+x^{18}y^{24}+3x^{14}y^{28}$. Furthermore, the code is distance-optimal according to the database \cite{Grassl:codetables}.
\end{example}
\begin{theorem}\label{thm:optimal2}
    Let $D$ be as in Theorem~\ref{thm:1a}. Then $\Phi(\mathcal{C}_{D})$ is a $[2(q^{2m}-q^{|A|+|B|}), 2m, 2(q-1)(q^{2m-1}-q^{|A|+|B|-1})]$-linear code over $\mathbb{F}_{q}$. Moreover, it is a near-Griesmer code and distance-optimal if $|A|+|B|>1$.
\end{theorem}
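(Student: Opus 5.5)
The plan is to read off the parameters from Theorem~\ref{thm:1a}, show by a direct Griesmer computation that $n-1=g_q(k,d)$, and then conclude distance-optimality from Lemma~\ref{lem:near-griesmer}. Throughout write $t=|A|+|B|$, so $t<2m$ by hypothesis.

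\textbf{Step 1: parameters.} Since $\Phi$ is $\mathbb{F}_q$-linear and injective, $\Phi(\mathcal{C}_D)$ is an $\mathbb{F}_q$-linear code. Its length is $2|D|=2(q^{2m}-q^{t})$, and since $|\mathcal{C}_D|=q^{2m}$ its dimension is $k=2m$. By the distance-preserving property of $\Phi$, its Hamming weights are the Lee weights $w_1,w_2,w_3$ of Theorem~\ref{thm:1a}. Comparing these, $w_3=2(q-1)(q^{2m-1}-q^{t-1})$ is the smallest. It is attained because
\[
f_3=q^{2m}-2q^{2m-t}+q^{2m-|A|-|A\cup B|}>0 .
\]
This holds since $q^{2m}-2q^{2m-t}\ge q^{2m}-2q^{2m-1}\ge 0$ when $t\ge 1$, while the last term is positive; the case $t=0$ is trivial. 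Hence $d=w_3$.

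\textbf{Step 2: Griesmer sum.} Split
\[
d=2(q-1)q^{2m-1}-2(q-1)q^{t-1}
\]
and evaluate $\sum_{i=0}^{2m-1}\lceil d/q^i\rceil$ term by term.
\begin{itemize}
\item The first part is divisible by $q^i$ for every $i\le 2m-1$, so it contributes $2(q^{2m}-1)$ in total.
\item For $i\le t-1$ the second part is also divisible by $q^i$, contributing $-2(q^t-1)$ in total.
\item For $t\le i\le 2m-1$ (a nonempty range since $t<2m$), we use $\lceil N-x\rceil=N-\lfloor x\rfloor$ with $x=2(q-1)q^{t-1-i}$. At $i=t$ we get $\lfloor 2(q-1)/q\rfloor=1$ for all $q\ge 2$. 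For $i>t$ we have $x\le 2(q-1)/q^2<1$, so $\lfloor x\rfloor=0$.
\end{itemize}
Therefore
\[
g_q(2m,d)=2q^{2m}-2-2q^t+2-1=n-1,
\]
so $\Phi(\mathcal{C}_D)$ is near-Griesmer. This floor bookkeeping, especially at the index $i=t$, is the only delicate point; it is simpler than in Lemma~\ref{lem: optimality} because $d$ has just one subtracted term.

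\textbf{Step 3: optimality.} Write
\[
d=2(q-1)q^{t-1}\bigl(q^{2m-t}-1\bigr).
\]
If $t=|A|+|B|>1$ then $q^{t-1}$ is divisible by $q$, so $q\mid d$. Also $k=2m\ge 4>1$. Lemma~\ref{lem:near-griesmer} then gives distance-optimality.
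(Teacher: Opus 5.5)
Your proof is correct and follows the paper's route: you read off $[n,k,d]$ from Theorem~\ref{thm:1a}, show $g_q(2m,d)=n-1$, and invoke Lemma~\ref{lem:near-griesmer}, and you also write out the Griesmer sum and the check $f_3>0$, which the paper simply asserts. One correction is needed: the case $t=0$ is not trivial but exceptional, since then $f_2=f_3=0$, every nonzero codeword has weight $2(q-1)q^{2m-1}$, and the code is Griesmer rather than near-Griesmer, so both the statement and your Step~1 tacitly require $A$ and $B$ not both empty.
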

\begin{proof}
      The length and dimension of $\Phi(\mathcal{C}_{D})$ are clear. The minimum distance is
\[
d=2(q-1)\left(q^{2m-1}-q^{|A|+|B|-1}\right).
\]
By the Griesmer bound,
\begin{eqnarray*}
    g_q(2m,d) &=&  \sum\limits_{i=0}^{2m-1} \left\lceil \frac{2(q-1)(q^{2m-1}-q^{|A|+|B|-1})}{q^i} \right\rceil \\
    &=& 2(q^{2m}-q^{|A|+|B|})-1.
\end{eqnarray*}
Therefore, $\Phi(\mathcal{C}_{D})$ is a near-Griesmer code, and by Lemma \ref{lem:near-griesmer}, the conclusion follows. \qed
\end{proof}
\begin{example}
    This example illustrates Theorem \ref{thm:optimal2}. Let $q=2, m=3, A=\{1,2\},$ and $B=\{1,3\}$. Then $\Phi(\mathcal{C}_{D})$ is a three-weight linear code with parameters $[96,6,48]$ and Hamming weight enumerator $x^{96}+58x^{48}y^{48}+4 x^{40}y^{56}+x^{32}y^{64}$. Furthermore, the code is distance-optimal according to the database \cite{Grassl:codetables}.
\end{example}
\begin{theorem}\label{thm:optimal3}
    Let $D$ be as in Theorem~\ref{thm:2} with $f_{7}>0$. Then $\Phi(\mathcal{C}_{D})$ is a $[2q^{|A|}(q^{m}-q^{|B|}-q^{|C|}+2q^{|B\cap C|}), m+|A|,2(q-1)q^{|A|-1}(q^{m}-q^{|B|}-q^{|C|})]$-linear code over $\mathbb{F}_{q}$. Moreover, it is distance optimal if 
   \[
|A|+|C|\geq
\begin{cases}
2^{|A|+|B\cap C|+2}, & \text{if } q=2,\\[2mm]
4q^{|A|+|B\cap C|}+2, & \text{if } q\geq 5 \text{ and } |B|=|C|,\\[2mm]
4q^{|A|+|B\cap C|}+1, & \text{otherwise}.
\end{cases}
\]
\begin{proof}
    The length and dimension of $\Phi(\mathcal{C}_{D})$ are clear. The minimum distance is
\[
d=2(q-1)q^{|A|-1}\left(q^{m}-q^{|B|}-q^{|C|}\right).
\]
By Lemma~\ref{lem: optimality}, we have
\[
\sum_{i=0}^{k-1}\left\lceil\frac{d+1}{q^i}\right\rceil\ge n+1
\]
if and only if
\[
|A|+|C|\ge
\begin{cases}
2^{|A|+|B\cap C|+2}, & \text{if } q=2,\\
4q^{|A|+|B\cap C|}+2, & \text{if } q\geq 5 \text{ and } |B|=|C|,\\
4q^{|A|+|B\cap C|}+1, & \text{otherwise}.
\end{cases}
\] \qed
\end{proof}
\end{theorem}
\begin{example}
    This example illustrates Theorem \ref{thm:optimal3}. Let $q=2, m=8, A=\emptyset, B=\{1,2,3,4\}$, and $C=\{5,6,7,8\}$. Then $\Phi(\mathcal{C}_{D})$ is a two-weight binary linear code with parameters $[452,8,224]$ and Hamming weight enumerator $x^{452}+225x^{228}y^{224}+30 x^{212}y^{240}$. Furthermore, the code is distance-optimal.
\end{example}

\begin{theorem}\label{thm:optimal4}
    Let $D$ be as in Theorem~\ref{thm:3} with $f_{7}>0$. Then $\Phi(\mathcal{C}_{D})$ is a $[2q^{|A|}(q^{m}-q^{|B|}-q^{|C|}+q^{|B\cap C|}), m+|A|,2(q-1)q^{|A|-1}(q^{m}-q^{|B|}-q^{|C|})]$-linear code over $\mathbb{F}_{q}$. Moreover it is distance optimal if 
   \[
|A|+|C|\geq
\begin{cases}
2^{|A|+|B\cap C|+1}, & \text{if } q=2,\\[2mm]
2q^{|A|+|B\cap C|}+2, & \text{if } q\geq 5 \text{ and } |B|=|C|,\\[2mm]
2q^{|A|+|B\cap C|}+1, & \text{otherwise}.
\end{cases}
\]
\begin{proof}
    The length and dimension of $\Phi(\mathcal{C}_{D})$ are clear. The minimum distance is
\[
d=2(q-1)q^{|A|-1}\left(q^{m}-q^{|B|}-q^{|C|}\right).
\]
By Lemma~\ref{lem: optimality}, we have
\[
\sum_{i=0}^{k-1}\left\lceil\frac{d+1}{q^i}\right\rceil\ge n+1
\]
if and only if
\[
|A|+|C|\ge
\begin{cases}
2^{|A|+|B\cap C|+1}, & \text{if } q=2,\\
2q^{|A|+|B\cap C|}+2, & \text{if } q\geq 5 \text{ and } |B|=|C|,\\
2q^{|A|+|B\cap C|}+1, & \text{otherwise}.
\end{cases}
\] \qed
\end{proof}
\end{theorem}
\begin{example}
    This example illustrates Theorem \ref{thm:optimal4}. Let $q=2, m=4, A=\emptyset, B=\{1,2\}$, and $C=\{3,4\}$. Then $\Phi(\mathcal{C}_{D})$ is a two-weight linear code with parameters $[18,4,8]$ and Hamming weight enumerator $x^{18}+9x^{10}y^{8}+6 x^{6}y^{12}$. Furthermore, the code is distance-optimal according to the database \cite{Grassl:codetables}.
\end{example}
\subsection{Minimality}
In this subsection, we give the criterion for the Gray images of codes constructed in Section \ref{sec3} to be minimal.
\begin{theorem}\label{minimality}
Let $\mathcal{C}_{D}$ be as defined in \eqref{eqn:C_D code} and $\Phi(\mathcal{C}_{D})$ be its Gray image. Then
\begin{enumerate}
    \item Let \(D\) be as in Theorem \ref{thm:1}. If \(A\subseteq B\) and \(|A|>1\), then $\Phi(\mathcal{C}_{D})$ is minimal.
    \item  Let $D$ be as in Theorem \ref{thm:1a}. If $|A|+|B|<2m-1$, then $\Phi(\mathcal{C}_{D})$ is minimal.
    \item  Let $D$ be as in Theorem \ref{thm:2}. If $q^{|B|}+q^{|C|} < q^{m-1}$, then $\Phi(\mathcal{C}_{D})$ is minimal.
    \item Let $D$ be as in Theorem \ref{thm:3}. If $q^{|B|}+q^{|C|} < q^{m-1}$, then $\Phi(\mathcal{C}_{D})$ is minimal.

\end{enumerate}
\end{theorem}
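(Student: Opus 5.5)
The plan is to apply the Ashikhmin--Barg criterion (Lemma~\ref{minimal_lemma}) to each of the four Gray images. Since the Lee weight distribution of $\mathcal{C}_D$ equals the Hamming weight distribution of $\Phi(\mathcal{C}_D)$, the nonzero weights of $\Phi(\mathcal{C}_D)$ are the nonzero $w_i$ with $f_i>0$ in Corollary~\ref{cor:1}, Theorem~\ref{thm:1a}, Theorem~\ref{thm:2} and Theorem~\ref{thm:3}. In each case I will identify $\wt_{\min}$ and $\wt_{\max}$ (or use a lower bound for $\wt_{\min}$ and an upper bound for $\wt_{\max}$, which suffices) and show that $q\,\wt_{\min}>(q-1)\wt_{\max}$. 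Because the common factor $(q-1)$, and in parts 3 and 4 also $q^{|A|-1}$, cancels, each case reduces to an elementary inequality among powers of $q$.

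\textbf{Part 1.} With $A\subseteq B$, the weights are as in Corollary~\ref{cor:1}. The smallest is $2(q-1)(q^{|A|+|B|-1}-q^{|A|-1}-q^{|B|-1})$ and the largest is $2(q-1)(q^{|A|+|B|-1}-q^{|A|-1})$, since $|B|\ge|A|$. The condition becomes
\[
q\bigl(q^{|A|+|B|-1}-q^{|A|-1}-q^{|B|-1}\bigr)>(q-1)\bigl(q^{|A|+|B|-1}-q^{|A|-1}\bigr),
\]
which is equivalent to $q^{|A|+|B|-1}>q^{|B|}+q^{|A|-1}$. This holds when $|A|\ge2$, because then $q^{|A|+|B|-1}\ge q^{|B|+1}\ge 2q^{|B|}>q^{|B|}+q^{|A|-1}$. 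If $f_5=0$, the minimum is only larger, so the bound still applies.

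\textbf{Part 2.} The minimum weight is $w_3=2(q-1)(q^{2m-1}-q^{|A|+|B|-1})$ and the maximum is $w_1=2(q-1)q^{2m-1}$. The condition is $q^{2m}-q^{|A|+|B|}>(q-1)q^{2m-1}$, that is, $q^{2m-1}>q^{|A|+|B|}$. This is exactly $|A|+|B|<2m-1$.

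\textbf{Parts 3 and 4.} This is the step needing the most care, because there are nine weights and it must be checked which is extremal. The maximum is $w_1=2(q-1)q^{m+|A|-1}$: every other weight, after dividing by $(q-1)q^{|A|-1}$, has the form $2q^m$ minus a nonnegative combination, using $2q^{|B\cap C|}\le q^{|B|}+q^{|C|}$. For the minimum, every weight is bounded below by $w_7=2(q-1)q^{|A|-1}(q^m-q^{|B|}-q^{|C|})$, since each weight equals or exceeds its counterpart in $w_7$ termwise; for $w_2,w_4,w_6,w_8$ this follows from $q^m\ge q^{|B|}+q^{|C|}$. I will use $w_7$ as a lower bound for $\wt_{\min}$ even when $f_7=0$. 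The criterion then reduces to
\[
q\bigl(q^m-q^{|B|}-q^{|C|}\bigr)>(q-1)q^m,
\]
that is, $q^{m-1}>q^{|B|}+q^{|C|}$, which is the stated hypothesis. I will note that this hypothesis also makes every listed weight positive. The same computation applies verbatim to Theorem~\ref{thm:3}, whose weights differ only by replacing $2q^{|B\cap C|}$ with $q^{|B\cap C|}$, so the same $w_1$ and $w_7$ remain the extremes.
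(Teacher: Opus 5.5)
Your proposal is correct and matches the paper's proof: both apply Lemma~\ref{minimal_lemma} with $\wt_{\max}$ and $\wt_{\min}$ read off from Corollary~\ref{cor:1} and Theorems~\ref{thm:1a}, \ref{thm:2}, \ref{thm:3} ($w_1$ and $w_7$ in Parts 3 and 4), and each case reduces to the same inequality in powers of $q$. Your version is slightly more careful than the paper's in two ways: you treat the extreme weights as bounds, so zero frequencies such as $f_5$ or $f_7$ do no harm, and you actually prove $|A|>1 \Rightarrow q^{|A|+|B|-1}>q^{|B|}+q^{|A|-1}$, which the paper only asserts; your appeal to $q^m\ge q^{|B|}+q^{|C|}$ when comparing $w_2,w_4,w_6,w_8$ with $w_7$ is unnecessary, since those comparisons hold termwise without any hypothesis.
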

\begin{proof}
\begin{enumerate}
    \item By Corollary \ref{cor:1}, $\wt_{\min}=(q-1)(2q^{|A|+|B|-1}-2q^{|A|-1}-2q^{|B|-1})$ and $\wt_{\max}=(q-1)(2q^{|A|+|B|-1}-2q^{|A|-1}).$ Hence, by Lemma \ref{minimal_lemma}, 
    \[\frac{\wt_{\min}}{\wt_{\max}}>\frac{q-1}{q}\iff q^{|A|-1}+q^{|B|}<q^{|A|+|B|-1}\impliedby |A|>1.
    \]
    \item By Theorem \ref{thm:1a}, $\wt_{\min}=(q-1)(2q^{2m-1}-2q^{|A|+|B|-1})$ and $\wt_{\max}=(q-1)(2q^{2m-1}).$ Hence, by Lemma \ref{minimal_lemma}, 
    \[\frac{\wt_{\min}}{\wt_{\max}}>\frac{q-1}{q}\iff q^{|A|+|B|} < q^{2m-1}\iff |A|+|B| < 2m-1.
    \]
    \item  By Theorem \ref{thm:2}, $\wt_{\min}=(q-1)(2q^{m+|A|-1}-2q^{|A|+|B|-1}
    -2q^{|A|+|C|-1})$ and $\wt_{\max}=(q-1)(2q^{m+|A|-1}).$ Hence, by Lemma \ref{minimal_lemma}, 
    \[\frac{\wt_{\min}}{\wt_{\max}}>\frac{q-1}{q}\iff q^{|B|}+q^{|C|} < q^{m-1}.\]
    \item The proof of this part follows along the same lines as that of Part~3.
\end{enumerate} \qed
\end{proof}
    
\begin{example}
This example illustrates Theorem~\ref{minimality} $(1)$. Let $ m = 4,A = B= \{1,2\}$. Then the three-weight code $ \Phi(\mathcal{C}_D)$ is a minimal binary linear code with parameters $ [18, 4,8]$ and Hamming weight enumerator $x^{18} +6x^{10}y^{8} + 6 x^{8}y^{10}+3x^{6}y^{12}$, as verified by Magma.
\end{example}
    
\begin{example}
This example illustrates Theorem~\ref{minimality} $(2)$. Let $ m = 3,A =\{1\}$ and $ B= \{1,2\}$. Then the two-weight code $ \Phi(\mathcal{C}_D)$ is a minimal binary linear code with parameters $ [112, 6, 56]$ and Hamming weight enumerator $x^{112}+56x^{56}y^{56}+7x^{48}y^{64}$, as verified by Magma. 
\end{example}
\begin{example}
This example illustrates Theorem~\ref{minimality} $(3)$. Let $ m = 4,A =\{1\},B=\{1\}$ and $ C= \{2\}$. Then the three-weight code $ \Phi(\mathcal{C}_D)$ is a minimal binary linear code with parameters $[56, 5, 26] $ and Hamming weight enumerator $x^{56}+8x^{30}y^{26}+20x^{28}y^{28}+3x^{24}y^{32}$, as verified by Magma. 
\end{example}
\section{Subfield Codes}\label{sec5}
\begin{definition}\cite{generalized}
   Let $R$ be a subring of the ring $S$ (not necessarily commutative) such that $R$ and $S$ have the same unity. Then a surjective homomorphism of left $R$-modules $\Tr_R^S: S\to R$ is called an \textit{$R$-valued trace} of $S$ if $\ker(\Tr_R^S)$ does not contain any non-zero ideals of $S$.
\end{definition}
Let $R$ be a finite commutative $\mathbb{F}_{q}$-algebra with an $\mathbb{F}_{q}$-valued trace, denoted by $\tau$, and let $\mathcal{B}$ be an $\mathbb{F}_{q}$-basis of $R$. Assume that $\mathcal{C}$ is a linear code of length $n$ over $R$ and $G$ is a generator matrix of $\mathcal{C}$. 
 If we replace each entry of the matrix $G$ by its column representation with respect to the basis $\mathcal{B}$, the resulting matrix generates a code known as the \textit{subfield code}, denoted by $\mathcal{C}^{(q)}$.
 
The following result follows from \cite{Bhagat2024Trace}.
\begin{theorem} \label{trace_matrix}
Suppose $\mathcal{C}$ is a linear code of length $n$ over $R$. Suppose $G=(g_{ij})_{k\times n}$ is a generator matrix of $\mathcal{C}$, $\mathcal{B} = \{  {\alpha}_{1},  {\alpha}_{2},\ldots,  {\alpha}_{m}\} $ is an $\mathbb{F}_{q}$-basis of $R$ and $\tau:R \to \mathbb{F}_q$ is an $\mathbb{F}_q$-valued trace of $R$. Then $\mathcal{C}^{(q)}$ is generated by  
\begin{center}
    $G^{(q)} =  
\left[
\begin{tabular}{c}
     $G_{1}^{(q)}$  \\
      $G_{2}^{(q)}$ \\
      $\vdots$ \\
      $G_{k}^{(q)}$
\end{tabular}
\right],$
\end{center}
where 
\begin{center}
    $G_{j}^{(q)} = \left[ 
    \begin{tabular}{c c c c}
         $\tau(g_{j1}  {\alpha}_{1})$& $\tau(g_{j2}  {\alpha}_{1})$ & $\cdots$ & $\tau(g_{jn}  {\alpha}_{1})$ \\
          $\tau(g_{j1}  {\alpha}_{2})$& $\tau(g_{j2}  {\alpha}_{2})$ & $\cdots$ & $\tau(g_{jn}  {\alpha}_{2})$ \\
          $\vdots$ & $\vdots$ & $\ddots$ & $\vdots$ \\
          $\tau(g_{j1}  {\alpha}_{m})$& $\tau(g_{j2}  {\alpha}_{m})$ & $\cdots$ & $\tau(g_{jn}  {\alpha}_{m})$ \\
    \end{tabular}
    \right]$
\end{center}
for all $j\in [k]$.
\end{theorem}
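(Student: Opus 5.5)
The plan is to use the definition of $\mathcal{C}^{(q)}$ directly. Replacing each entry $g_{ji}$ by its coordinate column with respect to $\mathcal{B}$ means that the $m$ rows coming from row $j$ of $G$ are the coordinate rows of $(g_{j1},\dots,g_{jn})$. The task is then to show that these rows span the same $\mathbb{F}_q$-space as the rows of $G_j^{(q)}$. I will pass between coordinates and trace values through the dual basis of $\mathcal{B}$ with respect to the bilinear form $(x,y)\mapsto\tau(xy)$.

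\textbf{Step 1: nondegeneracy of the trace form.} Define $\langle x,y\rangle_\tau:=\tau(xy)$ on $R$; it is a symmetric $\mathbb{F}_q$-bilinear form because $R$ is commutative. Suppose $x\in R$ satisfies $\tau(xy)=0$ for all $y\in R$. Then the ideal $xR$ lies in $\ker\tau$. By the definition of an $\mathbb{F}_q$-valued trace, $\ker\tau$ contains no nonzero ideal, so $xR=0$ and hence $x=0$. The form is therefore nondegenerate. Consequently the map $R\to\mathbb{F}_q^m$, $x\mapsto(\tau(x\alpha_1),\dots,\tau(x\alpha_m))$, is an $\mathbb{F}_q$-linear isomorphism, and there is a dual basis $\{\beta_1,\dots,\beta_m\}$ with $\tau(\alpha_i\beta_l)=\delta_{il}$.

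\textbf{Step 2: change of coordinates.} Write each $x\in R$ as $x=\sum_l x_l\beta_l$. Then $\tau(x\alpha_i)=x_i$, so the trace vector of $x$ is exactly its coordinate vector with respect to the basis $\{\beta_l\}$. The coordinate vectors of $x$ with respect to $\{\alpha_i\}$ and $\{\beta_l\}$ differ by a fixed invertible matrix $P\in GL_m(\mathbb{F}_q)$, independent of $x$. Applying this entrywise to row $j$ of $G$, the $m\times n$ block of $\mathcal{B}$-coordinates equals $P$ times $G_j^{(q)}$ (or $P^{-1}$, depending on the convention). Left multiplication by an invertible matrix preserves the row space, so the two blocks have the same row span over $\mathbb{F}_q$.

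\textbf{Step 3: assembling the rows.} Stacking the blocks over $j\in[k]$ shows that the matrix defining $\mathcal{C}^{(q)}$ and $G^{(q)}$ have the same row space, which proves the claim.

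The main subtlety is that the subfield code might be intended as the $\mathbb{F}_q$-span of the coordinate images of all codewords $\sum_j r_j g_j$ with $r_j\in R$, rather than only of the rows of $G$. To cover that case I would also check that the span of the generator rows is closed under multiplication by elements of $R$. Multiplication by $\alpha_t$ acts on $\tau$-vectors through the identity $\tau(g\alpha_t\alpha_i)=\sum_l c_{ti l}\,\tau(g\alpha_l)$, where $\alpha_t\alpha_i=\sum_l c_{til}\alpha_l$. So the $\tau$-vectors of $\alpha_t g_j$ are $\mathbb{F}_q$-combinations of the rows of $G_j^{(q)}$. Together with $\mathbb{F}_q$-linearity of $\tau$, this gives closure and finishes the argument.
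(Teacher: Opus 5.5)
Your argument is correct and complete. The paper gives no proof of its own here; it only says the result follows from \cite{Bhagat2024Trace}. Your write-up is therefore a self-contained version of what the paper imports: the ring analogue of the Ding--Heng dual-basis argument for $\mathbb{F}_{q^m}/\mathbb{F}_q$.

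What your version makes visible is where the defining axiom of an $\mathbb{F}_q$-valued trace enters. The condition that $\ker\tau$ contains no nonzero ideal is exactly nondegeneracy of $(x,y)\mapsto\tau(xy)$. That nondegeneracy is what makes the coordinate block an invertible transform $P\,G_j^{(q)}$ of the trace block.

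The hypothesis is genuinely needed. On $\mathcal{R}$, the surjection $\tau'\colon a+ub\mapsto a$ has the ideal $u\mathbb{F}_q$ in its kernel. For $G=[u]$, the $\{1,u\}$-coordinate expansion spans $\mathbb{F}_q$, yet both entries $\tau'(u\cdot 1)$ and $\tau'(u\cdot u)$ of the would-be $G^{(q)}$ vanish.

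Your final paragraph is also worth keeping. It shows that the row space of $G^{(q)}$ contains the trace expansions of every codeword, and by Step 2 therefore also their coordinate expansions. Hence $\mathcal{C}^{(q)}$ does not depend on the chosen generator matrix or basis, a fact the paper's definition of the subfield code uses tacitly. You can get that closure more directly: write $r_j=\sum_i r_{ji}\alpha_i$ with $r_{ji}\in\mathbb{F}_q$ and expand $\tau\bigl(\sum_j r_j g_j\bigr)=\sum_{j,i} r_{ji}\,\tau(\alpha_i g_j)$ by $\mathbb{F}_q$-linearity.
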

It is easy to see that the map $\tau : \mathcal{R} \to \mathbb{F}_{q}$ defined by $a+ub \mapsto a+b$ is an $\mathbb{F}_{q}$-valued trace of $\mathcal{R}$ by \cite{Bhagat2024Trace}.
\begin{theorem}
    Let $\mathcal{B}=\{1, u\}$ be an $\mathbb{F}_{q}$-basis  of $\mathcal{R}$. Suppose $\mathcal{C}$ is a linear code over $\mathcal{R}$ and $G =  G_{1}+ uG_{2}$ is its generator matrix, where $G_{i} \in M_{k\times n}(\mathbb{F}_{q})$ for $i=1,2.$ Then $\mathcal{C}^{(q)}$ is a linear code over $\mathbb{F}_{q}$ of length $n$ and is generated by $$G^{(q)} = 
    \left[
    \begin{tabular}{c}
         $G_{1}+G_{2}$ \\
         $G_{1}$
    \end{tabular}
    \right]$$
Furthermore, if $D= D_{1}+ uD_{2}$, then $\mathcal{C}_{D}^{(q)} = \mathcal{C}_{D^{(q)}}$, where $$D^{(q)} = \{ (\bm{d}_{1}+\bm{d}_{2},\bm{d}_{1}) : \bm{d}_{i} \in D_{i}, 1\leq i\leq 2\}.$$
\end{theorem}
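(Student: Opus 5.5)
We apply Theorem~\ref{trace_matrix} with $k$ rows and the basis $\mathcal{B}=\{\alpha_1,\alpha_2\}=\{1,u\}$, using the trace $\tau(a+ub)=a+b$. For each row $j$ of $G=G_1+uG_2$, write the entries as $g_{ji}=a_{ji}+ub_{ji}$ with $a_{ji}$ the $(j,i)$ entry of $G_1$ and $b_{ji}$ that of $G_2$.

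\textbf{Computing the blocks $G_j^{(q)}$.} For $\alpha_1=1$ we have $\tau(g_{ji}\cdot 1)=a_{ji}+b_{ji}$. For $\alpha_2=u$ we have $g_{ji}u=a_{ji}u$, since $u^2=0$, so $\tau(g_{ji}u)=a_{ji}$. Hence the $j$-th block $G_j^{(q)}$ has first row equal to the $j$-th row of $G_1+G_2$ and second row equal to the $j$-th row of $G_1$. Stacking these blocks and permuting rows, which does not change the row space, gives the generator matrix $\left[\begin{smallmatrix} G_1+G_2\\ G_1\end{smallmatrix}\right]$. The length stays $n$ because each column of $G$ yields one column of $G^{(q)}$. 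Linearity over $\mathbb{F}_q$ is automatic, since the code is a row space.

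\textbf{The defining-set statement.} Since $\mathcal{C}_D=\{(\langle \bm{v},\bm{d}\rangle)_{\bm{d}\in D}\}$, a generator matrix of $\mathcal{C}_D$ is the $m\times n$ matrix $G$ whose columns are the vectors $\bm{d}=\bm{d}_1+u\bm{d}_2\in D$. Indeed, the rows of $G$ are the codewords $c_D(\bm{e}_i)$, and they generate $\mathcal{C}_D$ as an $\mathcal{R}$-module by linearity in $\bm{v}$. Thus $G_1$ has columns $\bm{d}_1$ and $G_2$ has columns $\bm{d}_2$. By the first part, $\mathcal{C}_D^{(q)}$ is generated by the $2m\times n$ matrix whose column indexed by $\bm{d}$ is $(\bm{d}_1+\bm{d}_2,\bm{d}_1)^T\in\mathbb{F}_q^{2m}$.

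\textbf{Identifying the result with $\mathcal{C}_{D^{(q)}}$.} A matrix whose columns are the elements of $D^{(q)}$ generates exactly $\{(\langle \bm{w},\bm{x}\rangle_q)_{\bm{x}\in D^{(q)}}:\bm{w}\in\mathbb{F}_q^{2m}\}$, which is the field version of the $\mathcal{C}_D$-construction, namely $\mathcal{C}_{D^{(q)}}$. This gives $\mathcal{C}_D^{(q)}=\mathcal{C}_{D^{(q)}}$.

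\textbf{Main obstacle: the indexing of $D^{(q)}$.} The only subtle point is that $D^{(q)}$ should be read as a multiset indexed by $D$, so that the length remains $n=|D|$. Here this is harmless, because the map $(\bm{d}_1,\bm{d}_2)\mapsto(\bm{d}_1+\bm{d}_2,\bm{d}_1)$ is injective (an invertible linear map on $\mathbb{F}_q^{2m}$). Therefore the columns of $G^{(q)}$ are distinct and are in bijection with $D$.
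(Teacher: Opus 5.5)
Your proposal is correct and follows the paper's proof: you compute $\tau(g)=a+b$ and $\tau(gu)=a$ for $g=a+ub$ (using $u^2=0$) and read off the blocks from Theorem~\ref{trace_matrix}. Your extra verification of $\mathcal{C}_D^{(q)}=\mathcal{C}_{D^{(q)}}$ makes explicit what the paper leaves as ``the result follows''; it uses the matrix whose columns are the $\bm{d}\in D$ and the injectivity of $(\bm{d}_1,\bm{d}_2)\mapsto(\bm{d}_1+\bm{d}_2,\bm{d}_1)$.
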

\begin{proof}
    Let $ {g}_{jk} =  {g}_{jk}^{(1)} + u {g}_{jk}^{(2)} \in \mathcal{R}$, where $ {g}_{jk}^{(1)}, {g}_{jk}^{(2)} \in \mathbb{F}_{q}$. Then
    \begin{eqnarray*}
        \tau({g}_{jk}) &=& {g}_{jk}^{(1)} + {g}_{jk}^{(2)},  \\
        \tau({g}_{jk} u) &=& {g}_{jk}^{(1)}.
    \end{eqnarray*}
    Using Theorem \ref{trace_matrix}, the result follows. \qed
    \end{proof}
Consider the map
\begin{align*}
    c_{D}^{(q)}: (\mathbb{F}_{q}^{m})^{2} &\longrightarrow \mathcal{C}_{D}^{(q)}\\
    \bm{v} &\longmapsto (\bm{v}\cdot \bm{d})_{\bm{d}\in D^{(q)}},
\end{align*}
where $\bm{v}\cdot \bm{d}= (\bm{a}, \bm{b})\cdot ( \bm{d}_{1}+ \bm{d}_{2}, \bm{d}_{1})$, for $\bm{v}=(\bm{a}, \bm{b})\in(\mathbb{F}_{q}^{m})^2$, $\bm{d}=( \bm{d}_{1}+ \bm{d}_{2}, \bm{d}_{1})\in D^{(q)}$, and $\cdot$ denotes the Euclidean inner product.
Clearly, the map $c_{D}^{(q)}$ is a surjective linear transformation. Thus, 
$$\mathcal{C}_{D}^{(q)} = \{ c_{D}^{(q)}(\bm{a}, \bm{b}) = ((\bm{a}, \bm{b})\cdot ( \bm{d}_{1}+ \bm{d}_{2}, \bm{d}_{1}))_{ \bm{d} \in D^{(q)}} : \bm{a}, \bm{b} \in \mathbb{F}_{q}^{m}\}.$$

Let $n=|D|$. Now
\begin{eqnarray} \label{weight_equation_1}
    \wt_{H}\left(c_{D}^{(q)}(\bm{a}, \bm{b})\right) &=& \wt_{H}\left( ((\bm{a}, \bm{b})\cdot ( \bm{d}_{1}+ \bm{d}_{2}, \bm{d}_{1}))_{ \bm{d}_{1}\in D_{1}, \bm{d}_{2}\in D_{2}}\right) \nonumber\\
    &=& \wt_{H}\left( (\langle\bm{a}+\bm{b},  \bm{d}_{1}\rangle_q + \langle\bm{a},  \bm{d}_{2}\rangle_q )_{ \bm{d}_{1}\in D_{1}, \bm{d}_{2}\in D_{2}} \right) \nonumber \\
    &=& n-\frac{1}{q}\sum\limits_{v\in \mathbb{F}_{p}^{s}}\sum\limits_{\bm{d}_{1}\in D_{1}}\sum\limits_{\bm{d}_{2}\in D_{2}}\zeta_{p}^{\langle v, \langle\bm{a}+\bm{b},\bm{d}_{1}\rangle_{q}+\langle\bm{a},\bm{d}_{2}\rangle_{q}\rangle_{p}} \nonumber \\
    &=&  n-\frac{n}{q}-\frac{1}{q}\sum\limits_{v\in \mathbb{F}_{p}^{s}\setminus\{0\}}\sum\limits_{\bm{d}_{1}\in D_{1}}\zeta_{p}^{\langle v, \langle\bm{a}+\bm{b},\bm{d}_{1}\rangle_{q}\rangle_{p}}\sum\limits_{\bm{d}_{2}\in D_{2}}\zeta_{p}^{\langle v, \langle\bm{a},\bm{d}_{2}\rangle_{q}\rangle_{p}}. \nonumber
\end{eqnarray}
\begin{theorem}\label{thm:subfield1}
    Let $D$ be as in Theorem \ref{thm:1} with $|A|\leq |B|$. Then the subfield code $\mathcal{C}_{D}^{(q)}$ is an at most three-weight linear code over $\mathbb{F}_{q}$ with parameters $[(q^{|A|}-1)(q^{|B|}-1), |A|+|B|, (q-1)(q^{|A|+|B|-1}-q^{|B|-1}-q^{|A|-1})]$. Its Hamming weight distribution is given by
        \begin{table}[H]
\centering
\begin{tabular}{l|l}
\hline
Weight & Frequency \\
\hline
$w_0=0$ & $f_0=1$ \\
\hline
$w_{1}= (q-1)q^{|A|-1}(q^{|B|}-1)$ & $f_1=q^{|A|}-1$ \\
\hline
$w_{2}= (q-1)(q^{|A|}-1)q^{|B|-1}$ & $f_2=q^{|B|}-1$ \\
\hline
$w_{3}= (q-1)(q^{|A|+|B|-1}-q^{|A|-1}-q^{|B|-1})$ & $f_3=q^{|A|+|B|}-q^{|B|}-q^{|A|}+1$ \\
\hline
\end{tabular}

\label{tab:5}
\end{table}
    Moreover, if $|A|=|B|$, then $\mathcal{C}_{D}^{(q)}$ is a near-Griesmer code and, for $|A|>1$, is distance-optimal. If $|A|<|B|$, then $\mathcal{C}_{D}^{(q)}$ is a Griesmer code and hence distance-optimal.
\end{theorem}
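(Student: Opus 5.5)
We substitute $D_1=\Delta_A^*$ and $D_2=\Delta_B^*$ into the Hamming weight formula for $c_D^{(q)}(\bm{a},\bm{b})$ displayed just before the statement. By Lemma~\ref{lem:counting}, for every $v\neq 0$ we have
\[
\sum_{\bm{d}_1\in\Delta_A^*}\zeta_p^{\langle v,\langle \bm{a}+\bm{b},\bm{d}_1\rangle_q\rangle_p}=
\begin{cases}
q^{|A|}-1, & \text{if } \bm{a}+\bm{b}\in\Delta_A^{\perp},\\
-1, & \text{otherwise},
\end{cases}
\]
and the sum over $\Delta_B^*$ is given by \eqref{eqn:sum_Delta_B*}. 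Since neither value depends on $v$, the weight equals $n-\frac{n}{q}-\frac{q-1}{q}XY$, where $X\in\{q^{|A|}-1,-1\}$ and $Y\in\{q^{|B|}-1,-1\}$.

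We then run through the four cases, writing $\alpha=|A|$ and $\beta=|B|$.
\begin{itemize}
\item If $\bm{a}+\bm{b}\in\Delta_A^\perp$ and $\bm{a}\in\Delta_B^\perp$, the weight is $0$.
\item If $\bm{a}+\bm{b}\notin\Delta_A^\perp$ and $\bm{a}\in\Delta_B^\perp$, the weight is $\frac{q-1}{q}(q^\alpha-1)q^\beta$. This is $w_2$ after simplification.
\item If $\bm{a}+\bm{b}\in\Delta_A^\perp$ and $\bm{a}\notin\Delta_B^\perp$, the weight is $\frac{q-1}{q}q^\alpha(q^\beta-1)=w_1$.
\item If both conditions fail, the weight is $\frac{q-1}{q}\bigl((q^\alpha-1)(q^\beta-1)-1\bigr)=w_3$.
\end{itemize}
The map $(\bm{a},\bm{b})\mapsto(\bm{a},\bm{a}+\bm{b})$ is a bijection, so the counts factor as products. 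The kernel has size $q^{m-\beta}q^{m-\alpha}$, and dividing the counts by it gives the frequencies $q^\alpha-1$, $q^\beta-1$ and $(q^\alpha-1)(q^\beta-1)$. In particular the dimension is $|A|+|B|$.

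Because $|A|\le|B|$, the minimum weight is $w_3$. All three frequencies are positive whenever $|A|,|B|\ge1$; I will note this as an implicit assumption, since otherwise the length is $0$.

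For optimality we compute $g_q(\alpha+\beta,d)$ with $d=(q-1)(q^{\alpha+\beta-1}-q^{\alpha-1}-q^{\beta-1})$. We split the sum at $i=\alpha-1$ and at $i=\beta-1$, in the same way as in the proof of Lemma~\ref{lem: optimality} but without the factor $2$ and without the $+1$.
\begin{itemize}
\item For $i\le \alpha-1$ every term is an integer, and these terms sum to $q^{\alpha+\beta}-q^\beta-q^\alpha+q^{\beta-\alpha}-q^{\beta}\cdot(\dots)$; we telescope this exactly.
\item For $\alpha\le i\le\beta-1$ each ceiling rounds $-(q-1)q^{\alpha-1}/q^i$ up to $0$.
\item For $i\ge\beta$ the fractional part $(q-1)(q^{\alpha-1}+q^{\beta-1})/q^i$ appears. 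When $\alpha<\beta$ this is less than $1$ and the ceiling rounds it to $0$. When $\alpha=\beta$ it equals $2(q-1)q^{\alpha-1}/q^i$, which exceeds $1$ at $i=\beta$ once $q\ge3$, or when $q=2$.
\end{itemize}
Carried out carefully, I expect $g_q=n$ when $|A|<|B|$, which gives a Griesmer code, and $g_q=n-1$ when $|A|=|B|$.

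In the near-Griesmer case $|A|=|B|>1$, the minimum distance $d=(q-1)(q^{2\alpha-1}-2q^{\alpha-1})$ is divisible by $q$. Lemma~\ref{lem:near-griesmer} then gives distance-optimality, using $k=2\alpha>1$.

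The main obstacle is this Griesmer-sum bookkeeping, in particular the boundary term at $i=\beta$ in the case $\alpha=\beta$. It has to produce exactly a deficit of one for every $q$, including $q=2$, where $2(q-1)q^{\alpha-1}/q^{\alpha}=1$ exactly. I would verify that term separately for $q=2$ and for $q\ge3$ before assembling the totals.
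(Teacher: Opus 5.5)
Your plan follows the paper's proof step for step: Lemma~\ref{lem:counting} for both sums, the four cases according to $\bm{a}\in\Delta_B^{\perp}$ and $\bm{a}+\bm{b}\in\Delta_A^{\perp}$, division by the kernel size $q^{2m-|A|-|B|}$, then the Griesmer sum and Lemma~\ref{lem:near-griesmer}. However, you have interchanged the weights of the two mixed cases. If $\bm{a}+\bm{b}\notin\Delta_A^{\perp}$ and $\bm{a}\in\Delta_B^{\perp}$, then $X=-1$ and $Y=q^{\beta}-1$, so $n-XY=(q^{\alpha}-1)(q^{\beta}-1)+(q^{\beta}-1)=q^{\alpha}(q^{\beta}-1)$. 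The weight is therefore $(q-1)q^{\alpha-1}(q^{\beta}-1)=w_1$, not $w_2$. Symmetrically, the case $\bm{a}+\bm{b}\in\Delta_A^{\perp}$, $\bm{a}\notin\Delta_B^{\perp}$ gives $w_2$. You can check this directly: for $\bm{a}\in\Delta_B^{\perp}$ every coordinate equals $\langle\bm{a}+\bm{b},\bm{d}_1\rangle_q$, independent of $\bm{d}_2$, so the weight is $(q^{\alpha}-q^{\alpha-1})(q^{\beta}-1)$.

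The swap matters because the frequencies are tied to the cases. The first mixed case has $q^{m-\beta}(q^{m}-q^{m-\alpha})$ pairs, i.e.\ frequency $q^{\alpha}-1$. Your labelling would therefore attach $q^{\alpha}-1$ to $w_2$ and $q^{\beta}-1$ to $w_1$, contradicting the table whenever $|A|\neq|B|$. Once corrected, the distribution is as stated. Note also that $w_3$ is the minimum weight without using $|A|\le|B|$, since $w_1-w_3=(q-1)q^{\beta-1}$ and $w_2-w_3=(q-1)q^{\alpha-1}$; that hypothesis only matters for the Griesmer dichotomy.

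Your Griesmer outline is correct and closes exactly. The ranges $0\le i\le\alpha-1$, $\alpha\le i\le\beta-1$ and $\beta\le i\le\alpha+\beta-1$ contribute, respectively,
$q^{\alpha+\beta}-2q^{\beta}-q^{\alpha}+q^{\beta-\alpha}+1$,
$q^{\beta}-q^{\alpha}-q^{\beta-\alpha}+1$, and
$q^{\alpha}-1+\varepsilon$.
Here $\varepsilon=0$ if $\alpha<\beta$ and $\varepsilon=-1$ if $\alpha=\beta$. In the latter case, at $i=\beta$ the subtracted quantity is $2(q-1)/q\in[1,2)$ for every $q$, and for $i>\beta$ it is at most $2(q-1)/q^{2}<1$. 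Summing gives $g_q=n+\varepsilon$.

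Finally, your implicit assumption $|A|,|B|\ge 1$ is not sufficient. The kernel count presupposes $w_3\neq 0$. Writing $x=q^{\alpha-1}$ and $y=q^{\beta-1}$ gives $w_3=(q-1)(qxy-x-y)$, which vanishes precisely when $q=2$ and $|A|=|B|=1$. There the subfield code is $[1,1,1]$, so the claimed dimension $2$ fails. This case must be excluded; the paper's statement overlooks it as well.
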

\begin{proof}
    Here $D_{1}=\Delta_{A}^*, D_{2} = \Delta_{B}^*$. Consequently, the length of the code $\mathcal{C}_{D}$ is $n = |D| = (q^{|A|}-1)(q^{|B|}-1).$ 
    Now, 
    \[
        \sum\limits_{\bm{d}_{2}\in \Delta_{B}^*}\zeta_{p}^{\langle v, \langle\bm{a},\bm{d}_{2}\rangle_{q}\rangle_{p}}
        = \sum\limits_{\bm{d}_{2}\in \Delta_{B}}\zeta_{p}^{\langle v, \langle\bm{a},\bm{d}_{2}\rangle_{q}\rangle_{p}} -  1,
    \]
    and,
     \[ \sum\limits_{\bm{d}_{1}\in \Delta_{A}^*}\zeta_{p}^{\langle v, \langle\bm{a}+\bm{b},\bm{d}_{1}\rangle_{q}\rangle_{p}}=  \sum\limits_{\bm{d}_{1}\in \Delta_{A}}\zeta_{p}^{\langle v, \langle\bm{a}+\bm{b},\bm{d}_{1}\rangle_{q}\rangle_{p}}-1.
 \]
    Then by Lemma \ref{lem:counting}, for $v\in\mathbb{F}_{p}^{s}\setminus\{0\}$, we have
    \[
    \sum\limits_{\bm{d}_{2}\in \Delta_{B}^*}\zeta_{p}^{\langle v, \langle\bm{a},\bm{d}_{2}\rangle_{q}\rangle_{p}} = 
     \begin{cases}
         q^{|B|}-1, & \text{if } \bm{a} \in \Delta_B^{\perp}; \\
         -1,  & \text{if } \bm{a} \notin \Delta_B^{\perp},
     \end{cases}
    \]
   and,
   \[
   \sum\limits_{\bm{d}_{1}\in \Delta_{A}^*}\zeta_{p}^{\langle v, \langle\bm{a}+\bm{b},\bm{d}_{1}\rangle_{q}\rangle_{p}} = 
     \begin{cases}
         q^{|A|}-1, & \text{if } \bm{a}+\bm{b} \in \Delta_A^{\perp}; \\
         -1,  & \text{if } \bm{a}+\bm{b} \notin \Delta_A^{\perp}.
     \end{cases}
   \]
   There are 2 possible cases which are as follows:
   \begin{enumerate}
       \item[(1)]  $\bm{a}\in \Delta_B^{\perp}$.
       \begin{enumerate}
           \item If $\bm{a}+\bm{b}\in\Delta_{A}^{\perp},$ then
         \[
    \wt_{H}(c_{D}^{(q)}(\bm{v})) = 0.
    \]
    \begin{itemize}
        \item If $\bm{b}\in\Delta_{A}^{\perp}.$ In this case,
         \begin{eqnarray*}
        \# \bm{a} 
        &=& q^{m-|A\cup B|}\\
    \# \bm{b} &=& q^{m-|A|}.
    \end{eqnarray*}
    \item If $\bm{b}\notin\Delta_{A}^{\perp}.$ In this case,
         \begin{eqnarray*}
        \# \bm{a} 
        &=&q^{m-|B|}-q^{m-|A\cup B|}\\
    \# \bm{b} &=& q^{m-|A|}.
    \end{eqnarray*}
    \end{itemize}
    \item If $\bm{a}+\bm{b}\notin\Delta_{A}^{\perp},$ then
         \[
    \wt_{H}(c_{D}^{(q)}(\bm{v})) = (q-1)(q^{|A|+|B|-1}-q^{|A|-1}).
    \]
      \begin{itemize}
        \item $\bm{b}\in\Delta_{A}^{\perp}.$ In this case,
         \begin{eqnarray*}
        \# \bm{a} 
        &=& q^{m-|B|}-q^{m-|A\cup B|}\\
    \# \bm{b} &=& q^{m-|A|}.
    \end{eqnarray*}
    \item $\bm{b}\notin\Delta_{A}^{\perp}.$ 
    \begin{itemize}
        \item $\bm{a}\in\Delta_{A}^{\perp}.$ In this case, \begin{eqnarray*}
        \# \bm{a} 
        &=&q^{m-|A\cup B|}\\
    \# \bm{b} &=& q^{m}-q^{m-|A|}.
    \end{eqnarray*}
     \item $\bm{a}\notin\Delta_{A}^{\perp}.$ In this case, \begin{eqnarray*}
        \# \bm{a} 
        &=&q^{m-|B|}-q^{m-|A\cup B|}\\
    \# \bm{b} &=& q^{m}-2q^{m-|A|}.
    \end{eqnarray*}
    \end{itemize}
        
    \end{itemize}
       \end{enumerate}
        \item[(2)]  $\bm{a}\notin \Delta_B^{\perp}$.
       \begin{enumerate}
           \item If $\bm{a}+\bm{b}\in\Delta_{A}^{\perp},$ then
         \[
    \wt_{H}(c_{D}^{(q)}(\bm{v})) = (q-1)(q^{|A|+|B|-1}-q^{|B|-1}).
    \]
    \begin{itemize}
        \item If $\bm{b}\in\Delta_{A}^{\perp}.$ In this case,
         \begin{eqnarray*}
        \# \bm{a} 
        &=& q^{m-|A|}-q^{m-|A\cup B|}\\
    \# \bm{b} &=& q^{m-|A|}.
    \end{eqnarray*}
    \item If $\bm{b}\notin\Delta_{A}^{\perp}.$ In this case,
         \begin{eqnarray*}
        \# \bm{a} 
        &=&q^{m}-q^{m-|A|}-q^{m-|B|}+q^{m-|A\cup B|}\\
    \# \bm{b} &=& q^{m-|A|}.
    \end{eqnarray*}
    \end{itemize}
    \item If $\bm{a}+\bm{b}\notin\Delta_{A}^{\perp},$ then
         \[
    \wt_{H}(c_{D}^{(q)}(\bm{v})) = (q-1)(q^{|A|+|B|-1}-q^{|B|-1}-q^{|A|-1}).
    \]
       \end{enumerate}
   \end{enumerate}
To determine the frequency corresponding to the weight
\[
(q-1)(q^{|A|+|B|-1}-q^{|B|-1}-q^{|A|-1}),
\]
we subtract the sum of the frequencies of all other weights from the total number of codewords $q^{2m}$. Consequently, the frequency is 
\[
q^{2m}-q^{2m-|A|}-q^{2m-|B|}+q^{2m-|A|-|B|}.
\]
Since the frequency corresponding to Lee weight $0$ is $q^{2m-|A|-|B|}$, all the above frequencies must be divided by $q^{2m-|A|-|B|}$ to obtain the Hamming weight distribution.

Here $n=q^{|A|+|B|}-q^{|A|}-q^{|B|}+1$, $k=|A|+|B|$ and $d=(q-1)(q^{|A|+|B|-1}-q^{|B|-1}-q^{|A|-1})$.
By the Griesmer bound,
\begin{eqnarray*}
    g_q(k,d) &=&  \sum\limits_{i=0}^{|A|+|B|-1} \left\lceil \frac{(q-1)(q^{|A|+|B|-1}-q^{|B|-1}-q^{|A|-1})}{q^i} \right\rceil \\
    &=& \begin{cases}
        q^{|A|+|B|}-q^{|A|}-q^{|B|} , & \text{if } |A|=|B|; \\
         q^{|A|+|B|}-q^{|A|}-q^{|B|}+1,  & \text{if } |A|<|B|.
    \end{cases}
\end{eqnarray*}
Thus by Lemma \ref{lem:near-griesmer}, the conclusion follows. \qed
\end{proof}
\begin{example}
    This example illustrates Theorem \ref{thm:subfield1}. Let $q=3, m=4, A=\{1\}\text{ and } B=\{1,2\}$. Then $\mathcal{C}_{D}^{(3)}$ is a three-weight ternary linear code with parameters $[16,3,10]$ and Hamming weight enumerator $x^{16}+16x^{6}y^{10}+8 x^{4}y^{12}+2y^{16}$. Furthermore, the code is distance-optimal according to the database \cite{Grassl:codetables}.
\end{example}
\begin{theorem}\label{thm:subfield2}
     Let $D$ be as in Theorem \ref{thm:1a}. Then the subfield code $\mathcal{C}_{D}^{(q)}$ is a two-weight linear code over $\mathbb{F}_{q}$ with parameters $[q^{2m}-q^{|A|+|B|}, 2m, (q-1)(q^{2m-1}-q^{|A|+|B|-1})]$. Its Hamming weight distribution is given by
        \begin{table}[H]
\centering
\begin{tabular}{l|l}
\hline
Weight & Frequency \\
\hline
$0$ & $1$ \\
\hline
$(q-1)q^{2m-1}$ & $q^{2m-|A|-|B|}-1$ \\
\hline
$(q-1)(q^{2m-1}-q^{|A|+|B|-1})$ & $q^{2m}-q^{2m-|A|-|B|}$\\
\hline
\end{tabular}
\label{tab:8}
\end{table}
Moreover, $\mathcal{C}_{D}^{(q)}$ is a Griesmer code and hence distance-optimal.
\end{theorem}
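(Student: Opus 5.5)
We follow the template of Theorem~\ref{thm:subfield1}, now with $D=(\Delta_A+u\Delta_B)^c$. A direct character sum over $D_1\times D_2$ does not apply, because $D$ is not a product set. Instead we use the decomposition from the proof of Theorem~\ref{thm:1a}:
\[
D=\left(\Delta_{A}^{c}+u\mathbb{F}_{q}^{m}\right)\bigsqcup\left(\Delta_{A}+u\Delta_{B}^{c}\right).
\]
Equivalently, the sum over $D$ is the sum over $\mathbb{F}_q^m+u\mathbb{F}_q^m$ minus the sum over $\Delta_A+u\Delta_B$. The $D^{(q)}$ weight formula preceding the statement then gives, with $n=q^{2m}-q^{|A|+|B|}$,
\[
\wt_H\bigl(c_D^{(q)}(\bm a,\bm b)\bigr)=n-\frac{n}{q}-\frac1q\sum_{v\ne 0}\Bigl(\sum_{\bm d_1\in\mathbb{F}_q^m}\zeta_p^{\langle v,\langle \bm a+\bm b,\bm d_1\rangle_q\rangle_p}\sum_{\bm d_2\in\mathbb{F}_q^m}\zeta_p^{\langle v,\langle \bm a,\bm d_2\rangle_q\rangle_p}-\sum_{\bm d_1\in\Delta_A}\cdots\sum_{\bm d_2\in\Delta_B}\cdots\Bigr).
\]

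Each inner sum is evaluated with Lemma~\ref{lem:counting}. The first product equals $q^{2m}$ if $\bm a=\bm b=\bm 0$ and $0$ otherwise. The second product equals $q^{|A|+|B|}$ if $\bm a\in\Delta_B^\perp$ and $\bm a+\bm b\in\Delta_A^\perp$, and $0$ otherwise. Summing over the $q-1$ nonzero $v$ gives three cases.
\begin{itemize}
\item If $(\bm a,\bm b)=(\bm 0,\bm 0)$, the weight is $0$.
\item If $(\bm a,\bm b)\neq(\bm 0,\bm 0)$ with $\bm a\in\Delta_B^\perp$ and $\bm a+\bm b\in\Delta_A^\perp$, the weight is $n-\frac nq+\frac{(q-1)q^{|A|+|B|}}{q}=(q-1)q^{2m-1}$.
\item Otherwise, the weight is $\frac{(q-1)n}{q}=(q-1)(q^{2m-1}-q^{|A|+|B|-1})$.
\end{itemize}
The pairs in the second case number $|\Delta_B^\perp|\cdot|\Delta_A^\perp|-1=q^{2m-|A|-|B|}-1$, since $\bm a$ ranges over $\Delta_B^\perp$ and $\bm b$ over a coset of $\Delta_A^\perp$. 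The third weight therefore has frequency $q^{2m}-q^{2m-|A|-|B|}$.

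Only $(\bm 0,\bm 0)$ gives weight $0$, so the map $c_D^{(q)}$ is injective and the dimension is $2m$. No division by a kernel size is needed, in contrast to the earlier theorems. The hypothesis $|A|+|B|<2m$ guarantees $n>0$ and that the second weight is positive.

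For the Griesmer property, set $d=(q-1)(q^{2m-1}-q^{|A|+|B|-1})$ and $t=|A|+|B|$, and compute $\sum_{i=0}^{2m-1}\lceil d/q^i\rceil$. For $i\le t-1$ the term $d/q^i$ is an integer, and these terms sum to
\[
(q-1)\Bigl(\sum_{i=0}^{t-1}q^{2m-1-i}-t\,q^{t-1}\Bigr)+\dots
\]
It is cleaner to split as $d/q^i=(q-1)q^{2m-1-i}-(q-1)q^{t-1-i}$.
\begin{itemize}
\item For $i\le t-1$ both parts are integers.
\item For $i\ge t$ the second part lies in $(0,1)$, so the ceiling is $(q-1)q^{2m-1-i}$. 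This is where the main care is needed: $(q-1)q^{t-1-i}<1$ requires $i\ge t$, and it holds exactly there.
\end{itemize}
Summing, the total is
\[
(q-1)\frac{q^{2m}-1}{q-1}-(q-1)\frac{q^{t}-1}{q-1}=q^{2m}-q^{t}=n.
\]
Hence the code meets the Griesmer bound with equality and is distance-optimal.
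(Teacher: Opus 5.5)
Your proof is correct and follows the route the paper implicitly intends; the paper states this theorem without proof. You use the complement decomposition of $D$ from Theorem~\ref{thm:1a} (full-space sum minus the $\Delta_A\times\Delta_B$ sum, both evaluated by Lemma~\ref{lem:counting}), then compute $g_q(2m,d)$ directly as in Theorem~\ref{thm:subfield1}. Two small clean-ups: delete the abandoned display containing $t\,q^{t-1}$, which is incorrect and unnecessary. Also note that the two-weight claim and the stated value of $d$ need $|A|+|B|\ge 1$, since for $A=B=\emptyset$ the second weight class is empty and the code is the one-weight code with $d=(q-1)q^{2m-1}$.
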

\begin{example}
    This example illustrates Theorem \ref{thm:subfield2}. Let $q=2, m=3, A=\{1\}\text{ and } B=\{3\}$. Then $\mathcal{C}_{D}^{(2)}$ is a two-weight binary linear code with parameters $[60,6,30]$ and Hamming weight enumerator $x^{60}+48x^{30}y^{30}+15 x^{28}y^{32}$. Furthermore, the code is distance-optimal according to the database \cite{Grassl:codetables}.
\end{example}
\begin{theorem}\label{thm:subfield3}
    Let $D$ be as in Theorem \ref{thm:2}. Then the subfield code $\mathcal{C}_{D}^{(q)}$ is an at most five-weight linear code over $\mathbb{F}_{q}$ with parameters $[q^{|A|} (q^{m}-q^{|B|}-q^{|C|}+2q^{|B \cap C|}), m+|A|, (q-1)(q^{m+|A|-1}-q^{|A|+|B|-1}-q^{|A|+|C|-1})]$. Its Hamming weight distribution is given by
        \begin{table}[H]
\centering
\begin{tabular}{l|l}
\hline
Weight & Frequency \\
\hline
$0$ & $1$ \\
\hline
$(q-1)q^{m+|A|-1}$ & $q^{m-|B\cup C|}-1$ \\
\hline
$(q-1)q^{|A|-1}(q^{m}-q^{|C|})$ & $q^{m-|B|}-q^{m-|B\cup C|}$ \\
\hline
$(q-1)q^{|A|-1}(q^{m}-q^{|B|})$ & $q^{m-|C|}-q^{m-|B\cup C|}$ \\
\hline
$(q-1)q^{|A|-1}(q^{m}-q^{|B|}-q^{|C|})$ & $q^{m-|B\cap C|}-q^{m-|B|}-q^{m-|C|}+q^{m-|B\cup C|}$ \\
\hline
$(q-1)q^{|A|-1}(q^{m}-q^{|B|}-q^{|C|}+2q^{|B\cap C|})$ & $q^{m+|A|}-q^{m-|B\cap C|}$\\
\hline
\end{tabular}

\label{tab:6}
\end{table}
\end{theorem}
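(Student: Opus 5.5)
The plan is to compute the Hamming weights of $\mathcal{C}_{D}^{(q)}$ from the weight formula derived just before Theorem~\ref{thm:subfield1}, with $D_1=\Delta_A$ and $D_2=(\Delta_{B,C}\setminus\Delta_{B\cap C})^c$, reusing the character sums from the proof of Theorem~\ref{thm:2}. Write $S_1(\bm{a}+\bm{b})=\sum_{\bm{d}_1\in\Delta_A}\zeta_p^{\langle v,\langle \bm{a}+\bm{b},\bm{d}_1\rangle_q\rangle_p}$ and $S_2(\bm{a})=\sum_{\bm{d}_2\in D_2}\zeta_p^{\langle v,\langle \bm{a},\bm{d}_2\rangle_q\rangle_p}$. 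By Lemma~\ref{lem:counting}, for every $v\neq 0$, $S_1$ equals $q^{|A|}$ if $\bm{a}+\bm{b}\in\Delta_A^{\perp}$ and $0$ otherwise. The value of $S_2$ is read off from \eqref{eqn:sum_Delta_BC}, whose last case should be read as $\bm{a}\notin\Delta_{B\cap C}^{\perp}$. Both sums are independent of $v\neq0$, so
\[
\wt_H\bigl(c_D^{(q)}(\bm{a},\bm{b})\bigr)=\frac{q-1}{q}\bigl(n-S_1S_2\bigr),\qquad n=q^{|A|}(q^m-q^{|B|}-q^{|C|}+2q^{|B\cap C|}).
\]

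First I split into cases. If $\bm{a}+\bm{b}\notin\Delta_A^{\perp}$, or if $\bm{a}\notin\Delta_{B\cap C}^{\perp}$, then $S_1S_2=0$ and the weight is $(q-1)n/q$, which is the last weight in the table. Otherwise $\bm{a}+\bm{b}\in\Delta_A^{\perp}$, and the remaining cases follow \eqref{eqn:sum_Delta_BC}:
\begin{itemize}
\item[(i)] $\bm{a}=\bm{0}$ gives $n-S_1S_2=0$, so the weight is $0$.
\item[(ii)] $\bm{a}\neq\bm{0}$ with $\bm{a}\in\Delta_B^{\perp}\cap\Delta_C^{\perp}$ gives $n-S_1S_2=q^{m+|A|}$.
\item[(iii)] $\bm{a}\in\Delta_B^{\perp}\setminus\Delta_C^{\perp}$ gives $q^{|A|}(q^m-q^{|C|})$.
\item[(iv)] $\bm{a}\in\Delta_C^{\perp}\setminus\Delta_B^{\perp}$ gives $q^{|A|}(q^m-q^{|B|})$.
\item[(v)] $\bm{a}\in\Delta_{B\cap C}^{\perp}\setminus(\Delta_B^{\perp}\cup\Delta_C^{\perp})$ gives $q^{|A|}(q^m-q^{|B|}-q^{|C|})$.
\end{itemize}
Multiplying by $(q-1)/q$ yields the five nonzero weights in the table. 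So the code has at most five weights.

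Next I count. For each $\bm{a}$, the condition $\bm{a}+\bm{b}\in\Delta_A^{\perp}$ holds for exactly $q^{m-|A|}$ vectors $\bm{b}$. Hence the kernel of $c_D^{(q)}$ is $\{\bm{0}\}\times\Delta_A^{\perp}$, which has size $q^{m-|A|}$. This gives dimension $2m-(m-|A|)=m+|A|$, and every frequency must be divided by $q^{m-|A|}$. After that division, each frequency is the number of admissible $\bm{a}$. Inclusion–exclusion, using $|\Delta_X^{\perp}\cap\Delta_Y^{\perp}|=q^{m-|X\cup Y|}$, gives:
\begin{itemize}
\item[(ii)] $q^{m-|B\cup C|}-1$;
\item[(iii)] $q^{m-|B|}-q^{m-|B\cup C|}$;
\item[(iv)] $q^{m-|C|}-q^{m-|B\cup C|}$;
\item[(v)] $q^{m-|B\cap C|}-q^{m-|B|}-q^{m-|C|}+q^{m-|B\cup C|}$.
\end{itemize}
Cases (i)–(v) together with the zero codeword exhaust $\Delta_{B\cap C}^{\perp}$. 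Therefore the last frequency is $q^{m+|A|}-q^{m-|B\cap C|}$.

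Finally, the minimum distance is the smallest of these weights, namely $(q-1)q^{|A|-1}(q^m-q^{|B|}-q^{|C|})$, whenever its frequency is positive. This matches the stated $d$. I will note explicitly that this requires the frequency in case (v) to be positive, just as $f_7>0$ is assumed in Theorem~\ref{thm:optimal3}.

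The main obstacle is not conceptual but bookkeeping. The case conditions in \eqref{eqn:sum_Delta_BC} must be corrected to a disjoint partition, and one must check that, unlike in Theorem~\ref{thm:2}, the position of $\bm{b}$ enters only through $\bm{a}+\bm{b}\in\Delta_A^{\perp}$. That is why $A$ drops out of all frequencies and there are only five weights.
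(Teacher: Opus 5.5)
Your proposal follows the route the paper intends, and your five weights and their frequencies are correct. The paper prints no separate proof for this theorem, but its implied argument is the template of the proof of Theorem~\ref{thm:subfield1}: the character-sum weight formula, Lemma~\ref{lem:counting} applied to $\Delta_A$ and to the four-term decomposition of $D_2$, a case split on $\bm{a}$, and division by the number of preimages of $\bm{0}$.

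There is one genuine gap: the step ``hence the kernel of $c_D^{(q)}$ is $\{\bm{0}\}\times\Delta_A^{\perp}$'' does not follow as written. Case (i) shows that this set lies in the kernel. For equality you also need that no nonempty class among (ii)--(v) has weight $0$, and that can fail:
\begin{itemize}
\item Class (iv) is nonempty iff $B\not\subseteq C$. Its weight $(q-1)q^{|A|-1}(q^m-q^{|B|})$ vanishes when $B=[m]\neq C$.
\item Class (v) is nonempty iff $B$ and $C$ are incomparable. Its weight vanishes when $q=2$ and $|B|=|C|=m-1$.
\end{itemize}
In these cases the statement itself is false, so no proof can cover them. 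For $B=[m]$ one gets $D_2=\Delta_C$ and dimension $|A|+|C|<m+|A|$. For $q=2$, $m=2$, $A=\emptyset$, $B=\{1\}$, $C=\{2\}$ one gets $D_2=\{(0,0),(1,1)\}$, and the code is the one-dimensional $\{(0,0),(0,1)\}$.

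The paper makes the same silent assumption; compare $w_5$ and $w_7$ in Theorem~\ref{thm:2}, which vanish in exactly these situations. The fix is to add a hypothesis such as $q^{|B|}+q^{|C|}<q^{m}$. Together with $|C|\le|B|$, this makes every listed nonzero weight strictly positive. Keep it alongside the positivity of the case-(v) frequency, which you already correctly flagged as needed for the minimum distance.
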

\begin{remark}
    In the above theorem, the code $\mathcal{C}_{D}^{(q)}$ is distance-optimal if $2q^{|A|+|B\cap C|}<|A|+|C|+\lfloor\frac{|C|}{|B|}\rfloor\left(\lfloor\frac{2}{q}\rfloor-1\right)$. This follows directly from \cite[Theorem 10]{Hu2026several}.
\end{remark}
\begin{theorem}
    Let $D$ be as in Theorem \ref{thm:3}. Then the subfield code $\mathcal{C}_{D}^{(q)}$ is an at most five-weight linear code over $\mathbb{F}_{q}$ with parameters $[q^{|A|} (q^{m}-q^{|B|}-q^{|C|}+q^{|B \cap C|}), m+|A|, (q-1)(q^{m+|A|-1}-q^{|A|+|B|-1}-q^{|A|+|C|-1})]$. Its Hamming weight distribution is given by
        \begin{table}[H]
\centering
\begin{tabular}{l|l}
\hline
Weight & Frequency \\
\hline
$0$ & $1$ \\
\hline
$(q-1)q^{m+|A|-1}$ & $q^{m-|B\cup C|}-1$ \\
\hline
$(q-1)q^{|A|-1}(q^{m}-q^{|C|})$ & $q^{m-|B|}-q^{m-|B\cup C|}$ \\
\hline
$(q-1)q^{|A|-1}(q^{m}-q^{|B|})$ & $q^{m-|C|}-q^{m-|B\cup C|}$ \\
\hline
$(q-1)q^{|A|-1}(q^{m}-q^{|B|}-q^{|C|})$ & $q^{m-|B\cap C|}-q^{m-|B|}-q^{m-|C|}+q^{m-|B\cup C|}$ \\
\hline
$(q-1)q^{|A|-1}(q^{m}-q^{|B|}-q^{|C|}+q^{|B\cap C|})$ & $q^{m+|A|}-q^{m-|B\cap C|}$\\
\hline
\end{tabular}

\label{tab:7}
\end{table}
Moreover, the code $\mathcal{C}_{D}^{(q)}$ is distance-optimal if $q^{|A|+|B\cap C|}<|A|+|C|+\lfloor\frac{|C|}{|B|}\rfloor\left(\lfloor\frac{2}{q}\rfloor-1\right)$.
\end{theorem}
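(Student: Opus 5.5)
We follow the scheme used for Theorem~\ref{thm:subfield1}, with $D_{1}=\Delta_{A}$ and $D_{2}=(\Delta_{B,C})^{c}$, so that $n=q^{|A|}(q^{m}-q^{|B|}-q^{|C|}+q^{|B\cap C|})$. By the weight formula for $\wt_{H}(c_{D}^{(q)}(\bm{a},\bm{b}))$, the task is to evaluate two character sums for $v\neq 0$. By Lemma~\ref{lem:counting}, the sum over $\bm{d}_{1}\in\Delta_{A}$ of $\zeta_{p}^{\langle v,\langle\bm{a}+\bm{b},\bm{d}_{1}\rangle_q\rangle_p}$ equals $q^{|A|}$ if $\bm{a}+\bm{b}\in\Delta_{A}^{\perp}$ and $0$ otherwise. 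For the sum over $D_{2}$ we use the inclusion--exclusion identity $\mathbb{F}_{q}^{m}\setminus(\Delta_{B}\cup\Delta_{C})$, which gives $\mathbb{1}_{\bm{a}=0}q^{m}-\mathbb{1}_{\bm{a}\in\Delta_B^\perp}q^{|B|}-\mathbb{1}_{\bm{a}\in\Delta_C^\perp}q^{|C|}+\mathbb{1}_{\bm{a}\in\Delta_{B\cap C}^\perp}q^{|B\cap C|}$. This is the analogue of \eqref{eqn:sum_Delta_BC}, with the coefficient $2$ replaced by $1$.

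If $\bm{a}+\bm{b}\notin\Delta_{A}^{\perp}$, the product vanishes and the weight is $(q-1)n/q$, which is the last weight in the table. If $\bm{a}+\bm{b}\in\Delta_{A}^{\perp}$, the weight is $(q-1)q^{|A|-1}\bigl(q^{m}-q^{|B|}-q^{|C|}+q^{|B\cap C|}-S\bigr)$, where $S$ is the $D_2$-sum above. We then split into the five cases used in Theorem~\ref{thm:2}:
\begin{enumerate}
\item $\bm{a}=\bm{0}$ gives weight $0$ or, when $\bm{b}\notin\Delta_A^\perp$, the last weight;
\item $\bm{a}\neq\bm{0}$ with $\bm{a}\in\Delta_B^\perp\cap\Delta_C^\perp$ gives $(q-1)q^{m+|A|-1}$;
\item $\bm{a}\in\Delta_B^\perp\setminus\Delta_C^\perp$ gives $(q-1)q^{|A|-1}(q^m-q^{|C|})$;
\item the symmetric case $\bm{a}\in\Delta_C^\perp\setminus\Delta_B^\perp$ gives $(q-1)q^{|A|-1}(q^m-q^{|B|})$;
\item $\bm{a}\in\Delta_{B\cap C}^\perp$ but outside both $\Delta_B^\perp$ and $\Delta_C^\perp$ gives $(q-1)q^{|A|-1}(q^m-q^{|B|}-q^{|C|})$.
\end{enumerate}
If $\bm{a}\notin\Delta_{B\cap C}^\perp$, the weight is again the last one.

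For the frequencies, note that once $\bm{a}$ is fixed, the number of $\bm{b}$ with $\bm{a}+\bm{b}\in\Delta_A^\perp$ is always $q^{m-|A|}$, independent of the class of $\bm{a}$. The counts of $\bm{a}$ in each class follow by inclusion--exclusion from $|\Delta_X^\perp|=q^{m-|X|}$ and $\Delta_B^\perp\cap\Delta_C^\perp=\Delta_{B\cup C}^\perp$, giving $q^{m-|B\cup C|}-1$, $q^{m-|B|}-q^{m-|B\cup C|}$, and so on. The zero-weight kernel consists of $\bm{a}=\bm{0}$ and $\bm{b}\in\Delta_A^\perp$, so it has size $q^{m-|A|}$. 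This gives dimension $2m-(m-|A|)=m+|A|$, and we divide every count by $q^{m-|A|}$. The frequency of the last weight is obtained as the complement of the other counts in $q^{m+|A|}$. The minimum distance is the smallest of the listed weights, namely $(q-1)q^{|A|-1}(q^m-q^{|B|}-q^{|C|})$, which is assumed to occur, i.e.\ that frequency is positive.

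For optimality, we set $k=m+|A|$ and $d=(q-1)(q^{k-1}-q^{|A|+|B|-1}-q^{|A|+|C|-1})$. Using the technique in the proof of Lemma~\ref{lem: optimality} without the factor $2$, we compute
$\sum_{i=0}^{k-1}\lceil (d+1)/q^{i}\rceil$. This equals $q^{k}-q^{|A|+|B|}-q^{|A|+|C|}+|A|+|C|+\epsilon$, where the correction $\epsilon$ depends on whether $|B|=|C|$ and on $q$. This is the role of the term $\lfloor |C|/|B|\rfloor(\lfloor 2/q\rfloor-1)$. Distance-optimality then holds as soon as this sum exceeds $n=q^{k}-q^{|A|+|B|}-q^{|A|+|C|}+q^{|A|+|B\cap C|}$, since then no $[n,k,d+1]$ code exists by the Griesmer bound. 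That comparison is exactly the stated condition. Alternatively, as in the preceding remark, one can quote \cite[Theorem 10]{Hu2026several}.

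The main obstacle is the careful tracking of the ceiling terms between the indices $|A|+|C|$ and $|A|+|B|$, which is needed to pin down $\epsilon$ exactly. This step should be checked separately for $q=2$ and for $|B|=|C|$.
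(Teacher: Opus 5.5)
Your derivation of the weight distribution is the argument the paper intends. The paper gives no separate proof of this theorem; its implicit proof is the character-sum case analysis of Theorem~\ref{thm:subfield1}, run over the classes of $\bm a$ from Theorem~\ref{thm:2} with the coefficient $2$ on $q^{|B\cap C|}$ replaced by $1$. Your weights, class sizes, kernel and dimension count all check out.

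You are right to assume that the class $\bm a\in\Delta_{B\cap C}^{\perp}\setminus(\Delta_B^{\perp}\cup\Delta_C^{\perp})$ is nonempty. Its size factors as $q^{m-|B\cup C|}(q^{|B\setminus C|}-1)(q^{|C\setminus B|}-1)$, so nonemptiness means $C\not\subseteq B$. You also need $q^m>q^{|B|}+q^{|C|}$; this fails for $q=2$, $|B|=|C|=m-1$, $B\neq C$, where the dimension drops.

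Where you differ from the paper is optimality. The paper simply defers to \cite[Theorem 10]{Hu2026several}. Your direct Griesmer computation is more self-contained, but the value you assert for the sum is off by one. Put $t=m+|A|$, $t_1=|A|+|B|$, $t_2=|A|+|C|$.
\begin{enumerate}
\item For $i<t_2$, each term equals $d/q^i+1$.
\item For $t_2\le i<t_1$, the correction $\lceil (1-(q-1)q^{t_2-1})/q^i\rceil$ is $0$, since the argument lies in $(-1,0]$.
\item For $i\ge t_1$, the only nonzero correction is $-1$ at $i=t_1$, and only when $t_1=t_2$ and $q\ge 3$. For $q=2$ the added $2^{-t_1}$ lifts $-1$ to $2^{-t_1}-1$, whose ceiling is $0$.
\end{enumerate}
The integral parts sum to $(q^t-1)-(q^{t_1}-1)-(q^{t_2}-1)$, which contributes an extra $+1$. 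Hence
\[
\sum_{i=0}^{t-1}\left\lceil\frac{d+1}{q^i}\right\rceil=q^{m+|A|}-q^{|A|+|B|}-q^{|A|+|C|}+|A|+|C|+1+\epsilon,\qquad \epsilon=\left\lfloor\frac{|C|}{|B|}\right\rfloor\left(\left\lfloor\frac{2}{q}\right\rfloor-1\right).
\]

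The Griesmer comparison with $n$ therefore yields optimality whenever $q^{|A|+|B\cap C|}\le |A|+|C|+\epsilon$. This is implied by, but strictly weaker than, the stated strict inequality. For example, $q=3$, $m=4$, $A=\emptyset$, $B=\{1,2\}$, $C=\{3\}$ gives a $[70,4,46]$ code with $g_3(4,47)=71>70$, so it is optimal, yet the stated condition reads $1<1$ and fails.

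Your slip errs on the conservative side, so the stated sufficient condition still follows. However, the displayed identity should be corrected, and the claim that the comparison is \emph{exactly} the stated condition should be dropped.
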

\subsection{Minimality of Subfield Codes}
In this subsection, we give the criterion for the subfield codes constructed in this article to be minimal.
\begin{theorem}\label{minimality2}
Let $\mathcal{C}_{D}$ be as defined in \eqref{eqn:C_D code} and $\mathcal{C}_{D}^{(q)}$ be the subfield code. Then
\begin{enumerate}
    \item Let \(D\) be as in Theorem \ref{thm:1} and $|A|\leq |B|$. If $|A|>1,$ then $\mathcal{C}_{D}^{(q)}$ is minimal.
    \item  Let $D$ be as in Theorem \ref{thm:1a}. If $|A|+|B|<2m-1$, then $\mathcal{C}_{D}^{(q)}$ is minimal.
    \item  Let $D$ be as in Theorem \ref{thm:2}. If $q^{|B|}+q^{|C|} < q^{m-1}$, then $\mathcal{C}_{D}^{(q)}$ is minimal.
    \item Let $D$ be as in Theorem \ref{thm:3}. If $q^{|B|}+q^{|C|} < q^{m-1}$, then $\mathcal{C}_{D}^{(q)}$ is minimal.

\end{enumerate}
\end{theorem}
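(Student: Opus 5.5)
The approach mirrors the proof of Theorem~\ref{minimality}. In each part we read off $\wt_{\min}$ and $\wt_{\max}$ of $\mathcal{C}_{D}^{(q)}$ from the Hamming weight distributions already computed: Theorem~\ref{thm:subfield1}, Theorem~\ref{thm:subfield2}, Theorem~\ref{thm:subfield3}, and the analogous theorem for $D$ as in Theorem~\ref{thm:3}. We then apply the Ashikhmin--Barg criterion, Lemma~\ref{minimal_lemma}. After cancelling the common factor $(q-1)$, each condition $\wt_{\min}/\wt_{\max}>(q-1)/q$ reduces to an elementary inequality between powers of $q$.

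\textbf{Part 1.} With $|A|\le |B|$, Theorem~\ref{thm:subfield1} gives $\wt_{\min}=(q-1)(q^{|A|+|B|-1}-q^{|A|-1}-q^{|B|-1})$. The maximum is $\wt_{\max}=(q-1)q^{|A|-1}(q^{|B|}-1)$: indeed $w_1-w_2=(q-1)(q^{|B|-1}-q^{|A|-1})\ge 0$, and $w_3$ is the smallest weight. The condition $q\,\wt_{\min}>(q-1)\wt_{\max}$ rearranges to
\[
q^{|A|+|B|-1}>q^{|B|}+q^{|A|}-q^{|A|-1}.
\]
Set $a=|A|$ and $b=|B|$, with $2\le a\le b$. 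Then $q^{a+b-1}\ge q^{b+1}\ge 2q^{b}\ge q^{b}+q^{a}>q^{b}+q^{a}-q^{a-1}$, so the inequality holds whenever $|A|>1$.

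\textbf{Part 2.} Theorem~\ref{thm:subfield2} gives $\wt_{\max}=(q-1)q^{2m-1}$ and $\wt_{\min}=(q-1)(q^{2m-1}-q^{|A|+|B|-1})$. The criterion then becomes $q^{|A|+|B|}<q^{2m-1}$, which is exactly $|A|+|B|<2m-1$.

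\textbf{Parts 3 and 4.} This is the main place where care is needed. The weight $(q-1)q^{|A|-1}(q^m-q^{|B|}-q^{|C|}+2q^{|B\cap C|})$ (respectively $+q^{|B\cap C|}$ for Part 4) is not obviously below $(q-1)q^{m+|A|-1}$, so we must identify the maximum correctly. Since $|B\cap C|\le |C|\le |B|$, we have $2q^{|B\cap C|}\le q^{|B|}+q^{|C|}$, so this weight is at most $(q-1)q^{m+|A|-1}$. The other weights are clearly at most $(q-1)q^{m+|A|-1}$ as well. Hence $\wt_{\max}\le (q-1)q^{m+|A|-1}$, and the smallest weight is $\wt_{\min}=(q-1)q^{|A|-1}(q^m-q^{|B|}-q^{|C|})$.

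It suffices to verify the criterion against this upper bound for $\wt_{\max}$: we need $q(q^{m}-q^{|B|}-q^{|C|})>(q-1)q^{m}$, which is $q^{|B|}+q^{|C|}<q^{m-1}$, exactly the hypothesis. One caveat remains. If some listed frequencies vanish, the true $\wt_{\min}$ can only be larger, so the inequality still holds. We also need $\wt_{\min}>0$, which is guaranteed by the hypothesis. Lemma~\ref{minimal_lemma} then yields minimality in both parts, and Part 4 is word-for-word the same as Part 3 with $2q^{|B\cap C|}$ replaced by $q^{|B\cap C|}$.
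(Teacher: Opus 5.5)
Your proposal is correct and follows the paper's proof: read $\wt_{\min}$ and $\wt_{\max}$ off the subfield-code weight distributions and apply Lemma~\ref{minimal_lemma}. In Parts 3--4 you use $(q-1)q^{m+|A|-1}$ only as an upper bound for $\wt_{\max}$, which is more careful than the paper. The paper asserts this weight is attained, but its frequency $q^{m-|B\cup C|}-1$ vanishes when $B\cup C=[m]$, e.g.\ for $q=2$, $m=6$, $B=\{1,2,3\}$, $C=\{4,5,6\}$.

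One small slip: in Part 1 the criterion actually rearranges to $q^{|A|+|B|-1}>q^{|B|}+q^{|A|-1}$, not $q^{|A|+|B|-1}>q^{|B|}+q^{|A|}-q^{|A|-1}$. Since $q^{|A|}-q^{|A|-1}\ge q^{|A|-1}$, the inequality you prove is the stronger one and still yields the result.
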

\begin{proof}
\begin{enumerate}
    \item By Theorem \ref{thm:subfield1}, $\wt_{\min}=(q-1)(q^{|A|+|B|-1}-q^{|A|-1}-q^{|B|-1})$ and $\wt_{\max}=(q-1)(q^{|A|+|B|-1}-q^{|A|-1}).$ Hence, by Lemma \ref{minimal_lemma}, 
    \[\frac{\wt_{\min}}{\wt_{\max}}>\frac{q-1}{q}\iff q^{|A|-1}+q^{|B|}<q^{|A|+|B|-1}\impliedby |A|>1.
    \]
    \item By Theorem \ref{thm:subfield2}, $\wt_{\min}=(q-1)(q^{2m-1}-q^{|A|+|B|-1})$ and $\wt_{\max}=(q-1)(q^{2m-1}).$ Hence, by Lemma \ref{minimal_lemma}, 
    \[\frac{\wt_{\min}}{\wt_{\max}}>\frac{q-1}{q}\iff q^{|A|+|B|} < q^{2m-1}\iff |A|+|B| < 2m-1.
    \]
    \item  By Theorem \ref{thm:subfield3}, $\wt_{\min}=(q-1)(q^{m+|A|-1}-q^{|A|+|B|-1}
    -q^{|A|+|C|-1})$ and $\wt_{\max}=(q-1)(q^{m+|A|-1}).$ Hence, by Lemma \ref{minimal_lemma}, 
    \[\frac{\wt_{\min}}{\wt_{\max}}>\frac{q-1}{q}\iff q^{|B|}+q^{|C|} < q^{m-1}.\]
    \item The proof of this part follows along the same lines as that of Part~3.
\end{enumerate} \qed
\end{proof}
\begin{example}
This example illustrates Theorem~\ref{minimality2} $(1)$. Let $ q=3, m = 4,A =\{1,2\}, B= \{2,4\}$. Then the two-weight code $ \mathcal{C}_D^{(q)}$ is a minimal ternary linear code with parameters $ [64, 4, 42]$ and Hamming weight enumerator $x^{64} +64x^{22}y^{42} + 16 x^{16}y^{48}$, as verified by Magma.
\end{example}

\section{Conclusion}\label{sec6}
In this work, we have investigated four classes of linear codes over the ring $\mathbb{F}_q+u\mathbb{F}_q$, where $u^2=0$, constructed using simplicial complexes with either one or two maximal elements, and explicitly determined their Lee weight distributions. Furthermore, with the help of a Gray map, we have obtained several families of optimal few-weight codes over $\mathbb{F}_q$ and established sufficient criteria ensuring their minimality. In addition, we have studied the subfield codes arising from our construction and derived mild sufficient conditions under which they are distance-optimal and minimal. Several examples have been presented to illustrate the theoretical results, and their parameters have been verified using MAGMA.

A natural avenue for future research is to seek other defining sets over $\mathbb{F}_{q}+u\mathbb{F}_{q}$, where $u^2=0$, or over other finite alphabets that lead to infinite families of distance-optimal codes.

\section*{Acknowledgements}
The first author expresses gratitude to MHRD, India, for financial support in the form of a Senior Research Fellowship at the Indian Institute of Technology Delhi. The second author acknowledges the Prime Minister's Research Fellowship
(PMRF ID: 1403187) for financial support. The authors also express their gratitude to Dr Anuj Kumar Bhagat for helpful discussions.
\section*{Declarations}
\subsection*{Conflict of Interest}
All authors declare that they have no conflict of interest.

\bibliographystyle{abbrv}
\bibliography{ref}
\end{document}